\def\anonymous{0}
\newcommand{\tr}{\operatorname{Tr}}
\DeclareMathOperator*{\argmin}{arg\,min}
\newtheorem{fact}[theorem]{Fact}
\newcolumntype{C}{>{\centering\arraybackslash}p{0.27\textwidth}}
\newcommand{\dims}{d}
\newcommand{\povmset}{{\mathfrak{M}}}
\newcommand{\nqubits}{{N}}
\newcommand{\pauliX}{{\sigma_X}}
\newcommand{\pauliY}{{\sigma_Y}}
\newcommand{\pauliZ}{{\sigma_Z}}
\newcommand{\pauliObsSet}{{\mathcal{P}}}
\newcommand{\opnorm}[1]{{\left\|#1\right\|}_{\text{op}}}
\newcommand{\tracenorm}[1]{{\left\|#1\right\|}_{1}}
\newcommand{\hsnorm}[1]{{\left\|#1\right\|}_{\text{HS}}}
\newcommand{\barDelta}{{\overline{\Delta}}}
\newcommand{\ptb}{{z}}
\newcommand{\ptbDistr}{{\mathcal{D}_{\ell,\cd}}}
\newcommand{\cd}{{c}}
\newcommand{\cop}{\kappa}
\newcommand{\out}{{x}}
\def\multiset#1#2{\ensuremath{\left(\kern-.3em\left(\genfrac{}{}{0pt}{}{#1}{#2}\right)\kern-.3em\right)}}
\newcommand{\ham}[2]{\operatorname{d}_{\rm Ham}(#1,#2)}
\newcommand{\qmm}{{\rho_{\text{mm}}}}
\newcommand{\HH}{\mathbb{H}}
\newcommand{\Herm}[1]{{\HH_{#1}}}
\newcommand{\qbit}[1]{|{#1}\rangle}
\newcommand{\qadjoint}[1]{\langle{#1}|}
\newcommand{\qproj}[1]{\qbit{#1}\qadjoint{#1}}
\newcommand{\qoutprod}[2]{\qbit{#1}\qadjoint{#2}}
\newcommand{\qdotprod}[2]{\langle#1|#2\rangle}
\newcommand{\hdotprod}[2]{\left\langle#1,#2\right\rangle}
\newcommand{\matdotprod}[3]{\langle#1|#2|#3\rangle}
\newcommand{\Var}{\text{Var}}
\newcommand{\eye}{\mathbb{I}}
\newcommand{\img}{\text{i}}
\newcommand{\rk}{{r}}
\newcommand{\VecOp}{\text{vec}}
\newcommand{\vvec}[1]{|#1\rangle\rangle}
\newcommand{\vadj}[1]{\langle\langle#1|}
\newcommand{\vvdotprod}[2]{\left\langle\left\langle#1|#2\right\rangle\right\rangle}
\newcommand{\outset}{{\mathcal{X}}}
\newcommand{\Luders}{\mathcal{H}}
\newcommand{\Choi}{{\mathcal{C}}}
\newcommand{\hbasis}{{\mathcal{V}}}
\newcommand{\qest}{{\hat{\rho}}}
\newcommand{\POVM}{\mathcal{M}}
\newcommand{\ObsPOVM}{\mathcal{N}}
\newcommand{\zest}{\hat{\ptb}}
\title{Pauli measurements are not optimal for single-copy tomography}
    \author{Anonymous authors}
    \author{
    \begin{tabular}[t]{C@{\extracolsep{6.5em}} C}
   Jayadev Acharya &Abhilash Dharmavarapu \\
 Cornell University & Cornell University\\ 
\small \texttt{acharya@cornell.edu} &\small \texttt{ad2255@cornell.edu} 
\end{tabular}
\vspace{2ex}\\
\begin{tabular}[t]{C@{\extracolsep{6.5em}} C}
    Yuhan Liu & Nengkun Yu \\
Rice University & Stony Brook University\\ 
\small \texttt{yuhan-liu@rice.edu} &\small \texttt{nengkun.yu@cs.stonybrook.edu} 
\end{tabular}
}
\begin{document}
\maketitle
\begin{abstract}
Quantum state tomography is a fundamental problem in quantum computing.
Given $n$ copies of an unknown $N$-qubit state $\rho\in\mathbb{C}^{d\times d},d=2^N$, the goal is to learn the state up to an accuracy $\varepsilon$ in trace distance, say with at least constant probability 0.99. We are interested in the copy complexity, the minimum number of copies of $\rho$ needed to fulfill the task.

As current quantum devices are physically limited, Pauli measurements have attracted significant attention due to their ease
of implementation. However, a large gap exists in the 
literature for tomography with Pauli measurements.
The best-known upper bound is $O(\frac{N\cdot 12^N}{\varepsilon^2})$, 
and no non-trivial lower bound is known besides the general single-copy lower bound of
$\Omega(\frac{8^N}{\varepsilon^2})$, achieved by hard-to-implement structured POVMs such as MUB, SIC-POVM, and uniform POVM.

We have made significant progress on this long-standing problem. We first prove a stronger upper bound of $O(\frac{10^{N}}{\varepsilon^2})$. To complement it, we also obtain a lower bound of $\Omega(\frac{9.118^N}{\varepsilon^2})$, which holds even with adaptivity. To our knowledge, this demonstrates the first known separation between Pauli measurements and structured POVMs. 

The new lower bound is a consequence of a novel framework for adaptive quantum state tomography with measurement constraints. 
The main advantage is that we can use measurement-dependent hard instances to prove tight lower bounds for Pauli measurements, 
while prior lower-bound techniques for tomography only work with measurement-independent constructions. 
Moreover, we connect the copy complexity lower bound of tomography to the eigenvalues of the measurement information channel, which governs the measurement’s capacity to distinguish between states. To demonstrate the generality of the new framework, we obtain tight bounds for adaptive quantum state tomography with $k$-outcome measurements, where we recover existing results and establish new ones.

\end{abstract}

\section{Introduction}

\textit{Quantum state tomography}~\cite{BBMR04,Keyl06,GJK08} is the problem of learning an unknown quantum state. It is a fundamental problem with important applications in quantum computing as we often need to learn about the state of a quantum device. 
Formally, we are given $\ns$ copies of an $\nqubits$-qubit quantum
state with density matrix $\rho\in \C^{\dims\times\dims}$ where $\dims=2^{\nqubits}$. We need to perform quantum
measurements on $\rho^{\otimes \ns}$ and obtain an estimate $\hat{\rho}$ close to $\rho$ under some error metric. In this work, we focus on the trace distance $\tracenorm{\qest-\rho}$, and we want to ensure that $\tracenorm{\qest-\rho}<\eps$ with probability at least 0.99. We want to characterize the copy/sample complexity, the minimum number of copies of $\rho$ needed for the task.

There are different types of measurements we can apply. The most general is  \emph{entangled} or \emph{joint} measurement, where one can arbitrarily apply any measurement to $\rho^{\otimes \ns}$.
In \cite{HaahHJWY17,ODonnellW16,ODonnellW17}, the authors showed that the worst-case sample complexity is $n =\bigTheta{\frac{4^\nqubits}{\eps^2}}$. 
While such measurement is powerful, it is difficult to implement on near-term quantum computers as it requires a large and coherent quantum memory.  
This leads to a line of work studying \emph{single-copy} measurements~\cite{Flammia_2012,KRT14,HaahHJWY17,chen2023does}, where we apply a separate measurement for each copy of $\rho$. 
The measurements can be chosen \emph{non-adaptively}, where all $\ns$ measurements are decided simultaneously, or \emph{adaptively}, where the copies are measured sequentially, and each measurement can be chosen based on previous outcomes. 
With non-adaptive measurements, \cite{KRT14} showed that 
tomography is possible using $\ns = O(8^\nqubits / \eps^2)$ copies, and later was shown to be optimal by \cite{HaahHJWY17}. \cite{chen2023does} further showed that adaptivity does not help. 

However, since each copy is an $\nqubits$-qubit system, single-copy measurement potentially requires entanglement over $\nqubits$ qubits. 
In particular, the optimal single-copy measurements \cite{KRT14,guctua2020fast} are highly structured POVMs (including SIC-POVM, maximal MUB, uniform POVM) that are difficult to implement in practice especially when $N$ is large.
Thus, it is important to study single-copy tomography under measurement constraints. 
As a canonical example, \emph{Pauli measurements} have attracted significant attention due to their ease of implementation. 
It only involves measuring each qubit in the eigenbasis of one of the three $2\times 2$ Pauli operators $\pauliX,\pauliY,\pauliZ$, which is arguably one of the most experiment-friendly measurements. 
Although Pauli measurement is fundamental in quantum physics, there is a large gap between the upper and lower bounds of its copy complexity. 
The best-known upper bound is $\ns=\bigO{\nqubits\frac{12^{\nqubits}}{\eps^2}}$ \cite{guctua2020fast}, and no better lower bound is known besides the single-copy lower bound of $\bigOmega{\frac
{8^{\nqubits}}{\eps^2}}$. On the other hand, empirical evidence \cite{stricker2022pauliSIC} suggests that SIC-POVM is better than Pauli measurements.


\subsection{Results}
Our work makes notable progress in closing the long-standing gap for Pauli tomography.
We prove the first non-trivial lower bound showing that Pauli measurements cannot achieve the optimal sample complexity for single-copy tomography.  
\begin{theorem}
Using Pauli measurements, learning an unknown $\nqubits$-qubit state $\rho$ up to trace distance $\eps$  with probability at least 0.99 requires at least
\[
\ns = \bigOmega{\frac{9.118^{\nqubits}}{\eps^2}}
\]
copies of $\rho$.
This lower bound holds even when the measurements are chosen adaptively.
\label{thm:pauli-lower}
\end{theorem}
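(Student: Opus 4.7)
The approach is to apply the new framework of the paper, which reduces a single-copy tomography lower bound to a spectral calculation on the measurement family and, crucially, permits the hard instance to be tuned to the measurements. I would start with a Paninski-style packing around the maximally mixed state,
$$\rho_u = \qmm + \frac{\eps}{\dims}\,\Delta_u,\qquad \Delta_u = \sum_{\alpha\in \mathcal{A}} u_\alpha\, \sigma_\alpha,$$
with $\mathcal{A}$ a set of nontrivial $\nqubits$-qubit Paulis and $u$ drawn from a weighted sign prior to be chosen \emph{after} analyzing the measurement channel. Positivity of $\rho_u$ and pairwise $\Omega(\eps)$ trace separation are standard once $|\mathcal{A}|$ is tuned correctly. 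By Fano's inequality, the theorem reduces to showing $I(u;\text{transcript}) = o(\log|\mathcal{A}|)$ whenever $n$ lies below the claimed bound.

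For a Pauli measurement $P = P_1\otimes\cdots\otimes P_{\nqubits}$ with POVM $\{\Pi_b\}_{b}$, the factorization $\tr(\Pi_b\,\sigma_\alpha)=\prod_i\tr(\Pi_{b_i}^{P_i}\sigma_{\alpha_i})$ vanishes unless $\alpha_i\in\{I,P_i\}$ for every $i$, so a direct $\chi^2$ calculation gives a single-copy Fisher information proportional to the mass of $\Delta_u$ on the $2^{\nqubits}$-dimensional subspace of Paulis commuting with $P$. Averaged over $u$, this quantity is exactly the relevant eigenvalue of the measurement-information channel highlighted in the abstract. For an adaptive protocol the chain rule $I(u;\text{transcript})\leq \sum_t \EE[I_t]$ then bounds the total information by $n$ times the \emph{supremum} of this eigenvalue over all Pauli choices $P$ and all history-conditioned posteriors on $u$. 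The key point is that the prior on $u$ can be chosen after the Pauli family is fixed, so one may tailor the packing to flatten the measurement-information channel simultaneously across all $3^{\nqubits}$ Pauli bases --- this is exactly the measurement-dependent construction that the prior lower-bound techniques could not exploit.

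Combining these estimates gives $n \geq \Omega(\log|\mathcal{A}|)/(\eps^2 \cdot \Lambda^\ast)$, where $\Lambda^\ast$ is the min--max value of the per-measurement eigenvalue over priors on $u$ and Pauli bases $P$. The central technical step --- and the main obstacle --- is carrying out this min--max and matching it to the explicit constant $9.118^\nqubits$. Since Pauli measurements factor over qubits, I would look for an optimizer with product structure on $\{I,X,Y,Z\}^{\nqubits}$, reducing to a small-dimensional convex program whose optimal value tensorizes, yielding an irrational per-qubit factor whose $\nqubits$-th power is $9.118^{\nqubits}$. The remaining work is (i) solving the one-qubit min--max exactly to recover $9.118$, (ii) verifying that the optimal weighting keeps $\Delta_u$ inside the positivity envelope and maintains $\Omega(\eps)$ trace separation, and (iii) rigorously executing the adaptive Fano argument with a measurement-dependent prior, ensuring that an adaptive protocol cannot exploit the fact that the hard instance itself was tailored to the measurement family.
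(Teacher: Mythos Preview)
Your high-level plan --- tailor the hard instance to Pauli measurements, bound per-round mutual information through the measurement-information channel, and sum over rounds via the chain rule --- matches the paper. But the step you flag as ``central'' rests on a wrong structural hypothesis, and that is a genuine gap.

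\textbf{The constant does not tensorize.} You conjecture that the optimizer has product structure on $\{I,X,Y,Z\}^{\nqubits}$ and that $9.118$ is the $\nqubits$-th power of a one-qubit min--max value. It is not. In the paper the perturbation directions $V_1,\dots,V_\ell$ must be $\ell\ge \dims^2/2$ orthonormal traceless matrices (this size is forced by the random-matrix concentration that simultaneously guarantees positivity and $\Omega(\eps)$ trace separation). The optimal choice is the set of Pauli observables with weight at least $3\nqubits/4$ --- a global threshold condition, not a product set. For any single Pauli basis $P$, the number of such high-weight observables that ``match'' $P$ is $\sum_{m\le \nqubits/4}\binom{\nqubits}{m}\le 2^{\nqubits h(1/4)}$ by the standard entropy bound for binomial tails. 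Plugging into the framework gives $\ns=\Omega\bigl(2^{(4-h(1/4))\nqubits}/\eps^2\bigr)$, and $2^{4-h(1/4)}\approx 9.118$. So the constant arises from the binary entropy $h(1/4)$, not from any per-qubit optimization; a product prior on $\{I,X,Y,Z\}^{\nqubits}$ cannot reproduce it, because product sets of size $\ge 4^{\nqubits}/2$ necessarily include too many low-weight Paulis that every Pauli basis can see.

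\textbf{Two steps you call ``standard'' are not.} First, the paper uses Assouad's method rather than Fano, and the reduction hinges on a new trace-distance Hamming separation: for ``good'' $z$ one shows $\tracenorm{\sigma_z-\sigma_{z'}}\gtrsim (\eps/\dims^2)\,\ham{z}{z'}$ via the trace/operator-norm duality, using that $\opnorm{\Delta_z}=O(\eps/\dims)$ on the concentrated set. Trace distance has no coordinate-wise decomposition, so this is exactly the obstacle prior work (e.g.\ Cai et al.) could not get past, and you should not expect it to fall out of a routine packing argument. Second, positivity of $\rho_u$ with $\ell=\Theta(\dims^2)$ random $\pm1$ signs requires an operator-norm concentration for $\sum_i z_iV_i$; this is what fixes the scaling $\frac{c\eps}{\sqrt{\dims\ell}}$ and, in turn, the constraint $\ell\ge \dims^2/2$ that drives the weight threshold to $3\nqubits/4$.
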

Since highly structured POVM such as SIC-POVM and maximal MUB achieves the $O(8^{\nqubits}/\eps^2)$ sample complexity, we formally show a separation between Pauli measurements and these structured POVMs. This is consistent with experimental observations~\cite{stricker2022pauliSIC}.

We also design a new algorithm that improves the current upper bound,
\begin{theorem}
 {Quantum state tomography} can be solved by Pauli measurements  using 
 $$\ns = \bigO{\frac{10^N\log\frac{1}{\delta}}{\eps^2}}$$ copies of $\rho$ with success with probability at least $1-\delta$.
 \label{thm:pauli-upper}
 \end{theorem}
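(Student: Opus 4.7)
The plan is to use the natural random-Pauli-basis estimator but to analyze it through a \emph{global mean-squared Hilbert--Schmidt} bound rather than per-Pauli tail bounds; this is what should save the factor of $\nqubits$ and lower the base from $12$ to $10$. First I would draw, for each of the $m$ copies, a Pauli basis $P^{(i)} \in \{\pauliX,\pauliY,\pauliZ\}^\nqubits$ uniformly at random, measure qubit-wise, and record outcomes $o^{(i)} \in \{\pm 1\}^\nqubits$. For every Pauli string $Q$ with support $S_Q = \{j : Q_j \neq \pauliI\}$ and weight $|Q|=|S_Q|$, I would form the single-sample statistic
\[
X_Q^{(i)} \;=\; 3^{|Q|} \prod_{j \in S_Q} o^{(i)}_j \,\mathbf{1}[P^{(i)}_j = Q_j],
\]
and take the empirical mean $\hat\alpha_Q = \frac{1}{m}\sum_i X_Q^{(i)}$. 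A direct qubit-wise expansion yields $\EE[X_Q^{(i)}] = \tr(\rho Q)$ and $\EE[(X_Q^{(i)})^2] = 9^{|Q|}\Pr[P^{(i)}_j = Q_j\text{ for all }j \in S_Q] = 3^{|Q|}$, so $\Var(\hat\alpha_Q) \leq 3^{|Q|}/m$.

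Next I would assemble the (generally non-PSD) estimator $\tilde\rho = \frac{1}{\dims}\sum_Q \hat\alpha_Q\, Q$, fixing $\hat\alpha_{\pauliI^{\otimes\nqubits}}:=1$. Orthonormality of the rescaled Pauli basis under the Hilbert--Schmidt inner product gives
\[
\EE\bigl[\hsnorm{\tilde\rho - \rho}^2\bigr] \;=\; \frac{1}{\dims}\sum_{Q \neq \pauliI^{\otimes\nqubits}} \Var(\hat\alpha_Q) \;\leq\; \frac{1}{\dims\,m}\sum_{k=1}^{\nqubits} \binom{\nqubits}{k} 9^k \;=\; \frac{10^\nqubits - 1}{\dims\,m},
\]
using that there are $\binom{\nqubits}{k}3^k$ Paulis of weight $k$, each contributing variance at most $3^k$, and collapsing the sum via the binomial theorem to $(1+9)^\nqubits$. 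The Cauchy--Schwarz bound $\tracenorm{A} \leq \sqrt{\dims}\,\hsnorm{A}$ converts this to $\EE[\tracenorm{\tilde\rho - \rho}^2] \leq 10^\nqubits/m$, so $m = C\cdot 10^\nqubits/\eps^2$ suffices for $\tracenorm{\tilde\rho - \rho} \leq \eps$ with probability at least $2/3$ by Markov's inequality.

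To reach confidence $1-\delta$ I would run $O(\log(1/\delta))$ independent instances of the estimator and output their trace-distance geometric median, a standard median-of-means amplification contributing only the $\log(1/\delta)$ factor. Finally I would project the result onto the set of density operators, which at most doubles the trace distance by the triangle inequality, giving a valid state estimator with the claimed copy complexity $\bigO{10^\nqubits \log(1/\delta)/\eps^2}$.

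I do not expect any serious obstacle in this argument; it is essentially tight variance bookkeeping. The key temptation to resist is using per-Pauli concentration combined with a union bound over all $4^\nqubits$ Paulis, which is exactly what introduces the extraneous $\nqubits$ factor and $12^\nqubits$ base in the previous $\bigO{\nqubits\cdot 12^\nqubits/\eps^2}$ bound. The genuinely difficult question, beyond this theorem, is closing the residual gap with our $9.118^\nqubits$ lower bound, which presumably requires moving away from uniformly-at-random Pauli bases.
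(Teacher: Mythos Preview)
Your argument is correct and shares the paper's central insight: bound the \emph{global} expected Hilbert--Schmidt error, where the per-Pauli variances sum via the binomial identity $\sum_k \binom{\nqubits}{k}3^k\cdot 3^k = 10^\nqubits$, and then pass to trace distance by Cauchy--Schwarz. The implementation differs in two secondary respects. First, you sample Pauli bases uniformly at random; the paper instead measures each of the $3^\nqubits$ bases a fixed number $m$ of times (so $\ns=m\cdot 3^\nqubits$), which yields exactly the same per-Pauli variance $3^{|Q|}/\ns$ and the same $10^\nqubits/\ns$ bound on $\EE[\tracenorm{\cdot}^2]$. Second, for the high-probability guarantee you use Markov plus a median-of-means boost, whereas the paper applies McDiarmid's inequality directly to $f=\hsnorm{\rho-\sigma}$ (the bounded-differences constants work out to $c_i = 2\sqrt{5}^\nqubits/(m\cdot 3^\nqubits)$), obtaining the $\log(1/\delta)$ factor in one shot without repetition. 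Either route is fine; McDiarmid is slightly more self-contained, while your version is arguably more modular. The paper does not project onto density matrices, but your projection step is harmless.
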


Thus, we significantly reduce the existing gap between upper and lower bounds for Pauli tomography. We conjecture that the $10^{\nqubits}/\eps^2$ upper bound is tight.

Our main technical contribution is a new lower-bound framework for adaptive quantum tomography under measurement constraints. The constraint is generally described by a set of POVMs $\povmset$, which is the set of allowed measurements that we can apply to each copy. In our problem, $\povmset$ is the set of Pauli measurements.

Compared to the unrestricted case, there are two challenges to proving tight lower bounds. 
The first challenge is that with constraint $\povmset$, some states might be harder to learn for measurements in $\povmset$ than others. 
Thus, we need to design a measurement-dependent hard instance to capture the limitations of $\povmset$. The second challenge is to quantitatively evaluate the effect of measurement constraints on sample complexity. 
Intuitively, there should be some indicator of ``measurement capability'' that controls the hardness of learning. 

To our knowledge, existing lower-bound techniques for tomography cannot address these challenges. The hard case constructions in many works \cite{ODonnellW16, HaahHJWY17,chen2023does,lowe2022lower} are measurement-independent. Moreover, the analysis is either oblivious to measurement constraints \cite{ODonnellW16, HaahHJWY17,chen2023does}, or only applies to some specific constraint such as Pauli observables~\cite{Flammia_2012} and constant-outcome measurements~\cite{lowe2022lower}.

We address both challenges and develop a general framework using the \emph{measurement information channel},

\begin{definition}
\label{def:mic}
    Let $\POVM=\{M_x\}_{x}$ be a POVM. The \emph{measurement information channel (MIC)} $\Luders_{\POVM}:\C^{\dims\times\dims}\mapsto\C^{\dims\times\dims}$ and its matrix representation $\Choi_{\POVM}$ are defined as
\begin{equation}
    \Luders_{\POVM}(A)\eqdef\sum_{x}M_x\frac{\Tr[M_xA]}{\Tr[M_x]}, \quad\Choi_{\POVM}\eqdef \sum_{x}\frac{\vvec{M_x}\vadj{M_x}}{\Tr[M_x]} \in \C^{\dims^2\times\dims^2}.
\end{equation}
where $\vvec{M_x}=\VecOp(M_x)$ and $\vadj{M_x}=\VecOp(M_x)^{\dagger}$.
\end{definition}
The channel maps a quantum state to another quantum state. 
Intuitively, it characterizes the similarity of the outcome distributions after applying $\POVM$ to $\rho$ and the maximally mixed state $\qmm$. 
The ability to distinguish between different states is described by the eigenvalues of the channel. The eigenvectors (which are matrices in this case) with small eigenvalues indicate the directions that are hard to distinguish for the measurement. 

The MIC helps us to address both challenges. We design hard instances based on the ``weak'' directions of MIC of measurements in $\povmset$, 
and the sample complexity would depend on the eigenvalues of MIC.
Our framework not only works for Pauli measurements but can also be applied to arbitrary measurement constraints $\povmset$. 
To demonstrate the generality of our approach, we prove tight sample complexity bound for a natural family of $\ab$-outcome measurements. 
\begin{theorem} \label{thm:nearly-tight-finite-out}
    The worst-case copy complexity of single-copy tomography with $\ab$-outcome measurements is
    \[
    \ns = \tildeTheta{\frac{\dims^4}{\eps^2\min\{\ab, \dims\}}}.
    \]
    The lower bound holds for adaptive measurements, and the upper bound is achieved by non-adaptive ones.
\end{theorem}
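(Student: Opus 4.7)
The plan is to obtain both bounds via the measurement information channel (MIC) framework, with the upper bound also requiring an explicit non-adaptive construction. The key structural fact is that for any $\ab$-outcome POVM $\POVM=\{M_x\}_{x=1}^{\ab}$, the matrix $\Choi_{\POVM}=\sum_x \vvec{M_x}\vadj{M_x}/\tr[M_x]$ is a sum of $\ab$ rank-one PSD terms (so $\rank(\Choi_{\POVM})\leq \ab$), and $M_x\preceq\eye$ forces $\tr[\Choi_{\POVM}]\leq \ab$. Since $\vvec{\eye}$ is always an eigenvector with eigenvalue $1$, the restriction of $\Choi_{\POVM}$ to the $(\dims^2-1)$-dimensional traceless-Hermitian subspace has trace at most $\ab-1$. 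The $\min\{\ab,\dims\}$ in the statement will arise by combining this bound (dominating for $\ab\leq\dims$) with the general single-copy $\bigOmega{\dims^3/\eps^2}$ bound (dominating once $\ab\geq\dims$).

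For the lower bound I would plug this into the MIC framework using the random perturbation family $\qs_{\ptb}=\qmm+\perturb\Delta_{\ptb}$, where $\Delta_{\ptb}=\sum_i \ptb_i U_i$, $\ptb\in\{\pm1\}^{\dims^2-1}$ is uniform, and $\{U_i\}$ is an orthonormal basis of traceless-Hermitian matrices (e.g.\ normalised non-identity Paulis when $\dims=2^\nqubits$). Expanding with the orthonormality of $\{U_i\}$ gives $\EE_{\ptb}[\vadj{\Delta_{\ptb}}\Choi_{\POVM}\vvec{\Delta_{\ptb}}]=\tr[\Choi_{\POVM}|_{\text{traceless}}]\leq \ab-1$ for every $\POVM$, so the expected per-step chi-squared information is at most $\perturb^2\dims(\ab-1)$, uniformly over adaptive choices of $\POVM_t$. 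Taking $\perturb\asymp\eps/\dims^{3/2}$ (so that typical packing pairs have trace distance $\Omega(\eps)$ by matrix Khintchine) and feeding this into the framework's Assouad/Fano machinery with $\log|\text{packing}|\asymp\dims^2$ yields $\ns=\tildeOmega{\dims^4/(\ab\eps^2)}$ for $\ab\leq\dims$; when $\ab\geq\dims$, the general single-copy lower bound $\bigOmega{\dims^3/\eps^2}$ already matches the stated value, giving the combined $\tildeOmega{\dims^4/(\eps^2\min\{\ab,\dims\})}$.

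For the upper bound I would give a non-adaptive protocol. When $\ab\geq\dims$, a single rank-one projective measurement in a Haar-random (or random Clifford) basis uses only $\dims\leq\ab$ outcomes and reconstructs $\qs$ via classical shadows in $\bigO{\dims^3/\eps^2}$ copies. When $\ab<\dims$, each measurement draws a uniformly random orthonormal basis $\{\qbit{v_1},\dots,\qbit{v_\dims}\}$, outputs the label $i$ if the projected outcome is $\qproj{v_i}$ for $i<\ab$, and collapses the remaining projectors into a single ``discard'' symbol, forming a valid $\ab$-outcome POVM. Analyzing the resulting L\"uders channel shows that each copy delivers a $\Theta(\ab/\dims)$ fraction of the information of a full basis measurement, inflating the copy cost from $\bigO{\dims^3/\eps^2}$ to $\bigO{\dims^4/(\ab\eps^2)}$, matching the lower bound up to logs.

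The principal obstacle is the \emph{adaptive} lower bound: the weak eigenvectors of $\Choi_{\POVM_t}$ can depend on earlier outcomes, so no fixed measurement-independent hard instance can simply ``hide'' inside the kernel of the MIC. The point of the MIC framework is to sidestep this by controlling information gain in expectation over the random perturbation, reducing its dependence on $\POVM_t$ to the single scalar $\tr[\Choi_{\POVM_t}|_{\text{traceless}}]$; the main technical work is then to verify that the uniform bound $\ab-1$ on this scalar enters the framework's chain-rule bookkeeping in exactly the way needed to land on $\dims^4/(\eps^2\min\{\ab,\dims\})$ across both regimes.
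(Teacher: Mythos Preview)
Your lower bound is the paper's argument: both plug the bound $\tracenorm{\Choi_\POVM}\le\min\{\ab,\dims\}$ into the MIC framework's plug-and-play corollary, and the adaptive chain-rule bookkeeping is already handled inside that framework. (The paper cites the $\min\{\ab,\dims\}$ bound from \cite{liu2024restricted}; you derive the $\le\ab$ half yourself from $\rank\Choi_\POVM\le\ab$ together with $\opnorm{\Choi_\POVM}\le1$, and fall back on the general single-copy bound for $\ab\ge\dims$, which is the same thing in different packaging.)

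Your upper bound takes a genuinely different route. The paper does not collapse a Haar basis into $\ab$ bins. For $\ab\ge\dims$ it splits the maximal-MUB POVM into its $\dims+1$ constituent $\dims$-outcome basis measurements and runs projected least squares, getting $\bigO{\dims^3\log\dims/\eps^2}$. For $\ab<\dims$ it invokes a black-box distributed-simulation theorem (Theorem~IV.5 of \cite{ACT:19:IT2}): any sample from a $\dims$-outcome distribution can be $\eta$-simulated by $O(\dims/\ab)$ players each sending $\log\ab$ bits, so one simulates each MUB outcome at a $\dims/\ab$ blowup and feeds the result into the same PLS estimator. This is modular and needs no new variance analysis. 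Your ``keep $\ab-1$ basis vectors, lump the rest'' protocol should also reach the bound and is more self-contained (no MUB, no simulation lemma), but be careful with the justification: the L\"uders-channel trace is what drives the \emph{lower} bound and does not by itself certify an upper bound. You would need to name an unbiased estimator---for instance $\frac{\dims(\dims+1)}{\ab-1}\qproj{v_i}-\frac{\dims}{\ab-1}\eye$ on the $\ab-1$ kept outcomes and $0$ on discard---and run a Haar second-moment computation to exhibit the $\Theta(\dims/\ab)$ variance inflation directly; the L\"uders heuristic is the right intuition but not the right proof step.
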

Previously, the worst-case bound was known only for constant $\ab$~\cite{lowe2022lower,Flammia_2012}\footnote{\cite{lowe2022lower} proved the lower bound for non-adaptive measurements. Their adaptive lower bound only applies to finite constraint set $\povmset$.} and $\ab\ge\dims$~\cite{HaahHJWY17, chen2023does,guctua2020fast}. We not only recover their results but establish a complete dependence on $\ab$. We give a detailed discussion of our framework in \cref{sec:techniques}.

\subsection{Related works}
\paragraph{Quantum state tomography} We make additional remarks about previously mentioned works and discuss other works in this regime.

Many works study the tomography of low-rank states. 
For $\rho$ with rank $\rk$, it is shown that $\ns = \tildeTheta{\frac{\dims \rk}{\eps^2}}$ is necessary and sufficient~\cite{HaahHJWY17,ODonnellW16} with entangled measurement. 
For single-copy measurements, the sample complexity is $\ns=\bigTheta{\frac{\dims \rk^2}{\eps^2}}$ for non-adaptive measurements, but whether it is tight for adaptive ones is unknown. 

For Pauli measurements, \cite{guctua2020fast} showed that $\ns=\bigO{\frac{\nqubits \cdot 3^\nqubits\cdot \rk^2}{\eps^2}}$ is sufficient. 
Random Pauli measurements offer distinct advantages \cite{Elben_2022} and have been effectively applied in quantum process tomography for shallow quantum circuits \cite{yu2023learningmarginalssuffices,Huang_2024}. 

Some work~\cite{compressed,Flammia_2012} considers Pauli observables, a special class of 2-outcome Pauli measurements. The sample complexity for rank-$\rk$ state tomography is ${\tilde{{\Theta}}}(\frac{\dims^2 r^2}{\eps^2})$ using non-adaptive measurements~\cite{Flammia_2012}. \cite{lowe2022lower} showed that the lower bound also holds for adaptively chosen constant-outcome measurements\footnote{More precisely when adaptively chosen from a finite set of measurements with size at most $\exp(O(\dims))$.}. \cite{cai2016optimal} derived near-optimal error rates for Hilbert-Schmidt and operator-norm distance. However, they require that the state is sparse in the expectation values of Pauli observables, nor did they prove a lower bound for the trace distance.

Recently, \cite{Chen0L24memory} studied the case when one can perform entangled measurement over $t>1$ copies at a time, which interpolates between single-copy and fully entangled measurements. Apart from trace distance, other metrics were considered, such as fidelity~\cite{HaahHJWY17,chen2023does, Yuen_2023}, quantum relative entropy~\cite{flamian2023tomography}, and Bures $\chi^2$-divergence~\cite{flamian2023tomography}. Extending our work to low-rank states and other error metrics is an interesting future direction.

\paragraph{Other quantum state inference problems} Quantum state testing/certification \cite{ODonnellW15,BadescuO019} is a closely related problem, where the goal is to test whether an unknown state $\rho$ is equal or far from a target state $\sigma$. The problem has also been considered under entangled~\cite{ODonnellW15,BadescuO019}, single-copy measurements~\cite{BubeckC020,Chen0HL22,liu2024role}, and Pauli measurements~\cite{Yu2023almost}. A concurrent work~\cite{liu2024restricted} considers single-copy testing under measurement restrictions, but their technique only applies to non-adaptive measurements.

In practice, we are often interested in partial information about the state rather than a full-state description. \cite{Cotler_2020, Garc_a_P_rez_2020, evans2019scalable} studied \textit{quantum overlapping tomography}, where the goal is to output the classical description of all $k$-qubit reduced
density matrices of an $n$-qubit system. The algorithms are based on Pauli measurements, demonstrating their wide applicability. 
Shadow tomography~\cite{Aaronson20, huang2020predicting, ChenCH021,chen2024pauli} aims to learn the expectation value of a finite set of observables. In particular, \cite{chen2024pauli} studied Pauli shadow tomography with various measurement constraints such as measurements with finite quantum memory and Clifford measurements. It would be an interesting future work to establish a formal connection between their framework and our method.

\paragraph{Distributed distribution estimation}

Quantum tomography can be thought of the quantum analogue of the classical problem of distribution estimation. Given samples from an unknown distribution $p$, the goal is to output an estimate $\hat p$ such that $d(p,\hat p)<\varepsilon$ for some distance $d$. The problem has a long history and has been studied under several different settings. 

The problem of single copy tomography is in spirit similar to the problems of distributed estimation of distributions under information constraints. 
In it, i.i.d. samples $X_1, \ldots, X_n$ from the unknown $p$ are distributed across $n$ users, who are constrained in how much information about their sample they can send. 
Well-studied information constraints include communication constraints (where each user has to compress their sample using at most $b$ bits), or privacy constraints (where each user has to add noise to their sample to preserve privacy). Several problems in distributed estimation of distribution for both discrete, continuous as well as high dimensional distributions have been studied in the past several years~\cite{duchi2013local, barnes2019lower, AcharyaCT19, acharya2020distributed}.

\section{Our techniques}
\label{sec:techniques}
\subsection{Lower bound ideas through a simple example}
Our main contribution is a novel technique to prove single-copy tomography lower bounds with measurement restrictions. Before we delve into the details, we use a simple example to illustrate why we need new ideas for the problem.

Let's say that we are only allowed to use the computational basis measurement $\{\qproj{x}\}_{x=0}^{\dims-1}$ for each copy. It is impossible to perform quantum tomography under this constraint: one can easily observe that for both the maximally mixed state $\qmm\eqdef\eye_\dims/\dims$ and the state 
\[
\qbit{\psi}=\frac{1}{\sqrt{\dims}}\sum_{x=0}^{\dims-1}\qbit{x},
\]
the measurement outcomes would be a uniform distribution over $\{0, \ldots, \dims-1\}$. Yet, the trace distance between the two states is close to 1. Thus, we cannot even distinguish two states that are nearly as far away as they can be, let alone learning any given state up to an arbitrary accuracy $\eps$.

An immediate lesson is that when the constraint set $\povmset$ is too restricted, nature would be able to design some states that are particularly hard to distinguish for measurements in this set $\povmset$. In this example, the two states $\qbit{\psi}$ and $\qmm$ are precisely chosen based on the measurement $\{\qproj{x}\}_{x=0}^{\dims-1}$. 

Note that Pauli measurements only consist of $3^{\nqubits}$ basis measurements. 
It is a fairly small set compared to the dimensionality of quantum states which is $\dims^2=4^{\nqubits}$. Furthermore, it does not have nice geometric properties of maximal MUB~\cite{klappenecker2005mutually} and SIC-POVM
~\cite{zauner1999grundzuge}.
Thus, we expect that the lower bound instance for Pauli tomography also needs to be \emph{measurement-dependent}.
However, to our knowledge, the constructions in existing works on quantum tomography are predominantly measurement-independent.
For example, \cite{chen2023does} uses Gaussian orthogonal ensembles which informally speaking apply independent Gaussian perturbations to each coordinate of the maximally mixed state. \cite{HaahHJWY17} constructs a packing set based on Haar-random unitary transformations. 
Thus, the techniques in these works are likely not optimal for Pauli measurements.

It was fairly easy to design two states that completely fool the computational basis measurement, which implies that tomography with just the computation basis is impossible.
For other measurement constraints such as Pauli measurements, we need a systematic approach to (1) design a specific hard case instance and (2) analyze the effect of the constraint on tomography. It turns out that the \emph{measurement information channel (MIC)} helps us to achieve both objectives. We illustrate the role of MIC using the computational basis example. From \cref{def:mic}, the MIC of $\POVM=\{\qproj{x}\}_{x=0}^{\dims-1}$ is
\[
\Luders_{\POVM}(\cdot) = \sum_{x=0}^{\dims-1} \qproj{x}(\cdot)\qproj{x}.
\]
The outputs of $\Luders_{\POVM}$ on $\qmm$ and $\qproj{\psi}$ are identical,
\[
\Luders_{\POVM}(\qmm)=\frac{1}{\dims}\sum_{x=0}^{\dims-1} \qproj{x}\eye_\dims\qproj{x}=\frac{1}{\dims}\sum_{x=0}^{\dims-1} \qproj{x},\quad 
\Luders_{\POVM}(\qproj{\psi})=\sum_{x=0}^{\dims-1} \qbit{x}\qdotprod{x}{\psi}\qdotprod{\psi}{x}\qadjoint{x}=\frac{1}{\dims}\sum_{x=0}^{\dims-1} \qproj{x}.
\]
Let $\Delta=\qmm-\qproj{\psi}$. Equivalently we have $\Luders_{\POVM}(\Delta)=0$, or $\Delta$ lies in the 0-eigenspace of $\Luders_{\POVM}$. Therefore, to construct the hard instance, we can perturb the reference state (normally $\qmm$) along directions where the MIC has small eigenvalues. Furthermore, the spectrum of MIC in the constraint set $\povmset$ determines the hardness of tomography.

The intuition might appear similar to how testing with fixed measurement is disadvantageous to randomized ones~\cite{liu2024role}. However, their work does not consider measurement restrictions for each copy. Moreover, tomography is a harder problem than testing which requires a different analysis. Furthermore, we allow adaptive measurements, which are much more complicated than fixed measurements where the measurement outcomes are independent. Thus, it is unclear how their arguments can extend to our problem.

\subsection{The lower bound construction} 
Informally, our construction adds independent binary perturbations to $\qmm$ along different directions,
\begin{equation}
\label{equ:quantum-construction-informal}
    \sigma_z=\qmm + \frac{\dst}{\sqrt{\dims}}\cdot\frac{\cd}{\dims}\sum_{i=1}^{\ell} z_iV_i, \quad\ell=\frac{\dims^2}{2},
\end{equation}
where $\{V_i\}_{i=1}^{\dims^2-1}$ are $\dims^2-1$ orthonormal trace-0 Hermitian matrices which satisfy $\Tr[V_iV_j]=\indic{i=j}$, $z=(z_1, \ldots, z_{\dims^2/2})$ is drawn uniformly from $\{-1, 1\}^{\dims^2/2}$, and $\cd$ is an absolute constant. 

The same construction was used for quantum state testing by \cite{liu2024role}. We can argue that with high probability over $z$, 
\eqref{equ:quantum-construction-informal} yields a valid quantum state that is $\eps$ far from the maximally mixed state $\qmm$. 
\begin{theorem}[Valid construction, informal]
\label{thm:valid-construction-informal}
    Let $\eps\le 1/200$, and $c$ be a suitably chosen absolute constant. Let $z\sim \{-1,1\}^{\dims^2/2}$ uniformly, then with probability at least $1-\exp(-\dims)$, $\sigma_z$ in \eqref{equ:quantum-construction-informal} is a valid quantum state and $\tracenorm{\sigma_z-\qmm}>\eps$. 
\end{theorem}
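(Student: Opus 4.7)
Write $\Delta := \sigma_z - \qmm = \frac{\eps\cd}{\dims^{3/2}} W$ with $W := \sum_{i=1}^{\dims^2/2} z_i V_i$. My plan is to reduce both requirements---positive semidefiniteness of $\sigma_z$ and $\tracenorm{\sigma_z - \qmm} > \eps$---to a single high-probability bound $\opnorm{W} \le C_0 \sqrt{\dims}$, where $C_0$ is an absolute constant, and then tune $\cd$ accordingly. Hermiticity of $\sigma_z$ is immediate since each $V_i$ is Hermitian and $z_i$ is real, and $\tr \sigma_z = 1$ since each $V_i$ is traceless. What remains is $\opnorm{\Delta} \le 1/\dims$, which is equivalent to $\sigma_z \succeq 0$, together with the trace-distance lower bound.

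\textbf{Deterministic reduction.} By orthonormality $\tr(V_i V_j) = \indic{i=j}$ and $z_i^2 = 1$,
\[
\hsnorm{\Delta}^2 = \frac{\eps^2 \cd^2}{\dims^3}\sum_{i=1}^{\dims^2/2} z_i^2 = \frac{\eps^2 \cd^2}{2 \dims},
\]
independently of $z$. Combined with the elementary inequality $\tracenorm{\Delta} \ge \hsnorm{\Delta}^2 / \opnorm{\Delta}$ (a Cauchy--Schwarz consequence of $\hsnorm{\Delta}^2 \le \tracenorm{\Delta}\cdot\opnorm{\Delta}$), any event on which $\opnorm{W} \le C_0 \sqrt{\dims}$ forces $\opnorm{\Delta} \le C_0 \eps \cd / \dims$ and $\tracenorm{\Delta} \ge \eps \cd /(2 C_0)$. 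The first yields $\sigma_z \succeq 0$ whenever $C_0 \eps \cd \le 1$, and the second exceeds $\eps$ whenever $\cd \ge 2 C_0$; both are simultaneously satisfiable as soon as $\eps \le 1/(2 C_0^2)$, explaining the hypothesis $\eps \le 1/200$ (with $C_0$ of order $10$ and, say, $\cd = 2 C_0$).

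\textbf{Operator-norm bound (the crux).} Everything thus reduces to proving $\opnorm{W} \le C_0 \sqrt{\dims}$ with failure probability $e^{-\Omega(\dims)}$. The relevant variance parameter is $\sigma^2 := \opnorm{\sum_{i=1}^{\dims^2/2} V_i^2}$; since each $V_i^2 \succeq 0$,
\[
\sigma^2 \le \opnorm{\textstyle\sum_{i=1}^{\dims^2 - 1} V_i^2} = \dims - 1/\dims,
\]
using the basis-independent identity $\sum_{i=1}^{\dims^2} V_i^2 = \dims \cdot I$ (immediate on the Pauli basis $\{P_i/\sqrt{\dims}\}$ and preserved under any orthogonal change of orthonormal Hermitian basis). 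Combined with $\opnorm{V_i} \le \hsnorm{V_i} = 1$, a sharp noncommutative Khintchine bound such as Bandeira--van Handel's (whose $\sqrt{\log \dims}$ correction is dominated here because $\max_i \opnorm{V_i} \ll \sqrt{\sigma^2}$) yields $\EE\opnorm{W} = O(\sqrt{\dims})$. For the tail, the map $z \mapsto \opnorm{\sum z_i V_i}$ is convex and, by orthonormality, $1$-Lipschitz in $\ell_2$, since $\opnorm{\sum w_i V_i} \le \hsnorm{\sum w_i V_i} = \|w\|_2$; Talagrand's concentration for convex Lipschitz functions of independent Rademacher variables then provides the $e^{-\Omega(\dims)}$ tail at deviation scale $\sqrt{\dims}$.

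\textbf{Main obstacle.} The delicate point is controlling the matrix-concentration constant $C_0$ tightly enough that the two inequalities $\cd \ge 2 C_0$ and $\eps \cd C_0 \le 1$ leave room for $\eps$ as large as $1/200$. A naive matrix-Rademacher application that introduces a stray $\sqrt{\log \dims}$ factor into $\EE \opnorm{W}$ would render the construction invalid at large $\dims$, so the main technical effort lies in invoking a variance-dominated (log-free) sharpening of noncommutative Khintchine, or in exploiting concrete structure of $\{V_i\}$---for the Pauli basis in particular, $W$ behaves like a Wigner matrix with semicircle-law operator norm $\Theta(\sqrt{\dims})$, which can be used directly.
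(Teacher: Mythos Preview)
Your proposal is correct. The paper does not prove this statement in-house: it is imported verbatim from \cite{liu2024role} (see \cref{prop:perturbation-trace-distance}), and the underlying operator-norm concentration $\probaOf{\opnorm{W}>\kappa_\alpha\sqrt{\dims}}\le 2\exp(-\alpha\dims)$ is likewise quoted as a black box (\cref{thm:rand-mat-opnorm-concentration}). Your deterministic reduction---computing $\hsnorm{\Delta}^2=\cd^2\eps^2/(2\dims)$ exactly and then using $\tracenorm{\Delta}\ge\hsnorm{\Delta}^2/\opnorm{\Delta}$ to turn an operator-norm upper bound into both validity and the trace-distance lower bound---is the natural argument and matches what the cited reference does; the resulting constraint $\eps\le 1/(2C_0^2)$ with $\cd=2C_0$ also explains the concrete choice $\cd=10\sqrt{2}$ and the threshold $\eps\le 1/200$ in \cref{prop:perturbation-trace-distance}. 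Where you go further than the paper is in actually sketching a proof of the operator-norm concentration: your route (a log-free expectation bound $\EE\opnorm{W}=O(\sqrt{\dims})$ via Bandeira--van~Handel, transferred to Rademacher by the standard Gaussian comparison, followed by Talagrand's convex-Lipschitz inequality on the cube for the $e^{-\Omega(\dims)}$ tail) is valid and arguably more informative than simply citing the result. The basis-independence identity $\sum_{i=1}^{\dims^2}V_i^2=\dims\,\eye_\dims$ you invoke is correct and is exactly what makes the variance parameter $\sigma^2\le\dims$ hold for an \emph{arbitrary} orthonormal Hermitian basis, not just Paulis.
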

In the rare event that $\sigma_z$ is not a valid state (i.e., not p.s.d.), we shrink the perturbation $\Delta_z=\sigma_z-\qmm$ so that it has a maximum eigenvalue of at most $1/(2\dims)$. The formal definition is presented in \cref{def:perturbation}.

The main advantage of this construction is that it gives us the freedom to choose directions $V_1, \ldots, V_{\dims^2/2}$ that are least sensitive to the given measurement constraint. Next, we discuss the choice of these directions for Pauli measurements.

\paragraph{Construction for Pauli measurements} We choose these directions to be the (normalized) \emph{Pauli observables} with the largest weights. 
In short, a Pauli observable $P$ belongs to $\pauliObsSet=\{\pauliX, \pauliY, \pauliZ, \eye_2\}^{\otimes \nqubits}\setminus\{\eye_\dims\}$\footnote{Some literature also include $\eye_\dims$ as a Pauli observable} where 
\begin{equation}
    \pauliX = \begin{bmatrix}
    0 & 1 \\
    1 & 0
\end{bmatrix},\quad
\pauliY = \begin{bmatrix}
    0 & -i \\
    i & 0
\end{bmatrix},\quad
\pauliZ = \begin{bmatrix}
    1 & 0\\
    0 & -1
\end{bmatrix}.
\label{equ:pauli-ops}
\end{equation}
The \emph{weight} of $P$ is the number of non-identity components ($\pauliX,\pauliY,\pauliZ$) it contains. 
An important property is that the Pauli observables and $\eye_\dims$ forms an orthogonal basis for quantum states, and thus any state $\rho$ can be represented as
\begin{equation}
    \rho = \frac{\eye_\dims}{\dims} +\sum_{P\in\pauliObsSet}\alpha_PP, \quad\alpha_P=\frac{\Tr[\rho P]}{\dims}.
    \label{equ:pauli-decomposition}
\end{equation}

For each copy of the $\nqubits$-qubit state, a Pauli measurement $\POVM$ measures each qubit in the eigenbasis of either $\pauliX,\pauliY,\pauliZ$. The outcome is a $\{-1, 1\}^{\nqubits}$ binary string which reveals information about the coefficient $\alpha_P$ of all Pauli observables $P$ whose non-identity components match the choice of $\pauliX,\pauliY,\pauliZ$ for the corresponding qubit. Thus, the coefficients of $P$ with larger weights are harder to learn. 

As an example, if $P=\sigma_X^{\otimes \nqubits}$, then the only way to learn about $\alpha_P$ is to measure all qubits in the eigenbasis of $\sigma_X$. On the other hand, to learn about $\sigma_X\otimes \eye_{\dims/2}$, the only requirement is to measure the first qubit in the eigenbasis of $\sigma_X$, and we can choose any of the three choices for other qubits. In general, for a Pauli observable $P$ with weight $w$, there are $3^{\nqubits-w}$ Pauli measurements that can learn information about $\alpha_P$.

\paragraph{The role of MIC} It turns out that the measurement information channel of Pauli measurements precisely formalizes our previous intuition. Using \cref{def:mic} and the definition of Pauli measurements, we have the following result,
\begin{lemma}[MIC of Pauli measurement, informal]
\label{lem:mic-pauli-informal}
    Let $\Luders_{\POVM}$ be the measurement information channel of a Pauli measurement $\POVM$, then for all Pauli observable $P$
    \[
    \Luders(P)=P\indic{\text{The non-identity components of $P$ match the choice of basis in $\POVM$}}.
    \]
\end{lemma}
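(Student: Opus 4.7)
The plan is to exploit the tensor-product structure of Pauli measurements and reduce the claim to a one-qubit computation. A Pauli measurement $\POVM$ is specified by a basis string $B = (B_1, \dots, B_\nqubits) \in \{X,Y,Z\}^\nqubits$; its POVM elements are the rank-one projectors $M_b = \qproj{b_1} \otimes \cdots \otimes \qproj{b_\nqubits}$ for $b \in \{+,-\}^\nqubits$, where $\qbit{b_i}$ is the $b_i$-eigenstate of $\pauliObsSet_{B_i}$. Since $\Tr[M_b] = 1$, the MIC reduces to $\Luders_\POVM(A) = \sum_b \Tr[M_b A]\, M_b$, and because both the POVM elements and any Pauli observable $P = P_1 \otimes \cdots \otimes P_\nqubits$ factor as tensor products, the channel splits as $\Luders_\POVM(P) = \bigotimes_{i=1}^{\nqubits} \Luders_{\POVM_i}(P_i)$, where $\POVM_i = \{\qproj{B_i +}, \qproj{B_i -}\}$.

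Next I would compute $\Luders_{\POVM_i}(P_i)$ for each of the four possibilities $P_i \in \{\pauliI, \pauliX, \pauliY, \pauliZ\}$. When $P_i = \pauliI$, completeness $\qproj{B_i+} + \qproj{B_i-} = \eye_2$ immediately gives $\Luders_{\POVM_i}(\pauliI) = \pauliI$. When $P_i$ equals the measured Pauli $\pauliObsSet_{B_i}$, the spectral decomposition $\pauliObsSet_{B_i} = \qproj{B_i+} - \qproj{B_i-}$ yields $\Tr[\qproj{B_i\pm} \pauliObsSet_{B_i}] = \pm 1$, so $\Luders_{\POVM_i}(\pauliObsSet_{B_i}) = \qproj{B_i+} - \qproj{B_i-} = \pauliObsSet_{B_i}$.

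The remaining case, $P_i \in \{\pauliX, \pauliY, \pauliZ\} \setminus \{\pauliObsSet_{B_i}\}$, is the only place any actual algebra is needed, and I would handle it via anti-commutativity: for distinct $B,C \in \{X,Y,Z\}$, $\pauliObsSet_B \pauliObsSet_C = -\pauliObsSet_C \pauliObsSet_B$, whence for any $\pm 1$-eigenstate $\qbit{B\pm}$ of $\pauliObsSet_B$,
\[
\pm \matdotprod{B\pm}{\pauliObsSet_C}{B\pm} = \matdotprod{B\pm}{\pauliObsSet_B \pauliObsSet_C}{B\pm} = -\matdotprod{B\pm}{\pauliObsSet_C \pauliObsSet_B}{B\pm} = \mp \matdotprod{B\pm}{\pauliObsSet_C}{B\pm},
\]
which forces $\Tr[\qproj{B\pm} \pauliObsSet_C] = 0$ and hence $\Luders_{\POVM_i}(P_i) = 0$ in this case.

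Finally I would combine the three cases with the tensor-product identity: $\Luders_\POVM(P) = \bigotimes_i \Luders_{\POVM_i}(P_i)$ equals $P$ whenever every non-identity $P_i$ matches $\pauliObsSet_{B_i}$ (each factor reproduces $P_i$), and equals $0$ otherwise, since a single mismatched qubit contributes a zero factor. This is exactly the indicator statement in the lemma. No step here seems genuinely hard; the main point to be careful about is phrasing ``matches the choice of basis in $\POVM$'' precisely as ``$P_i = \pauliObsSet_{B_i}$ for every $i$ with $P_i \neq \pauliI$,'' which is the condition delivered by the case analysis.
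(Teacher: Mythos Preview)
Your proof is correct and follows essentially the same approach as the paper's proof of the formal version (\cref{lem:pauli-mic-eigen}): both exploit the tensor-product structure of the Pauli measurement to reduce to a single-qubit computation, then handle the three cases $P_i=\pauliI$, $P_i=\pauliObsSet_{B_i}$, and $P_i\neq\pauliObsSet_{B_i}$ separately. The only cosmetic differences are that you state the factorization $\Luders_\POVM=\bigotimes_i\Luders_{\POVM_i}$ explicitly (the paper instead carries the product through $\Tr[M_x^P Q]$ and then sums over $x$), and for the mismatch case you invoke anti-commutativity whereas the paper uses the trace orthogonality $\Tr[\sigma_j\sigma_j']=2\indic{\sigma_j=\sigma_j'}$ from~\eqref{equ:pauli-property}.
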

This is consistent with the fact that Pauli measurements only reveal information for Pauli observables with a matching choice of $\pauliX, \pauliY, \pauliZ$. 

To see how the eigenvalues of MIC characterize the ability of Pauli measurements, we consider a POVM $\ObsPOVM$ defined by the uniform ensemble of all $3^{\nqubits}$ Pauli measurements. In other words, we uniformly sample a Pauli measurement and observe the outcome. Together with the choice of the measurement, $\ObsPOVM$ is a POVM with $3^{\nqubits}\cdot 2^{\nqubits}=6^{\nqubits}$ outcomes. From \cref{def:mic}, the MIC of $\ObsPOVM$ is simply the linear combination of the MIC of all Pauli measurements,
\[
\Luders_{\ObsPOVM}(\cdot) = \frac{1}{3^{\nqubits}}\sum_{\POVM \text{ Pauli}}\Luders_{\POVM}(\cdot).
\]
Due to linearity, all Pauli observables $P$ are also the eigenvectors of $\Luders_{\ObsPOVM}$. Suppose $P$ has a weight of $w$, then using \cref{lem:mic-pauli-informal}, its eigenvalue for $\Luders_{\ObsPOVM}$ is,
\[
\frac{1}{3^{\nqubits}}\sum_{\POVM \textbf{ Pauli}}\indic{\text{The non-identity components of $P$ match the choice of basis in $\POVM$}}=\frac{3^{\nqubits-w}}{3^{\nqubits}}=3^{-w}.
\]
The first step is precisely because there are $3^{\nqubits-w}$ Pauli measurements that match the non-identity components of $P$. Thus $P$ with a large weight has a smaller eigenvalue,  meaning that over uniform draw of Pauli measurements, we learn less about large-weight $P$ ``on average''. This is consistent with the previous discussion that large-weight Pauli observables are hard to learn for Pauli measurements.

\subsection{Assouad's method: Hamming separation for trace distance}
The packing argument is popular among previous works \cite{HaahHJWY17, ODonnellW16} which constructs a finite set of states such that the pair-wise trace distance is $\Omega(\eps)$. 
Thus, any learning algorithm must be able to correctly identify the state chosen by nature from the packing set. 
From this Holevo's theorem can be applied. However, it is not straightforward to construct a packing set that adjusts to the measurement constraint. 

Our lower bound is based on Assouad's method~\cite{Assouad83,yu1997assouad}, which reduces the learning problem to a multiple binary hypothesis testing problem. It has been extensively used for many parametric estimation problems \cite{duchi2013local,ACLST22iiuic}. The method is more suitable for our construction in \eqref{equ:quantum-construction-informal}.

Let $L(\cdot, \cdot)$ be an error metric between quantum states.
The main argument is that if an algorithm can learn any state with a small error in terms of $L$, then given a randomly sampled $\sigma_z$, the algorithm should be able to obtain an estimate $\hat{z}\in\{-1,1\}^{\dims^2/2}$ that matches most coordinates of $z$. Traditionally \cite{yu1997assouad,duchi2013local}, this relies on a $2\tau$-Hamming separation for the error metric $L$,
\[
L(\sigma_z,\sigma_{z'})\ge 2\tau \ham{z}{z'}=2\tau\sum_{i=1}^{\dims^2/2}\indic{z_i\ne z_i'}.
\]
Given this relation, a small loss $L$ implies $\ham{z}{z'}$ must be small. We then need to compute the separation parameter $\tau$, which is easy if $L$ and $z$ have a coordinate-wise relation. This is often true for classical distribution estimation~\cite{duchi2013local,ACLST22iiuic}, where $L$ is often the $\ell_p$ norm between two distributions. In quantum tomography, if $L$ is the Hilbert-Schmidt distance $\hsnorm{\sigma_z-\sigma_{z'}}$, the distance can also be written in terms of the coordinates of $z,z'$ since $V_i$'s are orthogonal. \cite{cai2016optimal} obtained the lower bound for the Hilbert-Schmidt distance using this approach.

However, the trace distance does not have a nice geometry like the Hilbert-Schmidt norm or vector $\ell_p$ norms that yields a direct relation between $\tracenorm{\sigma_z-\sigma_{z'}}$ and each coordinate of $z,z'$. 
Partly for this reason, \cite{cai2016optimal} did not obtain a lower bound for trace distance and suggested that a new approach might be needed.

Instead of trying to prove a general coordinate-wise relation for trace distance, we show a Hamming separation only for the ``good'' $z\in\{-1,1\}^{\dims^2/2}$ such that $\sigma_z$ is a valid quantum state. 
This is sufficient for our purpose since  according to \cref{thm:valid-construction-informal} an overwhelming fraction of $z$'s are ``good''.
\begin{lemma}[Trace distance Hamming separation, informal] \label{lemma:hamm-separation-informal}
   Let $z\in\{-1, 1\}^{\dims^2/2}$ and $c'$ be an absolute constant. If $\sigma_z$ defined in \eqref{equ:quantum-construction-informal} is a valid quantum state, then for all  $z' \in \left\{-1,1\right\}^{\dims^2/2}$, 
   \begin{equation}
       \tracenorm{\sigma_z - \sigma_{z'}} \geq \frac{c' \eps}{\dims^2} \ham{z}{z'}.
   \end{equation}
\end{lemma}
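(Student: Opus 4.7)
My plan is to invoke the variational characterization of the trace norm,
\[
\tracenorm{A} \;=\; \sup_{M = M^\dagger,\; \opnorm{M}\le 1} \Tr[MA],
\]
and exhibit a single Hermitian test observable $M$ whose operator norm is controlled by the valid-state hypothesis on $\sigma_z$ and whose inner product with $\sigma_z - \sigma_{z'}$ scales linearly in $\ham{z}{z'}$. Writing $\beta := \dst\,\cd/\dims^{3/2}$, I first expand
\[
\sigma_z - \sigma_{z'} \;=\; \beta \sum_{i=1}^{\dims^2/2} (z_i - z'_i)\, V_i,
\]
so the sum is supported on $S = \{i : z_i \ne z'_i\}$ of size $\ham{z}{z'}$ with $|z_i - z'_i| = 2$ on $S$.

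For the test observable I would take $M = 2\dims\,\Delta_z$ with $\Delta_z := \sigma_z - \qmm$. The validity of $\sigma_z$ (together with the shrinking step in the formal construction that caps $\opnorm{\Delta_z} \le 1/(2\dims)$) gives $\opnorm{M} \le 1$. Since $\Delta_z$ itself lies in $\Span\{V_i\}$, the orthonormality $\Tr[V_iV_j] = \indic{i=j}$ collapses the relevant inner product:
\[
\Tr\!\left[\Delta_z(\Delta_z - \Delta_{z'})\right] \;=\; \beta^2 \sum_{i=1}^{\dims^2/2}(z_i^2 - z_iz'_i) \;=\; \beta^2 \sum_{i=1}^{\dims^2/2}(1 - z_iz'_i) \;=\; 2\beta^2\,\ham{z}{z'}.
\]
Hölder's inequality then delivers
\[
\tracenorm{\sigma_z - \sigma_{z'}} \;\ge\; \frac{\Tr[M(\sigma_z - \sigma_{z'})]}{\opnorm{M}} \;\ge\; 4\dims\,\beta^2\,\ham{z}{z'} \;=\; \frac{4\,\dst^2\,\cd^2}{\dims^2}\,\ham{z}{z'}.
\]
Combined with the calibration $\dst^2\,\cd^2 = \Omega(\eps)$ implicit in the valid-construction theorem---indeed, applying the same duality to $\Delta_z$ itself yields $\tracenorm{\Delta_z} \ge \dst^2\cd^2/2$, consistent with $\tracenorm{\sigma_z - \qmm} > \eps$---this gives the claimed Hamming separation with an explicit constant $c'$.

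The main obstacle is the choice of $M$. It must simultaneously be constrained in operator norm by the validity hypothesis and aligned enough with the basis $\{V_i\}$ that its trace against $\sigma_z - \sigma_{z'}$ extracts a linear-in-$\ham{z}{z'}$ signal. A cruder test like $M = \sigma_z$ (using only $\opnorm{\sigma_z} \le 1$) would yield the same numerator but a final bound of order $\eps\,\ham{z}{z'}/\dims^3$, a factor of $\dims$ too weak. The sharper choice $M = 2\dims\,\Delta_z$ exploits the fact that validity upgrades $\opnorm{\sigma_z}\le 1$ to $\opnorm{\Delta_z} = O(1/\dims)$, and this extra factor of $\dims$ is precisely what recovers the claimed Hamming separation.
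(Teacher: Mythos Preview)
Your overall strategy---dualize via a test observable proportional to $\Delta_z$ and exploit orthonormality of the $V_i$---is exactly the paper's approach. However, there is a genuine gap in the normalization step that costs you a factor of $\eps$.

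You take $M = 2\dims\,\Delta_z$ and justify $\opnorm{M}\le 1$ via the shrinking cap $\opnorm{\Delta_z}\le 1/(2\dims)$. But that cap holds \emph{always}, by construction; it does not use the valid-state hypothesis at all. With only this normalization your computation yields
\[
\tracenorm{\sigma_z-\sigma_{z'}} \;\ge\; 4\dims\,\beta^2\,\ham{z}{z'} \;=\; \frac{4\cd^2\eps^2}{\dims^2}\,\ham{z}{z'},
\]
since in the construction $\dst=\eps$ (compare the formal Definition~4.1 and the paper's own sketch, where the coefficient is written as $c\eps/\dims^{3/2}$). This is $\eps^2/\dims^2$ rather than the claimed $\eps/\dims^2$.

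Your attempted rescue via ``calibration $\dst^2\cd^2 = \Omega(\eps)$'' does not hold: with $\dst=\eps$ and $\cd$ an absolute constant, $\dst^2\cd^2 = \Theta(\eps^2)$. Your supporting observation that duality gives $\tracenorm{\Delta_z}\ge \dst^2\cd^2/2$ is a \emph{lower} bound on $\tracenorm{\Delta_z}$, so its consistency with $\tracenorm{\Delta_z}>\eps$ tells you nothing about the size of $\dst^2\cd^2$; the inequality runs the wrong way.

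The fix is precisely what the paper does: the ``valid state'' hypothesis (formally, $z\in\mathcal{G}$ from the random-matrix concentration, Theorem~4.2) gives the \emph{sharper} bound $\opnorm{\Delta_z}=O(\eps/\dims)$, not merely $1/(2\dims)$. Taking the test observable $B=\Delta_z/\opnorm{\Delta_z}$ (equivalently $W_z/(\kappa_\alpha\sqrt{\dims})$) then contributes a normalization factor $\Omega(\dims/\eps)$ rather than $2\dims$, and the extra $1/\eps$ exactly cancels one power of $\eps$ in $\beta^2$, delivering the correct $\Omega(\eps/\dims^2)\,\ham{z}{z'}$.
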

\begin{proof}[Proof sketch]
    The idea is that when $\sigma_z$ is ``good'', then the perturbation $\Delta_z=\sigma_z-\qmm$ has an operator norm at most $O(\eps/\dims)$. Then we use the duality between the trace norm and operator norm,
\begin{lemma}[Duality between trace and operator norms] \label{lemma:trace-norm-dual}
    Let $A\in \C^{\dims\times\dims}$, then
    \[
    \tracenorm{A}=\sup_{B\in \C^{\dims\times\dims}:\opnorm{B}\le 1}|\Tr[B^{\dagger}A]|.
    \]
\end{lemma}
We set $A=\sigma_z-\sigma_{z'}$ and $B=\Delta_z/\opnorm{\Delta_z}$. For simplicity of presentation assume that $\sigma_{z'}$ is also a valid quantum state, then $A=\frac{2c\eps}{\dims\sqrt{\dims}}\sum_{i}\indic{z_i\ne z_i'}V_i$. Note that $\Delta_z=\frac{c\eps}{\dims\sqrt{\dims}}\sum_{i}V_i$. Then by duality,
\[
\tracenorm{A}\ge \frac{\Tr[\Delta_z A]}{\opnorm{\Delta_z}}=\bigOmega{\frac{\dims}{\eps}}\frac{2c^2\eps^2}{\dims^3}\sum_{i}\indic{z_i\ne z_i'}=\bigOmega{\frac{\eps}{\dims^2}}\ham{z}{z'}.
\]
The second step uses that $V_i$'s are orthonormal and thus $\Tr[V_iV_j]=\indic{i=j}$. 
We note that we can also prove the Hamming separation when $\sigma_{z'}$ is not a valid state and $\Delta_{z'}$ needs to be normalized. The formal lemma statement and proof are in \cref{lemma:hamm-packing}.
\end{proof}

Using \cref{lemma:hamm-separation-informal}, we can argue that a tomography algorithm must be able to guess at least 0.59 fraction of the $z_i$'s correctly,
\begin{proposition}
\label{prop:tomography-guess}
    Let $z\sim \{-1,1\}^{\ell}$ be uniform. Given $\ns$ copies of $\sigma_z$, a tomography algorithm with accuracy $\eps$ in trace distance can obtain a guess $\hat{z}\in \{-1,1\}^{\ell}$ such that
    \[
    \frac{1}{\ell}\sum_{i=1}^\ell\probaOf{z_i\ne\hat{z}_i}\le 0.41.
    \]
\end{proposition}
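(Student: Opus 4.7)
The plan is to construct $\hat z$ by applying the natural minimum-distance decoder to the tomography output $\hat\rho$,
\[
\hat z \in \argmin_{z' \in \{-1,1\}^\ell} \tracenorm{\hat\rho - \sigma_{z'}},
\]
with ties broken arbitrarily. Let $B$ denote the event that tomography succeeds, i.e., $\tracenorm{\hat\rho - \sigma_z} < \eps$. By the tomography guarantee and the tower property over $z$, $\probaOf{B} \ge 0.99$. On $B$, the optimality of $\hat z$ (which treats the true $z$ as a candidate) together with the triangle inequality yields
\[
\tracenorm{\sigma_{\hat z} - \sigma_z} \le \tracenorm{\sigma_{\hat z} - \hat\rho} + \tracenorm{\hat\rho - \sigma_z} \le 2\tracenorm{\hat\rho - \sigma_z} < 2\eps.
\]

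Next I would convert this trace-distance bound into a Hamming bound. Let $A$ be the event that $\sigma_z$ as defined by \eqref{equ:quantum-construction-informal} is already a bona fide quantum state, so that the shrinking step of \cref{def:perturbation} is inactive. \cref{thm:valid-construction-informal} gives $\probaOf{A^c} \le e^{-\dims}$. On $A\cap B$, the Hamming separation of \cref{lemma:hamm-separation-informal}, combined with the $2\eps$ bound from the previous paragraph, yields
\[
\ham{z}{\hat z} \le \frac{2\dims^2}{c'} = \frac{4\ell}{c'}.
\]
On the complementary event $A^c\cup B^c$ I would use only the trivial bound $\ham{z}{\hat z}/\ell \le 1$.

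Putting the two regimes together and taking expectation over both $z$ and the algorithm's internal randomness,
\[
\frac{1}{\ell}\sum_{i=1}^\ell \probaOf{z_i \ne \hat z_i} = \EE\!\left[\frac{\ham{z}{\hat z}}{\ell}\right] \le \frac{4}{c'} + \probaOf{A^c} + \probaOf{B^c} \le \frac{4}{c'} + e^{-\dims} + 0.01.
\]

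The main quantitative obstacle is verifying that the absolute constant $c'$ produced by \cref{lemma:hamm-separation-informal} is large enough for the right-hand side to fall below $0.41$. Tracing through the duality argument in the sketch of \cref{lemma:hamm-separation-informal}, $c'$ scales linearly in the construction constant $c$ of \eqref{equ:quantum-construction-informal} (the step using $\Tr[\Delta_z V_i] = \Theta(c\eps/\dims\sqrt{\dims})$ and $\opnorm{\Delta_z} = O(\eps/\dims)$). The plan is therefore to commit to a value of $c$ that is simultaneously (i) small enough for the validity and $\eps$-separation guarantees of \cref{thm:valid-construction-informal} to hold, and (ii) large enough, e.g.\ yielding $c'\ge 11$, that $4/c' + 0.01 + e^{-\dims} \le 0.41$ uniformly in the $\dims$ of interest. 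Reconciling these two requirements is the only delicate point; everything else is a direct triangle-inequality/Hamming-separation argument.
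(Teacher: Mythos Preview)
Your proposal is correct and follows essentially the same approach as the paper: define $\hat z$ by the minimum-distance decoder, use the triangle inequality to get $\tracenorm{\sigma_{\hat z}-\sigma_z}\le 2\eps$ on the success event, invoke the Hamming separation on the high-probability ``good'' set of $z$'s, and bound the expected normalized Hamming distance by splitting into good and bad events. The constant reconciliation you flag is resolved in the paper by the formal version of the separation (\cref{lemma:hamm-packing}), which gives $c' = c/(2\kappa_\alpha)$; taking $c \ge 10\kappa_\alpha$ makes $4/c' \le 0.4$, and with $\Pr[z\notin\mathcal G]\le 2e^{-\alpha d}$ and the $0.01$ failure probability this yields exactly $0.41$.
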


\subsection{Handling adaptivity via average mutual information}
Let $x_1, \ldots, x_{\ns}$ be the measurement outcomes, and denote $x^t=(x_1, \ldots x_t)$. To complete Assouad's argument, we need to analyze the outcome distributions when the $i$th coordinate is fixed $z_i=+1$ or $z_i=-1$ while other $z_j$ are still chosen uniformly. Denote these distributions as $\p_{+i}^{x^{\ns}}$ and  $\p_{+i}^{x^{\ns}}$ respectively. Using Le Cam's method \cite{LeCam73,yu1997assouad}, the total variation distance must be large to guess $z_i$ correctly,
\[
\Pr_{z_i\sim \{-1,1\}}[z_i\ne \hat{z}_i(x^\ns)]\ge \frac{1}{2}\Paren{1-\totalvardist{\p_{+i}^{x^{\ns}}}{\p_{-i}^{x^{\ns}}}}.
\]
Here $\hat{z}_i$ is an estimator that guesses $z_i$, which is produced by the tomography algorithm in our case. 
Combining with \cref{lemma:hamm-separation-informal}, it is sufficient to upper bound each $\totalvardist{\p_{+i}^{x^{\ns}}}{\p_{-i}^{x^{\ns}}}$. 

However, the total variation distance is hard to compute, especially given that $\p_{+i}^{x^{\ns}}$ are $\p_{-i}^{x^{\ns}}$ complicated mixture distributions.
Furthermore, the dependence between the outcomes $x_1, \ldots, x_{\ns}$ due to adaptivity poses another great challenge. Instead, we use the \emph{average mutual information} \cite{ACLST22iiuic} between the outcomes $x^\ns$ and $z_i$ as a bridge,
$\frac{1}{\ell}\sum_{i=1}^{\ell}\mutualinfo{z_i}{x^\ns}, \ell=\frac{\dims^2}{2}$.

First, when $z\sim \{-1, 1\}^{\ell}$, we can easily relate this quantity to the average error probability of guessing the $z_i$'s using \cite[Lemma 10]{ACLST22iiuic},
\begin{equation}
    \frac{1}{\ell}\sum_{i=1}^{\ell}\mutualinfo{z_i}{x^\ns}\ge 1-h\Paren{\frac{1}{\ell}\sum_{i=1}^\ell\probaOf{z_i\ne\hat{z}_i}}, \;h(p)=-p\log p-(1-p)\log(1-p).
    \label{equ:avg-MI-lower}
\end{equation}

From \cref{prop:tomography-guess}, and that $h$ is increasing in $[0, 1]$, the average mutual information must be lower bounded by a constant $1-h(0.41)$.

It remains to upper bound the average mutual information. Mutual information can be expressed using the KL-divergence, which enjoys a chain rule that helps us to analyze the distribution of each outcome $x_i$ separately even though it may depend on previous outcomes $x^{i-1}$. By further upper-bounding KL-divergence using chi-square divergence, we obtain an upper bound in terms of the measurement information channel. The formal statement is in \cref{thm:avg-MI-upper} and we provide an informal one below.

\begin{theorem}[Average mutual information bound, informal]
\label{thm:avg-MI-upper-informal}
    Let $\ptb\sim\{-1,1\}^{\ell}$ and $\sigma_z$ defined in \eqref{equ:quantum-construction-informal}, and $\out^\ns$ be measurement outcomes. Then,
    \begin{align}
         \frac{1}{\ell}\sum_{i=1}^{\ell}\mutualinfo{\ptb_i}{\out^\ns}&\le  \frac{c_1\ns\eps^2}{\ell^2} \sup_{\POVM\in {\povmset}}\sum_{i=1}^{\ell} \vadj{V_i} \Choi_{\POVM} \vvec{V_i}\label{equ:avg-MI-partial-informal},
    \end{align}
    where $\Choi_{\POVM}$ is the matrix representation of  $\POVM$'s MIC and $c_1$ is an absolute constant.
\end{theorem}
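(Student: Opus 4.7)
The plan is to combine three ingredients: the independence of the coordinates of $\ptb$, the KL chain rule for adaptive protocols, and a chi-square linearization that makes the measurement information channel appear naturally.

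The first step is to strip off all coordinates of $\ptb$ except $\ptb_i$. Since $\ptb_1,\dots,\ptb_\ell$ are independent, $\mutualinfo{\ptb_i}{\out^\ns} \le \mutualinfo{\ptb_i}{\out^\ns\mid \ptb_{-i}}$, and the conditional version admits the chain rule
\[
\mutualinfo{\ptb_i}{\out^\ns\mid \ptb_{-i}} = \sum_{t=1}^{\ns} \mutualinfo{\ptb_i}{\out_t\mid \out^{t-1},\ptb_{-i}}.
\]
Conditioning on $(\out^{t-1},\ptb_{-i})$ fixes both the (adaptively chosen) POVM $\POVM^{\out^{t-1}}=\{M_{\out_t}\}$ and every coordinate of $\ptb$ except $\ptb_i$, so each inner term is a binary mutual information between $\ptb_i\in\{-1,+1\}$ and a single measurement outcome of a \emph{fixed} POVM applied to one of two quantum states. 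This reduction is what lets me sidestep the complicated posteriors over $\ptb_{-i}$ given $\out^{t-1}$ that appear in the unconditioned mutual information.

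Next I would linearize. For a binary uniform source feeding distributions $\p_{+}$ and $\p_{-}$ with mean $\p=(\p_{+}+\p_{-})/2$, the standard bound $\KL{\p_{\pm}}{\p}\le \chi^2(\p_{\pm},\p)$ gives
\[
\mutualinfo{\ptb_i}{\out_t\mid \out^{t-1},\ptb_{-i}} \le \frac14 \sum_{\out_t} \frac{(\p_{+}(\out_t)-\p_{-}(\out_t))^2}{\p(\out_t)}.
\]
For our construction the two conditional states are $\qmm+\Delta\pm\frac{c\eps}{\dims\sqrt{\dims}}V_i$ where $\Delta$ depends only on $\ptb_{-i}$, so that $\p_{+}(\out_t)-\p_{-}(\out_t)=\frac{2c\eps}{\dims\sqrt{\dims}}\Tr[M_{\out_t}V_i]$. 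The maximum-eigenvalue truncation in \cref{def:perturbation} guarantees $\sigma_\ptb\succeq \qmm/2$ for every $\ptb$, hence $\p(\out_t)\ge \Tr[M_{\out_t}]/(2\dims)$, which turns the chi-square into
\[
\mutualinfo{\ptb_i}{\out_t\mid \out^{t-1},\ptb_{-i}} \lesssim \frac{\eps^2}{\dims^2}\sum_{\out_t}\frac{\Tr[M_{\out_t}V_i]^2}{\Tr[M_{\out_t}]}.
\]

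To surface the MIC I would sum over $i$ before taking expectations or suprema. Using $\Tr[M_{\out_t}V_i]^2=\vadj{V_i}\vvec{M_{\out_t}}\vadj{M_{\out_t}}\vvec{V_i}$ and \cref{def:mic},
\[
\sum_{i=1}^{\ell}\sum_{\out_t}\frac{\Tr[M_{\out_t}V_i]^2}{\Tr[M_{\out_t}]}=\sum_{i=1}^{\ell}\vadj{V_i}\Choi_{\POVM^{\out^{t-1}}}\vvec{V_i}\le \sup_{\POVM\in\povmset}\sum_{i=1}^{\ell}\vadj{V_i}\Choi_{\POVM}\vvec{V_i}.
\]
Averaging over $(\out^{t-1},\ptb_{-i})$ preserves this pointwise bound, summing over $t\le\ns$ contributes the factor $\ns$, and dividing by $\ell$ together with $\dims^2=2\ell$ yields $\frac{\ns\eps^2}{\ell^2}\sup_{\POVM\in\povmset}\sum_i\vadj{V_i}\Choi_{\POVM}\vvec{V_i}$ up to an absolute constant, which is exactly \eqref{equ:avg-MI-partial-informal}.

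The main obstacle I anticipate is justifying the denominator bound $\p(\out_t)\ge \Tr[M_{\out_t}]/(2\dims)$ cleanly for all $\ptb$ (including the truncated tail where $\sigma_\ptb\ne\qmm+\Delta_\ptb$), and carrying the supremum over adaptive choices $\POVM^{\out^{t-1}}\in\povmset$ through the expectation in $\out^{t-1}$ without losing constants; swapping the sum over $i$ with the expectation and supremum is what makes the argument tight, and the independence-based bound $\mutualinfo{\ptb_i}{\out^\ns}\le \mutualinfo{\ptb_i}{\out^\ns\mid\ptb_{-i}}$ is the crucial move that avoids wrestling with correlated posteriors induced by adaptivity.
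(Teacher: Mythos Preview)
Your route via the conditional mutual information inequality $\mutualinfo{\ptb_i}{\out^\ns}\le\mutualinfo{\ptb_i}{\out^\ns\mid\ptb_{-i}}$ is correct and genuinely different from the paper's. The paper works with the mixture distributions $\p_{+i}^{\out^\ns},\p_{-i}^{\out^\ns}$ throughout: it bounds $\mutualinfo{\ptb_i}{\out^\ns}$ by $\tfrac12\kldivsym{\p_{+i}^{\out^\ns}}{\p_{-i}^{\out^\ns}}$, applies the chain rule to these mixtures, and only then invokes joint convexity of the chi-square divergence to decompose into a uniform average of the per-$\ptb$ terms $\chi^2(\p_\ptb^{\out_j|\out^{j-1}}\|\p_{\ptb^{\oplus i}}^{\out_j|\out^{j-1}})$. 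Your conditioning step reaches the same per-$\ptb$ comparison directly and sidesteps the delicate point that the conditional of a mixture is not the uniform mixture of conditionals; the paper's route stays closer to how Assouad is usually phrased.

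There is, however, a real gap in how you treat the truncation, and you have located it in the wrong place. The denominator bound $\p(\out_t)\ge\Tr[M_{\out_t}]/(2\dims)$ is \emph{not} the obstacle: the truncation in \cref{def:perturbation} enforces $\opnorm{\barDelta_\ptb}\le 1/(2\dims)$ for \emph{every} $\ptb$, so $\sigma_\ptb\succeq\qmm/2$ always holds. The problem is your numerator. You write the two conditional states as $\qmm+\Delta\pm\frac{c\eps}{\dims\sqrt{\dims}}V_i$, but once the truncation factor $\min\{1,1/(2\dims\opnorm{\Delta_\ptb})\}$ is active it depends on \emph{all} coordinates of $\ptb$, so $\barDelta_\ptb-\barDelta_{\ptb^{\oplus i}}$ is no longer $\frac{2c\eps}{\sqrt{\dims\ell}}\ptb_iV_i$ and your identity for $\p_+(\out_t)-\p_-(\out_t)$ fails. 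The paper handles this by splitting on the good set $\mathcal{G}=\{\ptb:\opnorm{W_\ptb}\le\kappa_\alpha\sqrt{\dims}\}$: on $\mathcal{G}$ (probability at least $1-2e^{-\alpha\dims}$) neither $\ptb$ nor $\ptb^{\oplus i}$ is truncated and your formula is exact; off $\mathcal{G}$ one uses the crude bound $\hsnorm{\barDelta_\ptb}\le c\eps/\sqrt{\dims}$ together with $\opnorm{\Choi_\POVM}\le 1$. This split is what produces the additional $16e^{-\alpha\dims}\ns c^2\eps^2$ term in the formal \cref{thm:avg-MI-upper}, which is then absorbed into the leading term for $\dims$ large.
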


\paragraph{Proof of Pauli tomography lower bound} Recall that we choose $V_1, \ldots, V_{\dims^2/2}$ as the normalized Pauli observables with the largest weights. This roughly consists of all Pauli observables with weights at least $\frac{3\nqubits}{4}$. Using \cref{lem:mic-pauli-informal},  for all Pauli measurement $\POVM$, we have $\vadj{V_i} \Choi_{\POVM} \vvec{V_i}=1$ if the non-identity components of $V_i$ match with $\POVM$ and 0 otherwise. Thus,
\[
\sum_{i=1}^{\ell} \vadj{V_i} \Choi_{\POVM} \vvec{V_i}=\sum_{w=3\nqubits/4}^{\nqubits}{\nqubits\choose w}=\sum_{w=0}^{\nqubits/4}{\nqubits\choose w}= \bigO{2^{\nqubits h(1/4)}}.
\]
The first equality is because there are ${\nqubits\choose w}$  $V_i$'s of weight $w$ with matching non-identity components. 
The final step is due to Stirling's approximation (see \cref{lem:sum-binomial-coef}). 
Due to linearity, the above expression holds for all convex combinations of $\Choi_{\POVM}$. Combining with \eqref{equ:avg-MI-lower} and \eqref{equ:avg-MI-partial-informal}, and noting that $\ell=\dims^2/2$,
\[
\ns =\bigOmega{\frac{2^{\nqubits(4-h(1/4))}}{\eps^2}}.
\]
Finally note that $2^{4-h(1/4)}\ge 9.118$. This completes the lower bound proof.

\paragraph{Plug-and-play lower bound} The summation in \cref{equ:avg-MI-partial-informal} can be further bounded as $\sum_{i=1}^{\ell} \vadj{V_i} \Choi_{\POVM} \vvec{V_i}\le \tracenorm{\Choi_{\POVM}}=\tracenorm{\Luders_{\POVM}}$.
Combining with \eqref{equ:avg-MI-lower}, we have a more convenient result,
\begin{corollary}
    Let $\eps\le 1/200$ and $\povmset$ be the set of allowed measurements for each copy. Then the sample complexity of adaptive tomography with constraint $\povmset$ is
    \[
    \ns = \bigOmega{\frac{\dims^4}{\eps^2\sup_{\POVM\in {\povmset}}\tracenorm{\Luders_{\POVM}}}}.
    \]
\end{corollary}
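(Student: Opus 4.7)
The plan is to chain together three results already stated in the excerpt and then solve for $\ns$. First I would invoke \cref{prop:tomography-guess}, which asserts that any tomography algorithm with trace-distance accuracy $\eps$ produces an estimator $\hat \ptb\in\{-1,1\}^{\ell}$ (where $\ell=\dims^2/2$) satisfying $\frac{1}{\ell}\sum_{i=1}^{\ell}\probaOf{\ptb_i\ne\hat \ptb_i}\le 0.41$. Plugging this into the Fano-type inequality \eqref{equ:avg-MI-lower} and using the fact that the binary entropy $h$ is increasing on $[0,1/2]$ yields the constant lower bound
\[
\frac{1}{\ell}\sum_{i=1}^{\ell}\mutualinfo{\ptb_i}{\out^\ns}\ge 1-h(0.41)=\bigOmega{1}.
\]

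Next I would upper-bound the same average mutual information via \cref{thm:avg-MI-upper-informal} and then discharge the dependence on the particular orthonormal directions $\{V_i\}$ by routing it through the trace norm of the MIC. The key observation is that $\Choi_\POVM=\sum_x \vvec{M_x}\vadj{M_x}/\Tr[M_x]$ is manifestly positive semidefinite, since it is a nonnegative combination of rank-one outer products. Consequently, for any orthonormal family $\{\vvec{V_i}\}_{i=1}^{\ell}\subseteq\C^{\dims^2}$,
\[
\sum_{i=1}^{\ell}\vadj{V_i}\Choi_\POVM\vvec{V_i}\le \Tr[\Choi_\POVM]=\tracenorm{\Choi_\POVM}=\tracenorm{\Luders_\POVM},
\]
where the first equality uses that the trace norm of a PSD matrix coincides with its trace, and the second uses that $\Choi_\POVM$ is the matrix representation of $\Luders_\POVM$ under vectorization, so their singular values (and hence trace norms) agree.

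Combining the lower and upper bounds on the average mutual information gives $\bigOmega{1}\le \frac{c_1\ns\eps^2}{\ell^2}\sup_{\POVM\in\povmset}\tracenorm{\Luders_\POVM}$, and rearranging with $\ell=\dims^2/2$ yields the claimed $\ns=\bigOmega{\dims^4/(\eps^2\sup_{\POVM}\tracenorm{\Luders_\POVM})}$. There is no real obstacle in the present argument: the technical heavy lifting has already been done in \cref{thm:avg-MI-upper-informal}, where adaptivity is handled by the chain rule for mutual information, and in the Hamming separation \cref{lemma:hamm-separation-informal}. The only substantive point here is recognizing that the relaxation $\sum_i \vadj{V_i}\Choi_\POVM\vvec{V_i}\le \tracenorm{\Luders_\POVM}$ is a free, \emph{measurement-independent} bound --- loose whenever the heavy eigenvectors of $\Choi_\POVM$ lie off $\Span\{\vvec{V_i}\}$ (which is why the tailored direction-aware analysis is needed to get $9.118^\nqubits$ for Pauli measurements), but convenient enough to yield a universal plug-and-play lower bound for arbitrary constraint sets $\povmset$.
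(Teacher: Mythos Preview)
Your proposal is correct and matches the paper's approach essentially line for line: the paper derives the corollary by combining the constant lower bound on the average mutual information (via \cref{prop:tomography-guess} and \eqref{equ:avg-MI-lower}) with the upper bound \eqref{equ:avg-MI-partial-informal}, and then relaxes $\sum_i \vadj{V_i}\Choi_\POVM\vvec{V_i}\le \Tr[\Choi_\POVM]=\tracenorm{\Luders_\POVM}$ using positive semidefiniteness of $\Choi_\POVM$, exactly as you do. Your closing remark about why this relaxation is lossy for Pauli measurements is also consistent with the paper's narrative.
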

Thus we provide a plug-and-play adaptive tomography lower bound for all measurement constraints $\povmset$ without going through the complications of adaptivity. We demonstrate it for finite-outcome measurements,
\begin{corollary} \label{cor:finite-lower}
    By \cite[Lemma 2.3]{liu2024restricted}, $\tracenorm{\Luders_{\POVM}}\le \min\{\ab, \dims\}$ for $\POVM$ with at most $\ab$ outcomes. Thus, the sample complexity lower bound for adaptive tomography with $\ab$-outcome measurements is
    \[
    \ns = \bigOmega{\frac{\dims^4}{\eps^2\min\{\ab, \dims\}}}.
    \]
\end{corollary}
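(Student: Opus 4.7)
The plan is to combine the plug-and-play lower bound from the immediately preceding corollary with the trace-norm estimate for $\ab$-outcome MICs cited from \cite[Lemma 2.3]{liu2024restricted}. Concretely, the earlier corollary gives, for any measurement constraint set $\povmset$,
\[
\ns = \bigOmega{\frac{\dims^4}{\eps^2 \sup_{\POVM\in\povmset}\tracenorm{\Luders_{\POVM}}}},
\]
so it suffices to take $\povmset$ to be the family of all POVMs with at most $\ab$ outcomes and uniformly bound $\tracenorm{\Luders_{\POVM}}\le \min\{\ab,\dims\}$ on this family.

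For the bound $\tracenorm{\Luders_{\POVM}}\le \ab$, I would work with the matrix representation from \cref{def:mic}: $\Choi_{\POVM}=\sum_x \vvec{M_x}\vadj{M_x}/\Tr[M_x]$ is a sum of rank-one PSD matrices and hence is itself PSD, so its trace norm equals its trace, giving
\[
\tracenorm{\Luders_{\POVM}} \;=\; \sum_x \frac{\Tr[M_x^2]}{\Tr[M_x]} \;\le\; \sum_x 1 \;\le\; \ab,
\]
where the inequality uses $M_x^2\preceq M_x$, which follows from $0\preceq M_x\preceq \eye_\dims$. The complementary bound $\tracenorm{\Luders_{\POVM}}\le \dims$, which is the non-trivial part, I would quote directly as \cite[Lemma 2.3]{liu2024restricted}; its proof leverages the fact that $\Luders_{\POVM}$ is a unital completely positive map, so its action on the identity direction is already $\dims$-bounded and the remaining spectrum is controlled by combining this with the PSD structure of $\Choi_{\POVM}$.

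Substituting $\sup_{\POVM}\tracenorm{\Luders_{\POVM}}\le \min\{\ab,\dims\}$ into the plug-and-play corollary immediately gives the claimed lower bound $\ns=\bigOmega{\frac{\dims^4}{\eps^2\min\{\ab,\dims\}}}$. There is really no obstacle at this stage: the novel adaptive-measurement framework from the previous sections has already been packaged as a dependence on $\tracenorm{\Luders_{\POVM}}$ alone, so the proof reduces to quoting a uniform spectral estimate for this quantity on the $\ab$-outcome constraint class. The only bookkeeping to check is that the bound from \cite{liu2024restricted} holds uniformly over the (infinite) family of $\ab$-outcome POVMs, which is precisely what that lemma asserts.
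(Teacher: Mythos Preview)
Your proposal is correct and matches the paper's approach exactly: the paper treats this corollary as an immediate consequence of the plug-and-play bound $\ns=\Omega\big(\dims^4/(\eps^2\tracenorm{\povmset})\big)$ together with the cited estimate $\tracenorm{\Luders_\POVM}\le\min\{\ab,\dims\}$, and you have simply spelled out those two steps. Your added self-contained argument for the $\le\ab$ half (via $M_x^2\preceq M_x$) is a nice bonus but not required, since both halves are already covered by the cited lemma.
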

This recovers the lower bound in \cite{lowe2022lower} for constant-outcome measurements (in fact improves it by a $\log\dims$ factor) and the single-copy adaptive lower bound in \cite{chen2023does}. Furthermore, we show in \cref{sec:finite-upper} that the lower bound is tight up to log factors for general $\ab$.

\subsection{Refined upper bound analysis}
Like the previous work of \cite{guctua2020fast}, we use a non-adaptive scheme where we apply each Pauli measurement to the same number of copies $m\eqdef\ns/3^{\nqubits}$. Each Pauli measurement contributes to the empirical estimate $\hat{\alpha}_P$ of $\alpha_P$ in \eqref{equ:pauli-decomposition} where $P$ has matching non-identity components. Our estimator is given by 
\[
\hat{\rho}=\frac{\eye_\dims}{\dims}+\sum_{P}\hat{\alpha_P}P.
\]
Using the fact that a weight-$w$ Pauli observable can be learned with $3^{\nqubits-w}$ Pauli measurements, 
we compute the expected squared-$\ell_2$ distance between the coefficients ${\sum_{P}|\hat{\alpha}_P-\alpha_P|^2}$,
which is exactly $\hsnorm{\hat{\rho}-\rho}^2$. 
Finally we use Cauchy-Schwarz $\hsnorm{\hat{\rho}-\rho}\ge \tracenorm{\hat{\rho}-\rho}/\sqrt{\dims}$ and Jensen's inequality to obtain the error in trace distance. 
Details can be found in \cref{sec:pauli-upper}. 
\section{Preliminaries}
\subsection{Quantum state and POVM}
We use the Dirac notation $\qbit{\psi}$ to denote a vector in $\C^{\dims}$. $\qadjoint{\psi}\eqdef(\qbit{\psi})^\dagger$ is a row vector. $\qdotprod{\psi}{\phi}$ is the Hibert-Schmidt inner product of $\qbit{\psi}$ and $\qbit{\phi}$. We denote the set of all $\dims\times\dims$ Hermitian matrices by $\Herm{\dims}$. A $\dims$-dimensional quantum system is described by a positive-semidefinite Hermitian matrix $\rho\in\Herm{\dims}$ with $\Tr[\rho]=1$. We assume $\dims=2^{\nqubits}$ where $\nqubits$ is the number of qubits in the system.

Measurements are formulated as \emph{positive operator-valued measure} (POVM). Let $\outset$ be an outcome set. Then a POVM $\POVM=\{M_x\}_{x\in \outset}$, where $M_x$ is p.s.d. and $\sum_{x\in \outset}M_x=\eye_\dims$. Let $X$ be the outcome of measuring $\rho$ with $\POVM$, then the probability observing $x\in\outset$ is given by the \emph{Born's rule},
\[
\probaOf{X=x}=\Tr[\rho M_x].
\]
\subsection{Hibert space over linear operators}

\paragraph{Hilbert space over complex matrices}
The space of complex matrices $\C^{\dims\times\dims}$ is a Hilbert space with inner product $\hdotprod{A}{B}\eqdef\Tr[A^\dagger B], A, B\in \C^{\dims\times \dims}$. For Hermitian matrices $A,B$, $\hdotprod{A}{B}=\hdotprod{B}{A}\in \R$. Thus the subspace of Hermitian matrices $\Herm{\dims}$ is a \textit{real} Hilbert space (i.e. the associated field is $\R$) with the same matrix inner product. 

Vectorization defines a homomorphism between $\C^{\dims\times\dims}$ and $\C^{\dims^2}$. 
On the canonical basis $\{\qbit{j}\}_{j=0}^{\dims-1}$, $\VecOp(\qoutprod{i}{j})\eqdef \qbit{j}\otimes \qbit{i}$. For convenience we denote $\vvec{A}\eqdef\VecOp(A)$. It is straightforward that $\hdotprod{A}{B}=\vvdotprod{A}{B}$. 

\paragraph{(Linear) superoperators} Let $\mathcal{N}:\C^{\dims\times \dims}\mapsto \C^{\dims\times \dims}$ be a linear operator over $\C^{\dims\times \dims}$, which we refer to as superoperators\footnote{This is to distinguish from a matrix $A\in\C^{\dims\times \dims}$, which can be viewed as an operator over $\C^\dims$. Indeed an operator over $\C^{\dims\times \dims}$ need not be linear, but we only deal with linear ones in this work, so we drop ``linear'' for brevity.}. Every superoperator $\mathcal{N}$ has a matrix representation $\Choi(\mathcal{N})\in \C^{\dims^2\times\dims^2}$ that satisfies $\vvec{\mathcal{N}(X)}=\Choi(\mathcal{N})\vvec{X}$ for all matrices $X\in\C^{\dims\times\dims}$. It can be verified that for the measurement information channel $\Luders_{\POVM}$ in~\cref{def:mic}, $\Choi_{\POVM}\vvec{A}=\vvec{\Luders_{\POVM}(A)}$.

\paragraph{Schatten norms} Let $\Lambda=(\lambda_1, \ldots, \lambda_\dims)\ge 0$ be the \emph{singular values} of a linear operator $A$, which can be a matrix or a superoperator. {For Hermitian matrices, the singular values are the absolute values of the eigenvalues.} Then for $p\ge 1$, the \emph{Schatten $p$-norm} is defined as 
$
\|A\|_{S_p}\eqdef \|\Lambda\|_p
$. The Schatten norms of a superoperator $\mathcal{N}$ and its matrix representation $\Choi(\mathcal{N})$ match exactly, $\|\mathcal{N}\|_{S_p}=\|\Choi(\mathcal{N})\|_{S_p}$. Some important examples are trace norm $\tracenorm{A}\eqdef\|A\|_{S_1}$, Hilbert-Schmidt norm $\hsnorm{A}\eqdef\|A\|_{S_2}=\sqrt{\hdotprod{A}{A}}$, and operator norm $\opnorm{A}\eqdef\|A\|_{S_\infty}=\max_{i=1}^\dims\lambda_i$.

\subsection{Problem setup}
There are $\ns$ copies of $\rho$ and a random seed $R$. We can apply adaptive measurements $\POVM^\ns = (\POVM_1, \ldots, \POVM_\ns)$ to each copy, where $\POVM_i=\{M_x^i\}_{x\in\cX}$. Let $\out_0=R$, and for $i\ge 1$ let $\out_i$ be the outcome of measuring the $i$th copy with $\POVM_i$. Define $\out^t=(\out_0,\out_1, \ldots, \out_t)$. Then we can write $\POVM_i=\POVM_i(\out^{i-1})$.

\paragraph{Tomography} The goal is to design a measurement scheme $\POVM^\ns$ and an estimator $\qest:\outset^\ns\mapsto \Herm{\dims}$ such that
\[
\inf_{\rho}\probaOf{\tracenorm{\qest(\out^\ns)-\rho}\le \eps}\ge0.99.
\]

Given measurements $\POVM^\ns$, when the state is $\rho$, the distribution of $\out_i,i\ge 1$ is determined by Born's rule,
\begin{equation}
    \p_\rho^{\out_i|\out^{i-1}}(x)=\Tr[M_x^i\rho],\label{equ:distr-i-cond}
\end{equation}
which depends on all previous outcomes and the random seed $R$. For $1\le t\le \ns$, we further define $\p_\rho^{\out^t}$ as the distribution of $\out^t$ when the state is $\rho$.

In practice, we may have restrictions on the types of measurements that can be applied. We use $\povmset$ to denote the set of allowable measurements for each copy. Define the following quantities which are related to the norms of the measurement information channel in this family of measurements,
\begin{align}
\norm{\povmset}=\sup_{\POVM\in\povmset}\norm{\Luders_{\POVM}},
\label{equ:max-povm-norm}
\end{align}
where $\norm{\cdot}$ can be any norms for linear operators, including $\tracenorm{\cdot}, \hsnorm{\cdot}$, and $\opnorm{\cdot}$.

\subsection{Pauli measurements}
For a single-qubit system, the Pauli operators $\Sigma=\{\pauliX, \pauliY, \pauliZ\}$ are defined in \cref{equ:pauli-ops} An important property is that for $P,Q\in \Sigma\cup \{\eye_2\}$,
\begin{equation}
P^2=\eye_2,\;\Tr[PQ]=2\indic{P=Q}.    
\label{equ:pauli-property}
\end{equation}

 Let $\qbit{0}$ and $\qbit{1}$ be the computation basis for a single-qubit system. Define the following states
\[
\qbit{+}\eqdef \frac{1}{\sqrt{2}}(\qbit{0}+\qbit{1}),\; \qbit{-}\eqdef \frac{1}{\sqrt{2}}(\qbit{0}-\qbit{1}),
\;
\qbit{+\img}\eqdef \frac{1}{\sqrt{2}}(\qbit{0}+\img\qbit{1}),\; \qbit{-\img}\eqdef \frac{1}{\sqrt{2}}(\qbit{0}-\img\qbit{1}).
\]
Note that both $\{\qbit{+},\qbit{-}\}$ and $\{\qbit{+\img},\qbit{-\img}\}$ are orthonormal bases for $\C^2$. Furthermore,

\[
\sigma_X\qbit{+}=\qbit{+}, \quad\sigma_X\qbit{-}=-\qbit{-},
\]
\[
\sigma_Y\qbit{+\img}=-\qbit{+\img}, \quad\sigma_X\qbit{-\img}=-\qbit{-\img},
\]
\[
\sigma_Z\qbit{0}=\qbit{0}, \quad\sigma_Z\qbit{1}=-\qbit{1},
\]
Thus Pauli operators have eigenvalues of either $1$ or $-1$. We refer to $\{\qbit{+},\qbit{-}\}$ as the $X$ basis, $\{\qbit{+\img},\qbit{-\img}\}$ as the $Y$ basis, and $\{\qbit{0},\qbit{1}\}$ as the $Z$ basis because they are the eigenvectors of the respective Pauli operators.

\paragraph{Pauli (basis) measurement} For an $\nqubits$-qubit system, we can independently measure in the $X$, $Y$, or $Z$ basis for each qubit. This results in a basis measurement with $2^\nqubits$ outcomes, which we denote by $\{-1, 1\}^{\nqubits}$. We call this a \emph{Pauli basis measurement}, or Pauli measurement in short. Formally, given $P=\sigma_1\otimes\cdots\otimes\sigma_{\nqubits}$ where $\sigma_i\in \Sigma=\{\pauliX, \pauliY, \pauliZ\}$, the Pauli measurement is defined as
\begin{equation}
    \POVM_{P}=\{M_x^P\}_{x\in \{-1, 1\}^{\nqubits}},\; M_x^\sigma\eqdef \bigotimes_{j=1}^{\nqubits}\frac{\eye_2+x_j\sigma_j}{2},x=(x_1, \ldots, x_{\nqubits})\in\{-1, 1\}^{\nqubits}.
    \label{equ:pauli-measurement}
\end{equation}

\paragraph{Pauli observable} A weaker type of Pauli measurement is defined by the Pauli observables.
\begin{definition}
    A \emph{Pauli observable} $P\in \C^{\dims\times\dims}$ can be written as
\[
P=\sigma_1\otimes\cdots\otimes \sigma_{\nqubits}, \sigma_j\in\Sigma\cup\{\eye_2\}, P\ne \eye_\dims.
\]
The \emph{weight} of $P$ is the number of $\sigma_j$'s in $P$ such that $\sigma_j\ne \eye_2$, denoted as $w(P)$. We sometimes omit $P$ when the Pauli observable is clear from context.
\end{definition}

The set of Pauli observables $\pauliObsSet\eqdef (\Sigma\cup \eye_2)^{\otimes\nqubits}\setminus\{\eye_\dims\}$ consists of $4^{\nqubits}-1=\dims^2-1$ matrices. Each $P\in\pauliObsSet$ defines a 2-outcome POVM,
\[
\ObsPOVM_P=\{M_{-1}^P, M_{1}^P\},\; M_{-1}^P=\frac{\eye_\dims - P}{2}, M_1^P=\frac{\eye_\dims+P}{2}.
\]

We have the standard fact about Pauli observables,
\begin{fact}
\label{fact:pauli}
    Let $P, Q\in \mathcal{P}$ be two Pauli observables. Then, 
\[
P^2=\eye_\dims, \;\Tr[P]=0,\; \Tr[PQ]=\hdotprod{P}{Q}=\dims\indic{P=Q}.
\]
\end{fact}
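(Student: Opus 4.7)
The plan is to reduce everything to the single-qubit identities in \cref{equ:pauli-property} by exploiting the mixed-product property of the tensor product, namely $(A\otimes B)(C\otimes D) = AC\otimes BD$, and the fact that $\Tr[A\otimes B] = \Tr[A]\Tr[B]$.

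First I would write $P = \sigma_1\otimes\cdots\otimes\sigma_\nqubits$ with each $\sigma_j \in \Sigma\cup\{\eye_2\}$. Applying the mixed-product rule factor by factor gives $P^2 = \sigma_1^2\otimes\cdots\otimes\sigma_\nqubits^2$, and since \cref{equ:pauli-property} (and the trivial case $\eye_2^2 = \eye_2$) yields $\sigma_j^2 = \eye_2$ for every $j$, we conclude $P^2 = \eye_2^{\otimes\nqubits} = \eye_\dims$. For the trace, the tensor-product identity gives $\Tr[P] = \prod_{j=1}^{\nqubits}\Tr[\sigma_j]$; because $P \ne \eye_\dims$ at least one $\sigma_j$ lies in $\Sigma$, and each element of $\Sigma$ is traceless, so one zero factor kills the product.

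Next, for $\Tr[PQ]$ with $Q = \tau_1\otimes\cdots\otimes\tau_\nqubits$, the mixed-product rule gives $PQ = (\sigma_1\tau_1)\otimes\cdots\otimes(\sigma_\nqubits\tau_\nqubits)$, hence
\[
\Tr[PQ] = \prod_{j=1}^{\nqubits}\Tr[\sigma_j\tau_j].
\]
By the single-qubit identity $\Tr[\sigma_j\tau_j] = 2\,\indic{\sigma_j = \tau_j}$ from \cref{equ:pauli-property} (checking directly the edge case where one or both factors equal $\eye_2$: if both equal $\eye_2$ the trace is $2$, matching $2\cdot 1$; if exactly one equals $\eye_2$ the product lies in $\Sigma$ and is traceless, matching $2\cdot 0$), the product becomes $2^{\nqubits}\prod_j\indic{\sigma_j = \tau_j} = \dims\,\indic{P = Q}$. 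Finally, each $\sigma_j$ is Hermitian and the tensor product of Hermitian matrices is Hermitian, so $P^\dagger = P$ and thus $\hdotprod{P}{Q} = \Tr[P^\dagger Q] = \Tr[PQ]$, giving the last equality.

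There is no real obstacle here; the only place to be slightly careful is making sure the single-qubit identity $\Tr[\sigma\tau] = 2\,\indic{\sigma = \tau}$ is valid on all of $\Sigma\cup\{\eye_2\}$ (not just on $\Sigma$), which is immediate by case analysis as above.
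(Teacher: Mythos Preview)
Your proof is correct. The paper does not actually prove \cref{fact:pauli}; it is stated as a ``standard fact'' and left without justification, so there is no paper proof to compare against. Your reduction to the single-qubit identities via the mixed-product and trace-factorization properties of the tensor product is exactly the natural argument. One minor remark: the paper's \cref{equ:pauli-property} is already stated for all $P,Q\in\Sigma\cup\{\eye_2\}$, so your separate edge-case verification for the identity factor is redundant (though not wrong).
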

Therefore, the set $\pauliObsSet\cup \{\eye_\dims\}$ forms an orthogonal basis for $\Herm{\dims}$. We can represent any state $\rho$ as,
\begin{equation}
    \rho = \frac{\eye_\dims}{\dims}+\sum_{P\in \pauliObsSet}\frac{\Tr[\rho P]P}{\dims}.
    \label{equ:state-pauli}
\end{equation}
\subsection{Probability distances}
Let $\p$ and $\q$ be distributions over a finite domain $\mathcal{X}$. The \emph{total variation distance} is defined as 
\[
\totalvardist{\p}{\q}\eqdef\sup_{S\subseteq\mathcal{X}}(\p(S)-\q(S))=\frac{1}{2}\sum_{x\in\mathcal{X}}|\p(x)-\q(x)|.
\]
The KL-divergence  is
\[
\kldiv{\p}{\q}\eqdef\sum_{x\in\mathcal{X}}\p(x)\log\frac{\p(x)}{\q(x)}.
\]
The symmetric KL-divergence is 
$\kldivsym{\p}{\q}=\frac{1}{2}(\kldiv{\p}{\q}+\kldiv{\q}{\p})$. The chi-square divergence
\[
\chisquare{\p}{\q}\eqdef \sum_{x\in\mathcal{X}}\frac{(\p(x)-\q(x))^2}{\q(x)}.
\]
By Pinsker's inequality and concavity of logarithm,
\[
2\totalvardist{\p}{\q}^2\le \kldiv{\p}{\q}\le \chisquare{\p}{\q}.
\]
We define $\ell_p$ distance as $
\norm{\p-\q}_p\eqdef\Paren{\sum_{x\in\mathcal{X}}{|\p(x)-\q(x)|^p}}^{1/p}.
$

\section{Adaptive tomography lower bound}
\subsection{Lower bound construction}
\begin{definition}
 \label{def:perturbation}
     Let $\dims^2/2\le\ell\le\dims^2-1$ and $\hbasis=(V_1, \ldots, V_{\dims^2}=\frac{\eye_\dims}{\sqrt{\dims}})$ be an orthonormal basis of $\Herm{\dims}$, and $\cd$ be a universal constant. Let  $\ptb=(\ptb_1, \ldots, \ptb_\ell)$ be uniformly drawn from $\{-1, 1\}^\ell$,
     \begin{equation}
         \Delta_{\ptb} = \frac{\cd\eps}{\sqrt{\dims}}\cdot\frac{1}{\sqrt{\ell}}\sum_{i=1}^\ell \ptb_iV_i, \quad \barDelta_{\ptb}= \Delta_{\ptb}\min\left\{1, \frac{1}{2\dims \opnorm{\Delta_{\ptb}}}\right\},
         \label{equ:delta_z}
     \end{equation}
     Finally we set $\sigma_{\ptb}=\qmm + \barDelta_{\ptb}$ whose distribution we denote as $\ptbDistr(\hbasis)$.
 \end{definition}

The construction adds independent binary perturbations to $\qmm$ along $\ell$ orthogonal trace-0 directions. With appropriate constant $\cd$, $\ptbDistr(\hbasis)$ has an exponentially small probability mass outside the set $\mathcal{P}_\eps\eqdef\{\rho: \tracenorm{\rho-\qmm}>\eps\}$.
\begin{theorem}[{\cite[Corollary 4.4]{liu2024role}}]
\label{prop:perturbation-trace-distance}
    Let $\cd= 10\sqrt{2}$, $\ell\ge \dims^2/2$, $\eps<1/200$. Then for $\sigma\sim \ptbDistr(\hbasis)$,  $\|\sigma-\qmm\|_1\ge \eps$ with probability at least $1-2\exp(-\dims)$. 
\end{theorem}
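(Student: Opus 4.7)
The goal is to show $\|\sigma_\ptb - \qmm\|_1 = \|\bar\Delta_\ptb\|_1 \ge \eps$ with probability $\ge 1 - 2\exp(-d)$; this suffices because the truncation built into the definition of $\bar\Delta_\ptb$ guarantees $\opnorm{\bar\Delta_\ptb} \le 1/(2d)$, making $\sigma_\ptb = \qmm + \bar\Delta_\ptb$ automatically a valid quantum state (PSD since $\qmm$ has minimum eigenvalue $1/d$, and trace $1$ since each $V_1,\dots,V_\ell$ is traceless). The plan is to combine three ingredients: a deterministic Hilbert--Schmidt identity, a matrix concentration bound on $\opnorm{\Delta_\ptb}$, and the duality inequality $\|A\|_1 \ge \hsnorm{A}^2/\opnorm{A}$.

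First, orthonormality $\tr(V_iV_j)=\delta_{ij}$ gives immediately the deterministic identity
\[
\hsnorm{\Delta_\ptb}^2 = \frac{\cd^2\eps^2}{d\ell}\sum_{i,j=1}^\ell \ptb_i\ptb_j\tr(V_iV_j) = \frac{\cd^2\eps^2}{d}.
\]
Next I would apply the matrix Rademacher (Khintchine) inequality to $\sum_i \ptb_i V_i$: the variance proxy is $\sigma^2 = \opnorm{\sum_{i=1}^\ell V_i^2}$, which I would bound using the completeness identity $\sum_{i=1}^{d^2} V_i^2 = d\,\eye_d$ valid for any orthonormal basis of $\Herm{d}$. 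This identity follows from the matrix-entry completeness relation $\sum_i (V_i)_{ba}(V_i)_{cd}=\delta_{ac}\delta_{bd}$ combined with Hermiticity $(V_i)^*_{ab}=(V_i)_{ba}$. Since $V_{d^2}^2 = \eye_d/d$, this gives $\sum_{i=1}^{d^2-1} V_i^2 = \tfrac{d^2-1}{d}\eye_d$, and hence $\sigma^2 \le (d^2-1)/d$ for any prefix of length $\ell \le d^2-1$. The Rademacher tail then yields a high-probability bound on $\opnorm{\Delta_\ptb}$ with failure probability tunable to $\le 2\exp(-d)$.

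Finally I combine the two ingredients through the duality inequality applied to $\bar\Delta_\ptb$ itself. Since $\opnorm{\bar\Delta_\ptb}\le 1/(2d)$ always,
\[
\|\bar\Delta_\ptb\|_1 \;\ge\; \frac{\hsnorm{\bar\Delta_\ptb}^2}{\opnorm{\bar\Delta_\ptb}} \;\ge\; 2d\,\hsnorm{\bar\Delta_\ptb}^2.
\]
When no truncation occurs, $\hsnorm{\bar\Delta_\ptb}^2 = \cd^2\eps^2/d$ and the bound yields $\|\bar\Delta_\ptb\|_1 \ge 2\cd^2\eps^2 = 400\eps^2$ for $\cd=10\sqrt{2}$, which dominates $\eps$ precisely in the stated regime. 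When truncation is triggered, the shrinkage factor is $\alpha = 1/(2d\opnorm{\Delta_\ptb})$ and $\hsnorm{\bar\Delta_\ptb}^2 = \alpha^2\hsnorm{\Delta_\ptb}^2$; the operator-norm bound from the concentration step controls $\alpha$ from below and keeps the product $2d\,\hsnorm{\bar\Delta_\ptb}^2$ above $\eps$.

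The main obstacle is reconciling the operator-norm concentration at the tight confidence $1-2\exp(-d)$ with the required trace-norm lower bound. At this confidence level the matrix Rademacher tail is coarse, and one must carefully propagate it through the truncation's shrinkage factor so that the duality bound still exceeds $\eps$; the specific constants $\cd = 10\sqrt{2}$ and $\eps < 1/200$ are exactly what balance the deterministic HS identity $\hsnorm{\Delta_\ptb}^2 = \cd^2\eps^2/d$, the operator-norm concentration radius, and the truncation threshold $1/(2d)$ simultaneously.
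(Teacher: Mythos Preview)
The paper does not prove this statement; it imports it verbatim from \cite[Corollary~4.4]{liu2024role} and only remarks that it is ``the result of a random matrix concentration,'' namely \cref{thm:rand-mat-opnorm-concentration}. So there is no paper-side argument to compare against beyond: on the high-probability event $\opnorm{W_z}\le\kappa_1\sqrt{d}$, deduce $\tracenorm{\Delta_z}\ge\eps$.

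Your proposal assembles the right three ingredients (the deterministic identity $\hsnorm{\Delta_z}^2=\cd^2\eps^2/\dims$, an operator-norm concentration, and the duality $\tracenorm{A}\ge\hsnorm{A}^2/\opnorm{A}$), but the way you combine them has a genuine gap. In the no-truncation case you plug in the \emph{worst-case} bound $\opnorm{\barDelta_z}\le 1/(2\dims)$ and obtain $\tracenorm{\barDelta_z}\ge 2\cd^2\eps^2=400\eps^2$. This is \emph{not} $\ge\eps$ ``in the stated regime'': $400\eps^2\ge\eps$ is equivalent to $\eps\ge 1/400$, whereas the theorem must cover all $\eps<1/200$, including, say, $\eps=10^{-6}$ where $400\eps^2\ll\eps$. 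The truncation case you sketch suffers the same defect, since there $\opnorm{\barDelta_z}=1/(2\dims)$ exactly and the shrinkage only makes the bound worse.

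The fix is to use the concentration bound itself---not the truncation cap---in the duality step. On the event $\opnorm{W_z}\le\kappa_1\sqrt{\dims}$ (probability $\ge 1-2e^{-\dims}$ by \cref{thm:rand-mat-opnorm-concentration} with $\alpha=1$) one has $\opnorm{\Delta_z}\le \cd\eps\kappa_1/\sqrt{\ell}\le \cd\eps\kappa_1\sqrt{2}/\dims$, which for $\eps<1/200$ is already below $1/(2\dims)$ so $\barDelta_z=\Delta_z$. Then duality gives
\[
\tracenorm{\Delta_z}\ \ge\ \frac{\hsnorm{\Delta_z}^2}{\opnorm{\Delta_z}}\ \ge\ \frac{\cd^2\eps^2/\dims}{\cd\eps\kappa_1\sqrt{2}/\dims}\ =\ \frac{\cd\,\eps}{\kappa_1\sqrt{2}}\ =\ \frac{10\,\eps}{\kappa_1},
\]
which is $\ge\eps$ once $\kappa_1\le 10$; this is where the choice $\cd=10\sqrt{2}$ actually enters. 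A secondary caution: the plain matrix Rademacher tail $2\dims\exp(-t^2/2\sigma^2)$ with $\sigma^2\le\dims$ does \emph{not} by itself yield a constant $\kappa_1$ at confidence $1-2e^{-\dims}$ (you would need $t\asymp\sqrt{\dims}\cdot\sqrt{\dims}$, not $\sqrt{\dims}$); the paper sidesteps this by citing \cref{thm:rand-mat-opnorm-concentration} as a black box, and your argument should do the same rather than claim it follows from matrix Khintchine.
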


This is the result of a random matrix concentration.
\begin{restatable}[{\cite[Theorem 4.2]{liu2024role}}]{theorem}{randmatopnorm}
\label{thm:rand-mat-opnorm-concentration}
    Let $V_1, \ldots, V_{\dims^2}\in\C^{\dims\times \dims}$ be an orthonormal basis of $\C^{\dims\times \dims}$ and $\ptb_1, \ldots, \ptb_{\dims^2}\in\{-1, 1\}$ be independent symmetric Bernoulli random variables. Let $W=\sum_{i=1}^{\ell}\ptb_iV_i$ where $\ell\le \dims^2$. For all $\alpha>0$, there exists $\cop_\alpha$, {which is increasing in $\alpha$} such that
    \[
    \probaOf{\opnorm{W}>\cop_\alpha\sqrt{\dims}}\le 2\exp\{-\alpha\dims\}.
    \]
\end{restatable}

Let $z\sim\{-1,1\}^{\ell}$ and $\sigma_z\sim \ptbDistr(\hbasis)$ be defined in \cref{def:perturbation}. Use the shorthand $\p_z^{\out_i|\out^{i-1}}=\p_{\sigma_z}^{\out_i|\out^{i-1}}$. 
We define the following mixtures,
\begin{equation}
    \p_{+i}^{\out^\ns}\eqdef \frac{1}{2^{\ell-1}}\sum_{\ptb:\ptb_i=+1}\p_z^{\out^\ns},\quad  \p_{-i}^{\out^\ns}\eqdef \frac{1}{2^{\ell-1}}\sum_{\ptb:\ptb_i=-1}\p_z^{\out^\ns}.
\end{equation}
Which are the distributions of outcomes $\out^\ns$ when we fix the $i$th coordinate to be $+1$ and $-1$ respectively. Then we can define,
\begin{equation}
\label{equ:out-distr}
\q^{\out^\ns}\eqdef\frac{1}{2^\ell}\sum_{z\in\{-1,1\}^\ell}\p_z^{\out^\ns}=\frac{1}{2}(\p_{+i}^{\out^\ns}+\p_{-i}^{\out^\ns}).  
\end{equation}

This is exactly the distribution of $\out^\ns$ when $\sigma_z\sim \ptbDistr(\hbasis)$ and outcomes $\out^\ns$ are obtained by measuring $\sigma_z^{\otimes\ns}$ with the adaptive scheme $\POVM^\ns$.

\subsection{Mutual information upper bound via MIC}
The following theorem bounds the mutual information in terms of the measurement information channel.
\begin{theorem}
\label{thm:avg-MI-upper}
    Let $\sigma_\ptb\sim\ptbDistr(\hbasis)$ where $\ptb\sim\{-1,1\}^{\ell}$, $\out^\ns$ be the outcomes after applying $\POVM^\ns$. Then for $\dims\ge 1024$ and all $t\in[\ns]$,
    \begin{align}
         \frac{1}{\ell}\sum_{i=1}^{\ell}\mutualinfo{\ptb_i}{\out^t}&\le \frac{8 tc^2 \eps^2}{\ell^2}  \sup_{\POVM\in {\povmset}}{\sum_{i=1}^\ell \vadj{V_i} \Choi_{\POVM} \vvec{V_i}} +16\exp\{-\alpha\dims\}tc^2\eps^2 \label{equ:avg-MI-partial}\\
         &\le \frac{16tc^2\eps^2}{\ell^2}\tracenorm{\povmset}.\label{equ:avg-MI-tracenorm}
    \end{align}
\end{theorem}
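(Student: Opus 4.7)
The plan is to reduce the sum of per-coordinate mutual informations to a per-step chi-square computation tied to the measurement information channel. First, since $\ptb_1,\ldots,\ptb_\ell$ are i.i.d.\ uniform on $\{-1,+1\}$, sub-additivity of conditional entropy gives $\sum_{i=1}^{\ell}\mutualinfo{\ptb_i}{\out^t}\leq\mutualinfo{\ptb}{\out^t}$, and the chain rule yields $\mutualinfo{\ptb}{\out^t}=\sum_{s=1}^{t}\mutualinfo{\ptb}{\out_s\mid\out^{s-1}}$. For each step $s$, I use the variational characterization $\mutualinfo{Z}{X}\leq\EE_Z\kldiv{\p_Z}{Q}$, valid for any reference $Q$ (since the true marginal $\bar{\p}=\EE_Z\p_Z$ minimizes this expectation), followed by $\kldiv{\cdot}{\cdot}\leq\chisquare{\cdot}{\cdot}$. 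Taking $Q=\p_{\qmm}^{\out_s\mid\out^{s-1}}$ --- the outcome distribution under the maximally mixed state, which depends only on $\out^{s-1}$ and not on $\ptb$ --- decouples the reference from $\ptb$, and Born's rule combined with $\sigma_\ptb-\qmm=\barDelta_\ptb$ and $\Tr[M^s_x\qmm]=\Tr[M^s_x]/\dims$ yields
\[
\chisquare{\p_{\sigma_\ptb}^{\out_s\mid\out^{s-1}}}{\p_{\qmm}^{\out_s\mid\out^{s-1}}} \;=\; \dims\cdot\vadj{\barDelta_\ptb}\Choi_{\POVM^s}\vvec{\barDelta_\ptb},
\]
which ties the per-step divergence directly to the MIC of the (possibly adaptively chosen) POVM at step $s$.

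Next I split on the good event $\mathcal{G}=\{\opnorm{\Delta_\ptb}\leq 1/(2\dims)\}$, on which $\barDelta_\ptb=\Delta_\ptb$ exactly. By \cref{thm:rand-mat-opnorm-concentration}, $\probaOf{\mathcal{G}^c}\leq 2\exp\{-\alpha\dims\}$; on $\mathcal{G}^c$ I use $\vadj{\barDelta_\ptb}\Choi_\POVM\vvec{\barDelta_\ptb}\leq\opnorm{\Choi_\POVM}\hsnorm{\barDelta_\ptb}^2\leq\hsnorm{\Delta_\ptb}^2=\cd^2\eps^2/\dims$, using $\opnorm{\Choi_\POVM}\leq 1$ (an immediate consequence of $\sum_x M_x=\eye_\dims$ and the matrix Cauchy--Schwarz $|\Tr[M_xA]|^2\leq\Tr[M_x]\Tr[A^\dagger M_x A]$). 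This produces the $\exp\{-\alpha\dims\}\cdot t\cd^2\eps^2$ tail. On $\mathcal{G}$, expanding $\Delta_\ptb=(\cd\eps/\sqrt{\dims\ell})\sum_i\ptb_iV_i$ gives $\vadj{\Delta_\ptb}\Choi_\POVM\vvec{\Delta_\ptb}=(\cd^2\eps^2/(\dims\ell))\sum_{i,j}\ptb_i\ptb_j\vadj{V_i}\Choi_\POVM\vvec{V_j}$, and taking the $\ptb$-expectation with $\EE[\ptb_i\ptb_j]=\indic{i=j}$ collapses this to
\[
\EE_\ptb\bigl[\vadj{\Delta_\ptb}\Choi_\POVM\vvec{\Delta_\ptb}\bigr]\;=\;\frac{\cd^2\eps^2}{\dims\ell}\sum_{i=1}^{\ell}\vadj{V_i}\Choi_\POVM\vvec{V_i}
\]
for any \emph{fixed} POVM.

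\textbf{The main obstacle} is adaptivity: since $\POVM^s$ is determined by $\out^{s-1}$ and therefore correlated with $\ptb$, the supremum over $\POVM\in\povmset$ effectively sits \emph{inside} the $\ptb$-expectation rather than outside, so it cannot simply be pulled out of the preceding computation. I plan to resolve this by bounding $\vadj{\barDelta_\ptb}\Choi_{\POVM^s(\out^{s-1})}\vvec{\barDelta_\ptb}\leq\sup_{\POVM\in\povmset}\vadj{\barDelta_\ptb}\Choi_\POVM\vvec{\barDelta_\ptb}$ pointwise in $(\ptb,\out^{s-1})$ and then evaluating the $\ptb$-expectation of this worst-case quadratic form via the good/bad split above; the absolute constant $8$ in \eqref{equ:avg-MI-partial} absorbs the slack from this uniform pass-through together with the chi-square-for-KL inequality. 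Summing over $s\in[t]$ and dividing by $\ell$ then yields \eqref{equ:avg-MI-partial}. Finally, \eqref{equ:avg-MI-tracenorm} follows immediately: positive-semidefiniteness of $\Choi_\POVM=\sum_x\vvec{M_x}\vadj{M_x}/\Tr[M_x]$ together with orthonormality of the $V_i$ gives $\sum_{i=1}^{\ell}\vadj{V_i}\Choi_\POVM\vvec{V_i}\leq\Tr[\Choi_\POVM]=\tracenorm{\Choi_\POVM}=\tracenorm{\Luders_\POVM}$, and the exponential tail is absorbed into the leading constant via the assumption $\dims\geq 1024$.
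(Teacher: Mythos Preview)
Your reduction $\sum_i\mutualinfo{\ptb_i}{\out^t}\le\mutualinfo{\ptb}{\out^t}$ and the chi-square computation against $\p_{\qmm}$ are fine, and you have correctly diagnosed the obstacle: adaptivity makes $\POVM^s$ correlated with $\ptb$, so after your pointwise bound you are left with
\[
\EE_\ptb\Bigl[\sup_{\POVM\in\povmset}\vadj{\Delta_\ptb}\,\Choi_\POVM\,\vvec{\Delta_\ptb}\Bigr],
\]
and you claim this is within an absolute constant of $\sup_\POVM\frac{\cd^2\eps^2}{\dims\ell}\sum_i\vadj{V_i}\Choi_\POVM\vvec{V_i}$. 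That claim is false in general and is where your plan breaks. The identity $\EE_\ptb[\ptb_i\ptb_j]=\indic{i=j}$ collapses the cross terms only when $\POVM$ is held fixed during the $\ptb$-expectation; once the supremum is inside, the adversary may choose $\POVM=\POVM(\ptb)$ to align $\Choi_\POVM$ with $\vvec{\Delta_\ptb}$. Abstractly, if $\povmset$ contained for each $\ptb$ a POVM with $\Choi_\POVM$ proportional to the rank-one projector onto $\vvec{\Delta_\ptb}$, then $\sup_\POVM\vadj{\Delta_\ptb}\Choi_\POVM\vvec{\Delta_\ptb}\asymp\hsnorm{\Delta_\ptb}^2=\cd^2\eps^2/\dims$ for every $\ptb$, whereas $\sup_\POVM\sum_i\vadj{V_i}\Choi_\POVM\vvec{V_i}$ can be $O(1)$; the gap is a factor of $\ell$, not a constant. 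Nothing in your argument uses enough structure of $\povmset$ to rule this out, so the ``absorb into the constant $8$'' step is unjustified.

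The paper avoids this by \emph{not} aggregating into $\mutualinfo{\ptb}{\out^t}$. It keeps the per-coordinate quantities $\mutualinfo{\ptb_i}{\out^t}$, bounds each by the symmetric KL between the mixtures $\p_{+i}^{\out^t}$ and $\p_{-i}^{\out^t}$, and then by convexity reduces to $\chisquare{\p_\ptb^{\out_j\mid\out^{j-1}}}{\p_{\ptb^{\oplus i}}^{\out_j\mid\out^{j-1}}}$. The point of the single-bit flip is that on the good set $\barDelta_\ptb-\barDelta_{\ptb^{\oplus i}}=\frac{2\cd\eps}{\sqrt{\dims\ell}}\,\ptb_iV_i$, so the per-step chi-square equals $\frac{4\cd^2\eps^2}{\dims\ell}\vadj{V_i}\Choi_{\POVM_j}\vvec{V_i}$ \emph{deterministically} (no cross terms ever appear). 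Now the only randomness left is in $\POVM_j=\POVM_j(\out^{j-1})$, and after summing over $i$ one may take the supremum over $\POVM\in\povmset$ outside the expectation legitimately. In short, the per-coordinate/bit-flip decomposition is what decouples the $\ptb$-average from the choice of measurement; your sub-additivity shortcut gives that coupling up and cannot recover it with a constant-factor pass-through.
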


\begin{proof}
    We start with the fundamental fact that mutual information is the conditional KL-divergence between the conditional distribution given the marginal $x^t$: $\p^{\out^t}_{z_i}$ for $1 \leq i \leq n$ and the marginal distribution $\q^{\out^t}$,
\begin{align*}
    I(z_i;x^t) &= \kldiv{\p^{\out^t}_{z_i}}{\q^{\out^t} \mid z_i} = \expectDistrOf{z_i}{\kldiv{\p^{\out^t}_{z_i}}{\q^{\out^t}}} \\ &= \frac{1}{2} \kldiv{\p_{+i}^{\out^t}}{\q^{\out^t}} + \frac{1}{2} \kldiv{\p_{-i}^{\out^t}}{\q^{\out^t}} \\ 
    &= \frac{1}{2} \kldiv{\p_{+i}^{\out^t}}{\frac{\p_{+i}^{\out^t} + \p_{-i}^{\out^t}}{2}} + \frac{1}{2} \kldiv{\p_{-i}^{\out^t}}{\frac{\p_{+i}^{\out^t} + \p_{-i}^{\out^t}}{2}}.
\end{align*}
Thus, by convexity, 
\begin{align}
    I(z_i;x^t) &\leq \frac{1}{4} \left[\kldiv{\p^{\out^t}_{+i}}{\p^{\out^t}_{+i}}+\kldiv{\p^{\out^t}_{-i}}{\p^{\out^t}_{+i}} \right] = \frac{1}{2} \kldivsym{\p^{\out^t}_{+i}}{\p^{\out^t}_{-i}} \label{pf:MI-bound}.
\end{align}
Where the last inequality comes from the convexity of KL-divergence with respect to its second argument. Given this symmetric KL-divergence between the mixture distribution conditioned on the i-th perturbation, we can further narrow the correlation between the measurement outcomes and the perturbation with the change in measurement outcome distribution when $z_i$ is flipped. We apply chain rule on the symmetric KL-divergence to allow us to isolate the per measurement round divergence,
\begin{align}
    \kldivsym{\p^{\out^t}_{+i}}{\p^{\out^t}_{-i}} = \sum_{j=1}^{t} \expectDistrOf{\q^{x^\ns}}{\kldivsym{\p^{\out_{j}|\out^{j-1}}_{+i}}{\p^{\out_{j}|\out^{j-1}}_{-i}}}.
    \label{eq:per-round-divergence}
\end{align}
We bound the symmetric KL by the chi-squared divergence,
\begin{align*}
    \kldivsym{\p^{\out_{j}|\out^{j-1}}_{+i}}{\p^{\out_{j}|\out^{j-1}}_{-i}} &\le \chisquare{\p^{\out_{j}|\out^{j-1}}_{+i}}{\p^{\out_{j}|\out^{j-1}}_{-i}} \\
    &\leq \frac{1}{2^{l-1}} \sum_{z \in \{+1,-1\}^\ell} \; \chisquare{\p^{\out_{j}|\out^{j-1}}_{z}}{\p^{\out_{j}|\out^{j-1}}_{z^{\oplus i}}}  \\
    &= \frac{1}{2^{\ell-1}} \sum_{z \in \{+1,-1\}^\ell} \; \expectDistrOf{X \sim \p^{\out_{j}|\out^{j-1}}_{z^{\oplus i}}}{\delta_j(X)^2}.
\end{align*}
 Where the last inequality is from the joint convexity of f-divergences. $\delta_t(X)$ follows the definition,
\begin{align*}
    \delta_j(x) \eqdef \frac{\p^{\out_{j}|\out^{j-1}}_{z}(x)-\p^{\out_{j}|\out^{j-1}}_{z^{\oplus i}}(x)}{\p^{\out_{j}|\out^{j-1}}_{z^{\oplus i}}(x)}.
\end{align*}
Furthermore, $\delta_j$ term can be bounded by extracting the MIC channel,
\begin{align*}
    \delta_j(x) &= \frac{\Tr[M_x^j (\qmm + \barDelta_{\ptb})] - \Tr[M_x^j (\qmm + \barDelta_{\ptb^{\oplus i}})]}{\Tr[M_x^j (\qmm + \barDelta_{\ptb^{\oplus i}})]} \\
    &= \frac{\Tr[M_x^j (\barDelta_{\ptb} - \barDelta_{\ptb ^{\oplus i}})]}{\Tr[M_x^j (\qmm + \barDelta_{\ptb^{\oplus i}})]}.
\end{align*}
Therefore, we plug $\delta_j(X)$ into the expectation and noting that $\p^{\out_{j}|\out^{j-1}}_{z^{\oplus i}}(x) = \Tr[M_x^j (\qmm + \barDelta_{\ptb^{\oplus i}})]$,
\begin{align*}
    \expectDistrOf{X \sim \p^{\out_{j}|\out^{j-1}}_{z^{\oplus i}}}{\delta_j(X)^2} &= \sum_{x \in X} \frac{\Tr[M_x^j (\barDelta_{\ptb} - \barDelta_{\ptb ^{\oplus i}})]^2}{\Tr[M_x^j (\qmm + \barDelta_{\ptb^{\oplus i}})]} \\
    &= \sum_{x \in X} \frac{\Tr[(\barDelta_{\ptb} - \barDelta_{\ptb ^{\oplus i}}) M_x^j] \Tr[(\barDelta_{\ptb} - \barDelta_{\ptb ^{\oplus i}}) M_x]^{\dagger}}{\Tr[M_x^j (\qmm + \barDelta_{\ptb^{\oplus i}})]} \\
    &= \sum_{x \in X} \frac{\Tr[(\barDelta_{\ptb} - \barDelta_{\ptb ^{\oplus i}}) M_x^j] \Tr[(\barDelta_{\ptb} - \barDelta_{\ptb ^{\oplus i}}) M_x^j]^{\dagger}}{\Tr[M_x^j (\qmm + \barDelta_{\ptb^{\oplus i}})]} \\
    &= \sum_{\out \in \mathcal{X}} \frac{\vvdotprod{(\barDelta_{\ptb} - \barDelta_{\ptb ^{\oplus i}})}{M_x^j}\vvdotprod{M_x^j}{(\barDelta_{\ptb} - \barDelta_{\ptb ^{\oplus i}})}}{\Tr[M_x^j(\qmm + \barDelta_{\ptb^{\oplus i}})]}.
\end{align*}
Note that $\qmm + \barDelta_{\ptb^{\oplus i}} \succcurlyeq \frac{1}{2} \qmm \implies \Tr[M_x^j(\qmm + \barDelta_{\ptb^{\oplus i}})] \geq \Tr[M_x^j(\frac{1}{2} \qmm)] = \frac{1}{2 \dims}\Tr[M_x^j]$. This statement comes from the fact that $\opnorm{\barDelta_{\ptb^{\oplus i}}} \leq \frac{1}{2 \dims}$~\eqref{equ:delta_z},
\begin{align*}
   \expectDistrOf{X \sim \p^{\out_{j}|\out^{j-1}}_{z^{\oplus i}}}{\delta_j(X)^2} &\le  2 \dims \sum_{\out \in \mathcal{X}} \frac{\vvdotprod{(\barDelta_{\ptb} - \barDelta_{\ptb ^{\oplus i}})}{M_x^j}\vvdotprod{M_x^j}{(\barDelta_{\ptb} - \barDelta_{\ptb ^{\oplus i}})}}{\Tr[M_x^j]} \\
   &=  2 \dims \vadj{(\barDelta_{\ptb} - \barDelta_{\ptb ^{\oplus i}})} \sum_{\out \in \mathcal{X}} \frac{\vvec{M_x^j}\vadj{M_x^j}}{\Tr[M_x^j]} \vvec{(\barDelta_{\ptb} - \barDelta_{\ptb ^{\oplus i}})}.
\end{align*}
We can then apply this bound to the per-round symmmetric KL-divergence,
\begin{align*}
   \kldivsym{\p^{\out_{j}|\out^{j-1}}_{+i}}{\p^{\out_{j}|\out^{j-1}}_{-i}}&\le \frac{1}{2^{\ell-1}} \sum_{z \in \{+1,-1\}^l} \; 2 \dims \vadj{(\barDelta_{\ptb} - \barDelta_{\ptb ^{\oplus i}})} \Sigma_{\out \in \mathcal{X}} \frac{\vvec{M_x^j}\vadj{M_x^j}}{\Tr[M_x^j]} \vvec{(\barDelta_{\ptb} - \barDelta_{\ptb ^{\oplus i}})} \\
   &= 4 \dims \expectDistrOf{z\sim\{-1, 1\}^{\ell}}{\vadj{(\barDelta_{z} - \barDelta_{z ^{\oplus i}})} \Choi_{\POVM_j} \vvec{(\barDelta_{\ptb} - \barDelta_{z ^{\oplus i}})}},
\end{align*}
where $z$ is drawn uniformly from $\{-1,1\}^{\ell}$.
Another key to this bound is that we have a concentration on the operator norm of the perturbation matrix such that the operator norm lies in the boundary (within some constant)  with exponentially decreasing probability, see \cref{thm:rand-mat-opnorm-concentration}. 
Intuitively, this means that it is rare that all of the $z_i$ components are selected in a way where eigenvectors of the $z_i V_i$ components are aligned, thus resulting in an equal contribution to the total perturbation from each $z_i V_i$ component. 
As a result, this concentration perspective allows us to see that flipping a single $z_i V_i$ entry will dictate a perturbation outcome with high probability.
For convenience, we define the concentration set for the perturbation parameters,
\begin{align}
    \mathcal{G} := \{z \in \{1,1\}^{\ell} \; | \; \opnorm{W_z} \leq \kappa_\alpha \sqrt{d}\} \label{eq:concentrated-set},   
\end{align}
where $\alpha$ is a positive constant and $\kappa_\alpha$ is a positive constant non-decreasing in $\alpha$. By \cref{thm:rand-mat-opnorm-concentration},
\[
\probaOf{z \in \mathcal{G}} \geq 1 - 2\exp\{- \alpha d\}.
\]
For more details on the constants involved, see Lemma 21 of \cite{liu2024role}. 
We then condition between the possible cases of perturbations with law of iterative expectation,

\begin{align}
    \kldivsym{\p^{\out_{j}|\out^{j-1}}_{+i}}{\p^{\out_{j}|\out^{j-1}}_{-i}} &\le 4 \dims \expect{\expectDistrOf{z}{\vadj{(\barDelta_{\ptb} - \barDelta_{\ptb ^{\oplus i}})} \Choi_{\POVM_j} \vvec{(\barDelta_{\ptb} - \barDelta_{\ptb ^{\oplus i}})} \;| \indic{z \in \mathcal{G}}}}\label{pf:KL-div-concentration}.
\end{align}
  Now, it suffices to bound the peturbation for when $z \in G$ and $z \notin G$. When $z \in G$, the following bound holds for $\eps \leq \frac{1}{4(\kappa_\alpha+1)}$,
\begin{align}
\opnorm{\Delta_z} &= \frac{c\eps}{\sqrt{\dims \ell}} \opnorm{W_z} \leq  \frac{\kappa_\alpha c\eps}{\sqrt{\ell}} \leq \frac{2 \kappa_\alpha c\eps}{\dims} \leq \frac{1}{2 \dims} .\label{eq:op-norm-bound}
\end{align}
In addition, the following holds when the i-th bit is flipped,
\begin{align*}
        \opnorm{W_{z ^{\oplus i}}} &= \opnorm{W_z - 2 z_i V_i} \leq \opnorm{W_z} + \opnorm{-2 z_i V_i} \\
    &\le \kappa_\alpha \sqrt{\dims} + 2 \leq (\kappa_\alpha + 1) \sqrt{d} \\
    \implies \opnorm{\Delta_{z^{\oplus i}}} &= \frac{c\eps}{\sqrt{\dims \ell}} \opnorm{W_{z^{\oplus i}}} \leq  \frac{(\kappa_\alpha + 1) c\eps}{\sqrt{\ell}} \leq \frac{2(\kappa_\alpha + 1) c\eps}{\dims} \leq \frac{1}{2\dims}.
\end{align*}
The second inequality follows because $ \opnorm{V_i}^2 = \|V_i\|_{S_\infty}^2 \leq\|V_i\|_{S_2}^2 = \hdotprod{V_i}{V_i} = 1$. For $z \in \mathcal{G}$, we have that $\opnorm{\Delta_{z}}, \opnorm{\Delta_{z^{\oplus i}}} \leq \frac{1}{2 \dims}$. This results in $\barDelta_{z ^{\oplus i}} = \Delta_{z ^{\oplus i}}, \;\barDelta_{z} = \Delta_{z}$, by definition of the normalization factor in \cref{equ:delta_z}. Thus,
\begin{align*}
 \vadj{(\barDelta_{z} - \barDelta_{z ^{\oplus i}})} \Choi_{\POVM_j} \vvec{(\barDelta_{z} - \barDelta_{z ^{\oplus i}})} &=  \vadj{(\Delta_{z} - \Delta_{z ^{\oplus i}})}
 \Choi_{\POVM_j} \vvec{(\Delta_{z} - \Delta_{z ^{\oplus i}})} \\
 &=\vadj{\frac{c\eps}{\sqrt{\dims \ell}} 2 z_i V_i}
 \Choi_{\POVM_j}
 \vvec{\frac{c\eps}{\sqrt{\dims \ell}} 2 z_i V_i} = \frac{4 c^2 \eps^2 z_{i}^2}{\dims \ell}  = \frac{4 c^2 \eps^2}{\dims \ell} \vadj{V_i} \Choi_{\POVM_j} \vvec{V_i}.
\end{align*}
We will later see that this will result in the trace decomposition of $\Choi_{\POVM_j}$ under the vectorized version of the orthornormal Hilbert basis $\hbasis$. Now, we will apply a more crude bound for the low-concentration set $z \notin \mathcal{G}$. We start by bounding the Hilbert-Schmidt norm of the perturbation matrix for every $z \in \{-1,1\}^\ell$,
\begin{align*}
\hsnorm{\barDelta_{z}} &= \sqrt{\vvdotprod{\barDelta_{z}}{\barDelta_{z}}} \\
&= \sqrt{\frac{c^2\eps^2}{\dims \ell}\vvdotprod{\sum_{i=1}^\ell \min\left\{1, \frac{1}{2\dims \opnorm{\Delta_{\ptb}}}\right\} z_i V_i}{\sum_{i=1}^\ell \min\left\{1, \frac{1}{2\dims \opnorm{\Delta_{\ptb}}}\right\} z_i V_i}}\\
&= \sqrt{\frac{c^2 \eps^2}{\dims \ell} \sum_{i \neq j} \min\left\{1, \frac{1}{2\dims \opnorm{\Delta_{\ptb}}}\right\}^2 z_i z_j \vvdotprod{V_i}{V_j} + \sum_{i}^\ell \min\left\{1, \frac{1}{2\dims \opnorm{\Delta_{\ptb}}}\right\}^2 z_i^2 \vvdotprod{V_i}{V_i}} \\
&= \sqrt{\frac{c\eps^2}{\dims \ell} \sum_{i}^\ell \min\left\{1, \frac{1}{2\dims \opnorm{\Delta_{\ptb}}}\right\}^2} \leq  \sqrt{\frac{c\eps^2}{\dims \ell} \sum_{i}^\ell 1} = \frac{c \eps}{\sqrt{\dims}}.
\end{align*} 
Where last line holds from the orthonormality of the perturbation basis. Now, we can use triangle inequality of the Hilbert-Schmidt norm to get the bound on the Hilbert-Schmidt norm of the difference between the perturbation matrices.
\begin{align*}
    \vadj{(\barDelta_{z} - \barDelta_{z ^{\oplus i}})} \Choi_{\POVM_j} \vvec{(\barDelta_{z} - \barDelta_{z ^{\oplus i}})} 
    &\le \opnorm{\Choi_{\POVM_j}}\hsnorm{\barDelta_{z} - \barDelta_{z ^{\oplus i}}}^2  \\
    &\leq 2(\hsnorm{\barDelta_{z}}^2 + \hsnorm{\barDelta_{z ^{\oplus i}}}^2) \\
    &\leq \frac{4c^2 \eps^2}{\dims}.
\end{align*}
The first step is due to the definition of operator norm. The second step is because $\opnorm{\Choi_{\POVM_j}} \le 1$, triangle inequality, and $(a+b)^2\le 2(a^2+b^2)$. We can further bound the symmetric KL-divergence in \cref{pf:KL-div-concentration},
 \begin{align*}
     \kldivsym{\p^{\out_{j}|\out^{j-1}}_{+i}}{\p^{\out_{j}|\out^{j-1}}_{-i}} &\le 4 \dims \left[\probaOf{z \in \mathcal{G}} \frac{4 c^2 \eps^2}{\dims\ell} \vadj{V_i} \Choi_{\POVM_j} \vvec{V_i} + (1 - \probaOf{z \in \mathcal{G}})   \frac{4 c^2 \eps^2}{\dims}\right] \\
     &= \probaOf{z \in \mathcal{G}} \frac{16 c^2 \eps^2}{\ell} \vadj{V_i} \Choi_{\POVM_j} \vvec{V_i} + (1-\probaOf{z \in \mathcal{G}}) 16 c^2 \eps^2.
 \end{align*} 

Thus combining with \eqref{pf:MI-bound} \eqref{eq:per-round-divergence},
\begin{align*}
    \frac{1}{\ell}\sum_{i=1}^{\ell}\mutualinfo{\ptb_i}{\out^t} &\le \frac{1}{2\ell}\sum_{i=1}^{\ell}\sum_{j=1}^{t} \expectDistrOf{\q^{\out^\ns}}{\kldivsym{\p^{\out_{j}|\out^{j-1}}_{+i}}{\p^{\out_{j}|\out^{j-1}}_{-i}}} \\
    &\leq \probaOf{z \in \mathcal{G}} \frac{8 c^2 \eps^2}{\ell^2} \sum_{j=1}^{t} \sum_{i=1}^{\ell} \expectDistrOf{\q^{\out^\ns}}{ \vadj{V_i} \Choi_{\POVM_j} \vvec{V_i}}
    + (1 - \probaOf{z \in \mathcal{G}}) \sum_{j=1}^{t} \sum_{i=1}^{\ell}  \frac{8c^2\eps^2}{\ell} \\
    &\le  \frac{8 tc^2 \eps^2}{\ell^2}  \expectDistrOf{\q^{\out^\ns}}{\frac{1}{t}\sum_{j=1}^{t}\sum_{i=1}^\ell \vadj{V_i} {\Choi}_{\POVM_j} \vvec{V_i}} +16\exp\{-\alpha\dims\}tc^2\eps^2\\
    &\le \frac{8 tc^2 \eps^2}{\ell^2}  \sup_{\POVM\in\povmset}\sum_{i=1}^\ell \vadj{V_i} {\Choi}_{\POVM} \vvec{V_i} +16\exp\{-\alpha\dims\}tc^2\eps^2,
\end{align*}
The second term in the final step is due to \cref{thm:rand-mat-opnorm-concentration}. This proves \eqref{equ:avg-MI-partial} in \cref{thm:avg-MI-upper}.

We continue to derive the remaining expression \eqref{equ:avg-MI-tracenorm}. We use the fact that for any matrix $A\in\C^{\dims\times\dims}$ and an orthonormal basis $\qbit{u_1}, \ldots, \qbit{u_\dims}$,
\[
\Tr[A]=\sum_{i=1}^{\dims}\matdotprod{u_i}{A}{u_i}.
\]
Combining with the fact that ${\Choi}_{\POVM}$ is p.s.d., we have
\[
\sum_{i=1}^{\ell} \vadj{V_i} {\Choi}_{\POVM} \vvec{V_i}\le \sum_{i=1}^{\dims^2}\vadj{V_i}\Choi_{\POVM} \vvec{V_i}=\Tr[\Choi_{\POVM}]=\tracenorm{\Choi_{\POVM}}.
\]
Therefore, continuing from \eqref{equ:avg-MI-partial},

\begin{align*}
     \frac{1}{\ell}\sum_{i=1}^{\ell}\mutualinfo{\ptb_i}{\out^t} 
    &\le \frac{8 t c^2 \eps^2}{\ell^2} \tracenorm{\povmset} + 16\exp\{-\alpha\dims\}tc^2\eps^2\\
    & \le \frac{16 t c^2 \eps^2}{\ell^2} \tracenorm{\povmset}.
\end{align*}
The first step is from the definition of $\tracenorm{\povmset}$ in \eqref{equ:max-povm-norm}. The second step holds as $\tracenorm{\povmset}\ge 1$ and $\exp\{-\alpha d\} \leq \frac{1}{d^4}$ when $d \geq 1024$. 
\end{proof}

\subsection{Mutual information lower bound}
We state some useful bounds on mutual information.
\begin{lemma}[{\cite[Lemma 10]{ACLST22iiuic}}]
\label{lem:MI-lower}
    Let $Z\in\{-1, 1\}^\ab$ be drawn uniformly and $Z-Y-\hat{Z}$ be a Markov chain where $\hat{Z}$ is an estimate of $Z$. Let $h(t)\eqdef -t\log t-(1-t)\log(1-t)$, then for each $i\in[\ab]$,
    \[
    \mutualinfo{Z_i}{Y}\ge 1-h(\probaOf{Z_i\ne \hat{Z}_i}).
    \]
\end{lemma}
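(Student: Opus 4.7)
The plan is to recognize this statement as the binary specialization of Fano's inequality combined with the data processing inequality, since $Z_i$ is a uniform bit and $\hat Z_i$ is a $\{-1,1\}$-valued estimator obtained from $Y$ through a Markov chain. The target inequality $\mutualinfo{Z_i}{Y}\ge 1-h(\probaOf{Z_i\ne\hat Z_i})$ is an immediate consequence of three standard ingredients: (i) $H(Z_i)=1$ since $Z_i$ is uniform on $\{-1,1\}$; (ii) the data processing inequality along $Z_i\to Y\to \hat Z_i$; and (iii) Fano's inequality in its binary form.

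First I would write $\mutualinfo{Z_i}{Y}=H(Z_i)-H(Z_i\mid Y)=1-H(Z_i\mid Y)$, using that $Z_i$ is a uniform Bernoulli random variable. Next, since $\hat Z_i$ is a (possibly randomized) function of $Y$ via the Markov chain $Z_i\to Z\to Y\to \hat Z_i$, the data processing inequality for conditional entropy gives $H(Z_i\mid Y)\le H(Z_i\mid \hat Z_i)$. Finally, I would invoke Fano's inequality for a binary random variable: letting $P_e\eqdef \probaOf{Z_i\ne\hat Z_i}$ be the error probability, the general Fano bound $H(Z_i\mid \hat Z_i)\le h(P_e)+P_e\log(|\mathcal{Z}|-1)$ specializes, for $|\mathcal{Z}|=2$, to $H(Z_i\mid \hat Z_i)\le h(P_e)$. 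Chaining these three steps yields $\mutualinfo{Z_i}{Y}\ge 1-h(P_e)$ as claimed.

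For self-containedness, I would briefly justify the binary Fano step directly rather than citing it: let $E\eqdef \indic{Z_i\ne \hat Z_i}$. By the chain rule, $H(E,Z_i\mid\hat Z_i)=H(Z_i\mid \hat Z_i)+H(E\mid Z_i,\hat Z_i)=H(Z_i\mid\hat Z_i)$, since $E$ is determined by $(Z_i,\hat Z_i)$. On the other hand, $H(E,Z_i\mid\hat Z_i)=H(E\mid\hat Z_i)+H(Z_i\mid E,\hat Z_i)\le h(P_e)+0$, because $H(E\mid\hat Z_i)\le H(E)=h(P_e)$ and, conditioned on $E$ and $\hat Z_i$, the value of $Z_i$ is fully determined in the binary case (either $Z_i=\hat Z_i$ or $Z_i=-\hat Z_i$). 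Combining gives $H(Z_i\mid\hat Z_i)\le h(P_e)$, completing the proof.

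There is no substantive obstacle here; the only minor care point is ensuring that $\hat Z_i$ is binary-valued (or can be post-processed to be so without increasing the error probability), which is standard since any non-binary estimator of a binary label can be replaced by its thresholded version without increasing $P_e$. All logarithms are taken base $2$ so that $H(Z_i)=1$ bit matches the $1$ in the bound.
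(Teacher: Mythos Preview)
Your proof is correct. The paper itself does not prove this lemma at all; it simply cites it as \cite[Lemma~10]{ACLST22iiuic} and uses it as a black box in the proof of \cref{lem:avg-MI-lower}. Your argument---writing $\mutualinfo{Z_i}{Y}=1-H(Z_i\mid Y)$, applying data processing along the Markov chain $Z_i\to Y\to\hat Z_i$ to get $H(Z_i\mid Y)\le H(Z_i\mid\hat Z_i)$, and then invoking binary Fano---is exactly the standard derivation one would expect and fills in what the paper omits.
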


The following lemma is an Assouad-type lower bound on the average mutual information. 
\begin{lemma}
\label{lem:avg-MI-lower}
    Let $\sigma_\ptb\sim\ptbDistr(\hbasis)$ where $\ptb\sim\{-1,1\}^{\ell}$, $\out^\ns$ be the outcomes after applying $\POVM^\ns$ to $\sigma_\ptb^{\otimes\ns}$, and $\qest$ be an estimator using $\out^\ns$ that achieves an accuracy of $\eps$. Then,
    \begin{equation}
        \frac{1}{\ell}\sum_{i=1}^{\ell}\mutualinfo{Z_i}{Y}\ge\frac{1}{100}.
    \end{equation}
\end{lemma}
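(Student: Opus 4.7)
The plan is to chain three ingredients that have already been set up in the paper: the Hamming-separation argument (\cref{lemma:hamm-separation-informal}), which is packaged as the tomography-to-bit-guessing reduction in \cref{prop:tomography-guess}; the per-coordinate mutual-information lower bound of \cite[Lemma 10]{ACLST22iiuic} recalled in \cref{lem:MI-lower}; and concavity of the binary entropy $h$.

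First I would use the estimator $\qest$ to produce a guess $\hat Z \in \{-1,1\}^\ell$ for $Z$. Concretely, define $\hat Z_i$ by rounding the projection $\hdotprod{V_i}{\qest-\qmm}$ onto $\{-1,1\}$ (the exact definition is carried out in the proof of \cref{prop:tomography-guess}, which gives the Hamming bound using \cref{lemma:hamm-separation-informal} together with \cref{prop:perturbation-trace-distance} to handle the rare event that $\sigma_Z$ is not a valid state). This yields $Z - Y - \hat Z$ as a Markov chain, where $Y = X^\ns$, and guarantees
\[
\frac{1}{\ell}\sum_{i=1}^{\ell}\probaOf{Z_i \ne \hat Z_i}\;\le\;0.41.
\]

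Next I would apply \cref{lem:MI-lower} coordinate-wise: for each $i\in[\ell]$, with $p_i \eqdef \probaOf{Z_i \ne \hat Z_i}$,
\[
\mutualinfo{Z_i}{Y}\;\ge\;1-h(p_i).
\]
Averaging over $i$ and invoking concavity of $h$ on $[0,1]$ (Jensen's inequality),
\[
\frac{1}{\ell}\sum_{i=1}^{\ell}\mutualinfo{Z_i}{Y}\;\ge\;1-\frac{1}{\ell}\sum_{i=1}^{\ell} h(p_i)\;\ge\;1-h\Paren{\frac{1}{\ell}\sum_{i=1}^{\ell} p_i}.
\]
Since $h$ is increasing on $[0,1/2]$ and the average error is at most $0.41 < 1/2$, we conclude
\[
\frac{1}{\ell}\sum_{i=1}^{\ell}\mutualinfo{Z_i}{Y}\;\ge\;1-h(0.41).
\]

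A direct numerical check (in bits, matching the convention of \cref{lem:MI-lower}) gives $h(0.41) = -0.41\log_2 0.41 - 0.59\log_2 0.59 < 0.98$, so $1-h(0.41) > 0.02 > 1/100$, finishing the proof. The only nontrivial step is the reduction from trace-distance accuracy to coordinate-wise guessing, i.e.\ \cref{prop:tomography-guess}; everything after is standard information-theoretic manipulation. I do not foresee a real obstacle, since the measurement-dependent combinatorics are already absorbed into \cref{lemma:hamm-separation-informal} and \cref{prop:perturbation-trace-distance}.
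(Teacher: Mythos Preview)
Your overall plan---reduce to bit-guessing, apply \cref{lem:MI-lower} per coordinate, then Jensen on the concave $h$---is exactly what the paper does, and the numerical endgame is correct. The gap is in how you construct $\hat Z$. Coordinate-wise rounding of $\hdotprod{V_i}{\qest-\qmm}$ does not work: the true coefficient $\hdotprod{V_i}{\sigma_z-\qmm}$ has magnitude $c\eps/\sqrt{\dims\ell}=\Theta(\eps/\dims^{3/2})$, while trace-distance accuracy $\eps$ only controls $|\hdotprod{V_i}{\qest-\sigma_z}|$ by $\eps$ individually (via $\opnorm{V_i}\le 1$) or by $\eps^2$ in sum of squares (via $\hsnorm{\cdot}\le\tracenorm{\cdot}$); either route leaves the number of wrong signs bounded only by $\dims\ell/c^2\gg\ell$. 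You also cannot defer to ``the proof of \cref{prop:tomography-guess}'': that proposition is not proved separately in the paper---its content is precisely what is being established inside the proof of the present lemma, so the citation is circular.

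The paper instead takes $\hat z\eqdef\argmin_{z'}\tracenorm{\sigma_{z'}-\qest}$. The triangle inequality then gives $\tracenorm{\sigma_z-\sigma_{\hat z}}\le 2\tracenorm{\sigma_z-\qest}\le 2\eps$ on the success event, and the trace-norm Hamming separation (the formal \cref{lemma:hamm-packing}, stated and proved inside this lemma's proof) on the concentrated set $\mathcal{G}$ yields $\frac{1}{\ell}\ham{z}{\hat z}\le 4\kappa_\alpha/c\le 0.4$; adding the two failure probabilities gives the $0.41$. The argmin is what makes this work: it is the choice that automatically inherits a $2\eps$ trace-norm bound on $\sigma_z-\sigma_{\hat z}$, whereas your rounding only satisfies the trivial $\tracenorm{\sigma_z-\sigma_{\hat z}}\le 2c\eps$, which after the separation leaves the Hamming fraction bounded by $4\kappa_\alpha$, a vacuous constant.
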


Combining \cref{lem:avg-MI-lower} and \cref{thm:avg-MI-upper} proves the interactive lower bound for tomography.
\begin{proof}
   The idea behind this bound is that any good estimation $\qest$ of the parameterized state $\sigma_\ptb$ is close in the sense that the closest parameterized $\sigma_{\zest}$ to $\qest$ should also be sufficiently close. Then, we can relate the distance $\tracenorm{\sigma_{\ptb} - \sigma_{\zest}}$ to the hamming distance in $\sum_{i=1}^\ell \indic{z_i \neq \zest_i}$. Once this relation is established, then a optimal tomography algorithm should also have low probability of error in estimating $z$ with $\zest$. Thus, leading to lower bound of mutual information with the application of \cref{lem:MI-lower}. We begin by first bounding the error between the "parameterized version" of the estimator and $\sigma_{\hat{\ptb}}$,
   \begin{align*}
        \zest &:= \argmin_{\ptb \in \{-1,1\}^\ell } \tracenorm{\sigma_{\ptb} - \qest}\\
        \tracenorm{\sigma_{\zest} - \sigma_{\ptb}} &\leq \tracenorm{\sigma_{\ptb} - \qest} + \tracenorm{\qest-\sigma_{\zest}} \leq 2 \tracenorm{\sigma_{\ptb} - \qest},
   \end{align*}
   where the last line holds since $\tracenorm{\qest-\sigma_{\zest}} \leq \tracenorm{\qest-\sigma_{\ptb}}$ by definition of $\hat{\ptb}$. Notice $ \tracenorm{\qest-\sigma_{\ptb}} \leq \eps \implies \tracenorm{\sigma_{\zest} - \sigma_{\ptb}} \leq 2\eps $. Thus, 
   $$\Pr[\tracenorm{\sigma_{\zest} - \sigma_{\ptb}} \leq 2 \eps] \ge \Pr[\tracenorm{\sigma_{\zest} - \sigma_{\ptb}} \leq \eps] \geq 0.99.$$
   Now, we will introduce a lemma that will allow us to construct a informaton-theoretic packing around this estimator. This is done by relating the trace distance and the hamming distance between Z parameters. We present the formal version of~\cref{lemma:hamm-separation-informal}

   \begin{lemma}[Trace distance Hamming separation] \label{lemma:hamm-packing}
       Consider $z \in \mathcal{G}$, where $\mathcal{G}$ is defined from \cref{eq:concentrated-set}. For any  $\hat{z} \in \left\{-1,1\right\}^{\ell}$,
       \begin{equation}
           \tracenorm{\sigma_\ptb - \sigma_{\zest}} \geq \frac{c \eps}{2\kappa_\alpha \ell} \ham{\ptb}{\zest}.
       \end{equation}
   \end{lemma}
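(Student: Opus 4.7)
The plan is to invoke the duality between trace and operator norms from \cref{lemma:trace-norm-dual} with $A = \sigma_\ptb - \sigma_{\zest} = \barDelta_\ptb - \barDelta_{\zest}$ and the dual witness $B = \Delta_\ptb/\opnorm{\Delta_\ptb}$, which has operator norm one. The hypothesis $\ptb \in \mathcal{G}$ together with \cref{eq:op-norm-bound} gives $\opnorm{\Delta_\ptb} \le \kappa_\alpha c\eps/\sqrt{\ell}$, which is at most $1/(2\dims)$ in the target $\eps$ regime, so the normalization in \cref{equ:delta_z} is inactive and $\barDelta_\ptb = \Delta_\ptb$. The estimate $\zest$ need not lie in $\mathcal{G}$, so I write $\barDelta_{\zest} = t\,\Delta_{\zest}$ with shrinkage factor $t = \min\{1, 1/(2\dims\opnorm{\Delta_{\zest}})\} \in [0,1]$ and carry $t$ symbolically through the computation rather than splitting on it at the outset.

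Next I would expand $\Tr[\Delta_\ptb A] = \Tr[\Delta_\ptb^2] - t\,\Tr[\Delta_\ptb\Delta_{\zest}]$. Orthonormality $\Tr[V_iV_j] = \indic{i=j}$ yields $\Tr[\Delta_\ptb^2] = c^2\eps^2/\dims$ and $\Tr[\Delta_\ptb\Delta_{\zest}] = \tfrac{c^2\eps^2}{\dims\ell}\sum_i \ptb_i\zest_i = \tfrac{c^2\eps^2}{\dims\ell}(\ell - 2H)$, writing $H \eqdef \ham{\ptb}{\zest}$ and using that each agreeing coordinate contributes $+1$ and each disagreeing coordinate contributes $-1$. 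Combining,
\[
\Tr[\Delta_\ptb(\barDelta_\ptb - \barDelta_{\zest})] = \frac{c^2\eps^2}{\dims\ell}\bigl[\ell(1-t) + 2tH\bigr] \ge 0,
\]
so duality together with the bound on $\opnorm{\Delta_\ptb}$ gives
\[
\tracenorm{\sigma_\ptb - \sigma_{\zest}} \ge \frac{c\eps\,[\ell(1-t) + 2tH]}{\kappa_\alpha \dims \sqrt{\ell}}.
\]

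All that remains is to show the right-hand side dominates $\tfrac{c\eps H}{2\kappa_\alpha\ell}$, i.e.\ $2\ell^2(1-t) + 4t\ell H \ge \dims\sqrt{\ell}\,H$. I expect this to be the main subtlety, since when $\zest \notin \mathcal{G}$ the shrinkage $t$ can be arbitrarily small and neither summand dominates uniformly. I plan to close with a short case split at the threshold $t_\star = \dims/(4\sqrt{\ell})$: for $t \ge t_\star$ the term $4t\ell H$ already beats $\dims\sqrt{\ell}\,H$ on its own; for $t < t_\star$ the factor $1-t$ is bounded away from zero, and the trivial estimate $H \le \ell$ lets $2\ell^2(1-t)$ absorb the remainder. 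Both subcases reduce to the inequality $2\sqrt{\ell} \ge \dims$, which is precisely the standing hypothesis $\ell \ge \dims^2/2$ from \cref{def:perturbation}, and the Hamming separation follows.
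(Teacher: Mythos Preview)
Your approach is essentially the paper's: the dual witness $B=\Delta_\ptb/\opnorm{\Delta_\ptb}$ is (up to a harmless constant) the paper's $B=W_\ptb/(\kappa_\alpha\sqrt{\dims})$, and your intermediate bound $\tfrac{c\eps\,[\ell(1-t)+2tH]}{\kappa_\alpha\dims\sqrt{\ell}}$ is exactly the paper's $\tfrac{c\eps}{\kappa_\alpha\dims\sqrt{\ell}}\bigl[(1+C_{\zest})\delta_w+(1-C_{\zest})\delta_c\bigr]$ rewritten.

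The only divergence is your final case split, which is unnecessary and, as stated, does not quite close. In the regime $t<t_\star=\dims/(4\sqrt{\ell})$ you propose to bound $d\sqrt{\ell}\,H$ by $2\ell^2(1-t)$ alone via $H\le\ell$; this requires $2\sqrt{\ell}(1-t)\ge\dims$, but $\ell\ge\dims^2/2$ only gives $2\sqrt{\ell}\ge\sqrt{2}\,\dims$ and $1-t>1-\sqrt{2}/4$, whose product is $(\sqrt{2}-\tfrac12)\dims\approx 0.914\,\dims<\dims$. The paper avoids this entirely: since $\ell\ge H$ and $0\le t\le 1$,
\[
\ell(1-t)+2tH \;\ge\; H(1-t)+2tH \;=\; (1+t)H \;\ge\; H,
\]
and then $\dims\sqrt{\ell}\le 2\ell$ (from $\ell\ge\dims^2/2$) finishes in one line. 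Replace your case split with this observation and the proof is complete.
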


   \begin{proof}
       
   Let $C_{z} := \min\left\{1, \frac{1}{2\dims \opnorm{\Delta_{z}}}\right\}$ and define the matrices,
   \[\Delta_{w} := \frac{c \eps}{\sqrt{d\ell}}  \sum_{i=1}^\ell \indic{z_i \neq \hat{z}_i} z_i V_i, \;\Delta_{c} := \frac{c \eps}{\sqrt{d\ell}} \sum_{i=1}^\ell \indic{z_i = \hat{z}_i} z_i V_i.
   \]
   Notice the trace norm of distance between perturbation matrices has the following form,
   \begin{align*}   
   \tracenorm{\sigma_{\zest} - \sigma_{\ptb}} & = \tracenorm{\barDelta_{\hat{\ptb}} - \barDelta_{\ptb}} \\
   &=\tracenorm{C_{\hat{z}} \Delta_{\hat{\ptb}} - C_{z} \Delta_{\ptb}} \\
   &= \frac{c \eps}{\sqrt{d\ell}} \tracenorm{(-C_z-C_{\zest}) \sum_{i=1}^\ell \indic{z_i \neq \zest_i} z_i V_i + (C_{\zest} - C_z) \sum_{i=1}^\ell \indic{z_i = \zest_i} z_i V_i} \\
   &= \tracenorm{(C_z+C_{\zest}) \Delta_w + (C_z-C_{\zest}) \Delta_c)}.
   \end{align*}
    Now, we will take advantage of the duality between the trace and operator norm (\cref{lemma:trace-norm-dual}) to correlate the distance between perturbations to the hamming distance between $z$ and $\zest$. Let $W_z = \sum_{i=1}^\ell z_i V_i$. For $z$ such that $\opnorm{W_z} \leq \kappa_\alpha \sqrt{\dims}$, we have $C_z=1$, from \cref{eq:op-norm-bound}.
   \begin{align*}
    \tracenorm{\sigma_{\zest} - \sigma_{\ptb}} &=
     \tracenorm{((1+C_{\zest}) \Delta_w + (1-C_{\zest}) \Delta_c} = \sup_{\opnorm{B} \leq 1} |\Tr[B^{\dagger} \left[(1+C_{\zest}) \Delta_w + (1-C_{\zest}) \Delta_c\right]]| \\
     &\geq \frac{1}{\kappa_\alpha \sqrt{\dims}} 
     |\Tr[W_z^{\dagger} \left[(1+C_{\zest}) \Delta_w + (1-C_{\zest}) \Delta_c\right]]| = \frac{c \eps}{\sqrt{\dims \ell}} \frac{1}{\kappa_\alpha \sqrt{\dims}} |(1+C_{\zest}) \delta_w + (1-C_{\zest}) \delta_c| \\
     &= \frac{c \eps}{\kappa_\alpha d \sqrt{\ell}} \left[(1+C_{\zest}) \delta_w + (1-C_{\zest}) \delta_c\right] \geq \frac{c \eps}{2 \kappa_\alpha \ell} \delta_w.
   \end{align*}
   Where $\delta_w = \ham{z}{\zest}$ and $\delta_c = \ell - \delta_w = \ell - \ham{z}{\zest}$.The second inequality uses: $B = \frac{W_z}{\kappa_\alpha \sqrt{\dims}}$. The reduction $\Tr[W_z^{\dagger} \Delta_w] = \delta_w$ and $\Tr[W_z^{\dagger} \Delta_c] = \delta_c$ comes directly from the orthonormality of the peturbation matrices $\{V_i\}_{i=1}^\ell$ under the inner product: $\hdotprod{A}{B} = \vvdotprod{A}{B} = \Tr[A^\dagger B]$. With the last line, we have shown the desired bound.  
   \end{proof}
    
   Since this relation to $\ham{\cdot}{\cdot}$ only occurs for a concentrated set $\mathcal{G}$, we can show that the expected hamming distance is "approximately trace distance" for sufficiently large $\dims > \frac{\ln{5}}{\alpha}$. $\sigma_{\hat{z}}$ also has to be close to $\sigma_z$ with high probability to be a sufficient estimator of $\sigma_z$, inducing a upper bound on the error probability of estimating $Z$,
   \begin{align}
       \frac{1}{\ell} \expectDistrOf{}{\delta_w} &= \frac{1}{\ell} \expectDistrOf{}{\delta_w \mid \tracenorm{\sigma_z - \sigma_{\zest}} \leq 2 \eps} \Pr[\tracenorm{\sigma_z - \sigma_{\zest}} \leq 2 \eps] \nonumber\\
       &+ \frac{1}{\ell} \expectDistrOf{}{\delta_w | \tracenorm{\sigma_z - \sigma_{\zest}} > 2 \eps} \Pr[\tracenorm{\sigma_z - \sigma_{\zest}} > 2 \eps] \nonumber\\
       &\leq \frac{1}{\ell} \expectDistrOf{}{\delta_w \mid \tracenorm{\sigma_z - \sigma_{\zest}} \leq 2 \eps} + 0.01.  \label{eq:cond-expect-tom-lower}
   \end{align}
   It is enough to upper bound the remaining expectation term by a constant. We will case on whether $z$'s lead to a approximate hamming relationship with trace distance. When $z \in \mathcal{G}$, we apply \cref{lemma:hamm-packing}
   \begin{align*}
       \frac{c \eps}{2 \kappa_\alpha \ell} \delta_w \leq \tracenorm{\sigma_z - \sigma_{\zest}} \leq 2 \eps 
       \implies  \frac{1}{\ell} \delta_w \leq \frac{4 \kappa_\alpha}{c}.
   \end{align*}
   The conditional expectation will now be bounded by a small constant for $c \geq 10 \kappa_\alpha$,
   \begin{align*}
       \frac{1}{\ell} \expectDistrOf{}{\delta_w \mid \tracenorm{\sigma_\ptb - \sigma_{\zest}} \leq 2 \eps} &\leq \Pr[z \in \mathcal{G}]\frac{1}{\ell} \expectDistrOf{}{\delta_w \mid \tracenorm{\sigma_\ptb - \sigma_{\zest}} \leq 2 \eps \land z \in \mathcal{G}} \\
       &+\Pr[z \notin \mathcal{G}] \frac{1}{\ell} \expectDistrOf{}{\delta_w \mid \tracenorm{\sigma_z - \sigma_{\zest}} \leq 2 \eps \land z \notin \mathcal{G}} \\
       &\leq \frac{4 \kappa_\alpha}{c} +  2\exp\{-\alpha d\} \cdot 1 \leq \frac{2}{5} + \frac{2}{5} = 0.40.
   \end{align*}
   Substituting this result into \cref{eq:cond-expect-tom-lower}, we have $\frac{1}{\ell} \sum_{i=1}^\ell \Pr[Z_i \neq \hat{Z_i}] = \frac{1}{\ell} \expectDistrOf{}{\delta_w} \leq 0.41$. We can then apply \cref{lem:MI-lower} to obtain the mutual information bound,
   \begin{align*}
       \frac{1}{\ell} \sum_{i=1}^\ell \mutualinfo{Z_i}{Y} \geq 1 - h\left(\frac{1}{\ell} \sum_{i=1}^\ell \Pr[Z_i \neq \hat{Z_i}]\right) \geq 1 - h\left(0.41\right) \geq \frac{1}{100}.
   \end{align*}
   The first inequality is due to the concavity of the binary entropy function $h$.
\end{proof}
\section{Lower bound for tomography with Pauli measurements}
The key to proving a tight lower bound for Pauli measurements is to design a measurement-dependent hard instance. Recall that any quantum state $\rho$ is a linear combination of Pauli observables,
\[
\rho=\frac{\eye_\dims}{\dims}+\sum_{P\in\pauliObsSet}\frac{\Tr[\rho P]}{\dims}P.
\]

Further recall the observation \cref{sec:techniques} that Pauli measurements \eqref{equ:pauli-measurement} are better at learning information about directions 
$P\in \pauliObsSet$ with a small weight and less powerful $P$ with a larger weight. 
As such, we set the basis $V_1, \ldots, V_{\dims^2-1}$ in the lower bound construction (\cref{def:perturbation}) to be the (normalized) Pauli observables, sorted in increasing order of their weights, $w(V_1)\le w(V_2)\le \ldots \le w(V_{\dims^2-1})$. Applying \eqref{equ:avg-MI-partial} in \cref{thm:avg-MI-upper} and \cref{lem:avg-MI-lower},
\begin{equation}
    \frac{1}{100}\le \frac{1}{\ell}\sum_{i=1}^{\ell}\mutualinfo{\ptb_i}{\out^\ns}\le \frac{8 \ns c^2 \eps^2}{\ell^2} \sup_{\POVM\in\povmset}\sum_{i=1}^{\ell} \vadj{V_i}\Choi_{\POVM} \vvec{V_i} +16\exp\{-\alpha\dims\}\ns c^2\eps^2.
    \label{equ:pauli-lower-inequ}
\end{equation}

We need to choose an appropriate $\ell$ and to upper bound the average mutual information. We propose to select all Pauli observables with weight at least $N-w$. Then,
\begin{equation}
    \ell = g(w)\eqdef \sum_{m=0}^w{\nqubits \choose N-m}3^{N-m}.
    \label{equ:pauli-weight-number}
\end{equation}
This is because for Pauli observables with weight $N-m$, there are $N-m$ positions we can place the Pauli operators, and for each position, there are three choices $\pauliX, \pauliY, \pauliZ$.

According to \cref{prop:perturbation-trace-distance}, we must choose $\ell\ge \dims^2/2$ to ensure that the perturbations are $\eps$ far from $\qmm$ with high probability. In other words, $g(w)/\dims^2\ge 1/2$.
\[
\frac{g(w)}{\dims^2}=\sum_{m=0}^w{\nqubits \choose N-m}\frac{3^{N-m}}{4^\nqubits}=\sum_{m=0}^{w}{\nqubits \choose m}\Paren{\frac34}^{\nqubits-m}\Paren{\frac14}^m =\probaOf{\binomial{\nqubits}{1/4}\le w}.
\]
We have the following fact about the median of binomial distributions,
\begin{fact}[\cite{kaas1980mean}]
    The median of a binomial distribution $\binomial{N}{p}$ must lie in $[\lfloor Np\rfloor, \lceil Np\rceil]$.
\end{fact}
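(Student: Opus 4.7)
The plan is to reduce the claim to two one-sided tail inequalities for $X \sim \binomial{N}{p}$:
\[
\Pr[X \leq \lceil Np \rceil] \geq \tfrac{1}{2} \qquad \text{and} \qquad \Pr[X \geq \lfloor Np \rfloor] \geq \tfrac{1}{2}.
\]
Given these, the definition of a median---an integer $m$ with $\Pr[X \leq m] \geq 1/2$ and $\Pr[X \geq m] \geq 1/2$---immediately forces any median to satisfy $\lfloor Np \rfloor \leq m \leq \lceil Np \rceil$, which is exactly the claim.

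First I would invoke the symmetry $N - X \sim \binomial{N}{1-p}$, under which $\lfloor Np \rfloor$ maps to $N - \lceil N(1-p) \rceil$, so that the second tail inequality for parameter $p$ is precisely the first tail inequality for parameter $1-p$ applied to $N-X$. It therefore suffices to establish only $\Pr[X \leq \lceil Np \rceil] \geq 1/2$ for every $p \in (0,1)$. The case where $Np$ is an integer becomes the sharp boundary case $\lfloor Np \rfloor = \lceil Np \rceil = Np$, while the non-integer case has $\lceil Np \rceil = \lfloor Np \rfloor + 1$ and one has an extra unit of slack.

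For the one-sided bound I would follow the Kaas--Buhrman pairing argument. Writing $f(k) = \binom{N}{k} p^k (1-p)^{N-k}$, the explicit ratio
\[
\frac{f(k+1)}{f(k)} = \frac{(N-k)\,p}{(k+1)(1-p)}
\]
crosses $1$ at $k = (N+1)p - 1$, so $f$ is unimodal with mode at or adjacent to $Np$. The key pairing is $f(\lceil Np \rceil + j) \leftrightarrow f(\lceil Np \rceil - 1 - j)$ for $j \geq 0$: a short telescoping product using the above ratio shows that each right-hand term is dominated by its paired left-hand term, and summing over $j$ matches all tail mass beyond $\lceil Np \rceil$ to a subset of the mass at or below $\lceil Np \rceil$, yielding $\Pr[X \leq \lceil Np \rceil] \geq 1/2$.

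The main obstacle is verifying the pairwise domination cleanly: the ratio $f(k+1)/f(k)$ is asymmetric around the mode, so one must carefully track whether $Np$ is an integer (which affects whether the boundary atom is shared between the two halves) and whether the paired index $\lceil Np \rceil - 1 - j$ stays $\geq 0$ (if not, the remaining right-hand terms must be absorbed separately, using that their total mass is small once $j$ exceeds $N - \lceil Np \rceil$). A cleaner alternative, which I would take in practice, is to cite \cite{kaas1980mean} directly: they prove the sharper statement $|m - Np| \leq \max(p, 1-p) < 1$, which subsumes what is needed here.
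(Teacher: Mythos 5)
The paper never proves this statement: it is quoted as a Fact directly from \cite{kaas1980mean}, and the only thing the paper actually uses is the one-sided bound $\Pr[\mathrm{Bin}(N,1/4)\le\lceil N/4\rceil]\ge 1/2$. So your closing "cleaner alternative" of citing Kaas--Buhrman is exactly what the paper does, and your organization (two one-sided tail bounds plus the reflection $X\mapsto N-X$, which swaps $p$ and $1-p$ and sends $\lfloor Np\rfloor$ to $\lceil N(1-p)\rceil$) is a sensible way to set up a self-contained proof.

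The self-contained argument you sketch, however, has a step that fails. The pairing $f(\lceil Np\rceil+j)\leftrightarrow f(\lceil Np\rceil-1-j)$ is misaligned: if the claimed termwise domination held for all $j\ge 0$, summing over $j$ would give $\Pr[X\ge\lceil Np\rceil]\le\Pr[X\le\lceil Np\rceil-1]$, i.e. $\Pr[X\le\lfloor Np\rfloor]\ge 1/2$ whenever $Np$ is not an integer, forcing the median to be at most $\lfloor Np\rfloor$; that conclusion is false, e.g. $\mathrm{Bin}(10,0.55)$ has $\Pr[X\le 5]\approx 0.496<1/2$, so its median is $6=\lceil Np\rceil$. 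Indeed the domination already fails at $j=0$: for $\mathrm{Bin}(2,1/2)$, $f(1)=\tfrac12>f(0)=\tfrac14$. Re-centering to pair $f(\lceil Np\rceil+1+j)$ with $f(\lceil Np\rceil-1-j)$ would yield the right conclusion after summing, but even that termwise inequality can fail --- for $\mathrm{Bin}(25,0.52)$ your ratio formula gives $f(14)/f(12)=\frac{13\cdot 0.52}{13\cdot 0.48}\cdot\frac{12\cdot 0.52}{14\cdot 0.48}\approx 1.006>1$ --- and the out-of-range indices you flag need an actual argument, not "their mass is small." So a correct elementary proof requires a different mechanism (for instance, the classical fact that an integer mean is a median of the binomial, combined with stochastic monotonicity in $p$), and the honest route, which is also the paper's, is simply to invoke \cite{kaas1980mean}. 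A minor further point: with weak inequalities $\ge 1/2$ on both tails you only conclude that \emph{some} median lies in $[\lfloor Np\rfloor,\lceil Np\rceil]$ (ties at exactly $1/2$ can put another median just outside), which suffices for the paper's application but is weaker than the "forces any median" claim in your reduction.
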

Thus, choosing $w=\lceil N/4\rceil$ guarantees that $g(w)/\dims^2\ge 1/2$. 

Next, we compute the inner product $\vadj{V_i} \Choi_{\POVM} \vvec{V_i}$. We first need to analyze the measurement information channel of Pauli measurements.

\begin{lemma}
\label{lem:pauli-mic-eigen}
    For $P=\sigma_1\otimes\cdots\otimes \sigma_{\nqubits}\in \Sigma^{\otimes \nqubits}$, let $\Luders_P$ be the measurement information channel of the Pauli measurement $\POVM_P$. Then for all Pauli observable $Q=\sigma_1'\otimes\cdots\otimes \sigma_{\nqubits}'\in(\Sigma\cup \eye_2)^{\otimes \nqubits}$, $Q$ is an eigenvector of $\Luders_P$ and
    \[
\Luders_P(Q)=Q\indic{\forall j\in[\nqubits], \sigma_j'\in\{\sigma_j,\eye_2\}}.
    \]
    In other words, the eigenvalue of ${Q}$ is 1 when the non-identity components of $Q$ match $P$, and 0 otherwise. 
\end{lemma}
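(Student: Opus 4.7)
The key observation is that the Pauli measurement $\POVM_P$ and the Pauli observable $Q$ both factor into a tensor product over qubits, so we can reduce everything to a single-qubit computation.

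First, I will establish the normalization: for each outcome $x \in \{-1,1\}^N$, the operator $M_x^P = \bigotimes_{j=1}^N (\eye_2 + x_j \sigma_j)/2$ is a rank-1 projector, so $\Tr[M_x^P] = \prod_j \Tr[(\eye_2 + x_j\sigma_j)/2] = 1$. Thus by \cref{def:mic},
\begin{equation*}
\Luders_P(Q) = \sum_{x \in \{-1,1\}^N} M_x^P \Tr[M_x^P Q].
\end{equation*}

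Next I will compute $\Tr[M_x^P Q]$ using the factorization
\begin{equation*}
\Tr[M_x^P Q] = \prod_{j=1}^N \Tr\bigl[(\eye_2 + x_j\sigma_j)\sigma_j'/2\bigr]
\end{equation*}
and the Pauli identities in \eqref{equ:pauli-property}. For each factor there are three cases: (a) $\sigma_j' = \eye_2$ gives $1$; (b) $\sigma_j' = \sigma_j$ gives $x_j$ (since $\sigma_j^2 = \eye_2$); (c) $\sigma_j' \in \Sigma \setminus \{\sigma_j\}$ gives $0$ (since $\Tr[\sigma_j'] = \Tr[\sigma_j\sigma_j'] = 0$). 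Consequently, if any coordinate falls in case (c), the trace vanishes for every $x$ and thus $\Luders_P(Q) = 0$, handling the zero-eigenvalue direction of the claim.

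In the remaining case every $\sigma_j' \in \{\sigma_j, \eye_2\}$; let $S = \{j : \sigma_j' \ne \eye_2\}$, so that $\Tr[M_x^P Q] = \prod_{j \in S} x_j$. Then, using that $\sum_x$ over $\{-1,1\}^N$ factors as $\prod_j \sum_{x_j \in \{-1,1\}}$ and that the integrand is a tensor product across $j$,
\begin{equation*}
\Luders_P(Q) = \bigotimes_{j=1}^N \Biggl( \sum_{x_j \in \{-1,1\}} \frac{\eye_2 + x_j \sigma_j}{2} \cdot (x_j)^{\mathbf{1}\{j \in S\}} \Biggr).
\end{equation*}
A single-qubit computation evaluates each factor: for $j \notin S$ the sum is $\eye_2$, and for $j \in S$ the sum is $\sigma_j = \sigma_j'$ (the $\eye_2$ term vanishes by antisymmetry in $x_j$, and the $x_j^2\sigma_j$ term gives $\sigma_j$). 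Hence $\Luders_P(Q) = \bigotimes_j \sigma_j' = Q$, as claimed.

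The argument is essentially bookkeeping; the only conceptual point to be careful about is the interchange of the sum over $x$ with the tensor product, which is justified because the summand splits as a product of single-qubit functions of $x_j$. I do not expect any real obstacle beyond verifying the three single-qubit trace evaluations.
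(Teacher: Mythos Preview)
Your proposal is correct and follows essentially the same argument as the paper: both compute $\Tr[M_x^P Q]$ qubit-by-qubit via \eqref{equ:pauli-property} to obtain $\prod_j(\indic{\sigma_j'=\eye_2}+x_j\indic{\sigma_j'=\sigma_j})$, dispose of the vanishing case, and then factor the sum over $x$ into a tensor product of single-qubit sums to recover $Q$. Your use of the index set $S$ is a cosmetic variant, but the logic and key steps are identical.
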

\begin{proof}
 Let $\Luders_P$ be the measurement information channel of a Pauli measurement $\POVM_{P}$. From \cref{def:mic} and \cref{equ:pauli-measurement},
\begin{align*}
    \Luders_P(\cdot)&=\sum_{x\in\{-1,1\}^{\nqubits}}\frac{M_x^{P}}{\Tr[M_x^P]}\Tr[(\cdot)M_x^P]\\
    &=\sum_{x\in\{-1,1\}^{\nqubits}}M_x^{P}\Tr[(\cdot)M_x^P].
\end{align*}
The second step is because Pauli measurement is a basis measurement. Thus each $M_x^P=\qproj{u_x^P}$ where $\{\qbit{u_x^P}\}_{x\in\{-1,1\}^{\nqubits}}$ is an orthonormal basis, and $\Tr[M_x^P]=1$.

Let $Q=\sigma_1'\otimes\cdots\otimes\sigma_{\nqubits}'$. We want to argue that $Q$ is an eigenvector of $\Luders_P$.  
\begin{align*}
\Tr[M_x^PQ]&=\Tr\left[\bigotimes_{j=1}^{\nqubits}\frac{\eye_2+x_j\sigma_j}{2}\bigotimes_{j=1}^{\nqubits}\sigma_j'\right]\\
    &=\Tr\left[\bigotimes_{j=1}^{\nqubits}\frac{\sigma_j'+x_j\sigma_j\sigma_j'}{2}\right]\\
    &=\prod_{j=1}^{\nqubits}\frac{\Tr[\sigma_j']+x_j\Tr[\sigma_j\sigma_j']}{2}\\
    &=\prod_{j=1}^{N}(\indic{\sigma_j'=\eye_2}+x_j\indic{\sigma_{j}'=\sigma_j})
\end{align*}
The final step is due to \eqref{equ:pauli-property}.
If for some $j\in[N]$, $\sigma_j'\ne \eye_2$ and $\sigma_j'\ne\sigma_j$, then
\[
\Tr[M_x^PQ]=0\implies\Luders_P(Q)=0.
\]
In this case $Q$ is an eigenvector of $\Luders_P$ with eigenvalue of 0. If otherwise,
\begin{align*}
    \Luders_P(Q)&=\sum_{x\in\{-1,1\}^{\nqubits}}M_x^P\prod_{j=1}^{N}(\indic{\sigma_j'=\eye_2}+x_j\indic{\sigma_{j}'=\sigma_j})\\
    &=\sum_{x\in\{-1, 1\}^{\nqubits}}\bigotimes_{j=1}^{\nqubits}\frac{\eye_2+x_j\sigma_j}{2}(\indic{\sigma_j'=\eye_2}+x_j\indic{\sigma_{j}'=\sigma_j})\\
    &=\bigotimes_{j=1}^{\nqubits}\sum_{x_j\in\{-1,1\}}\frac{\eye_2+x_j\sigma_j}{2}(\indic{\sigma_j'=\eye_2}+x_j\indic{\sigma_{j}'=\sigma_j})\\
    &=\bigotimes_{j=1}^{\nqubits}\Paren{\indic{\sigma_j'=\eye_2}\sum_{x_j\in\{-1,1\}}\frac{\eye_2+x_j\sigma_j}{2}+\indic{\sigma_j'=\sigma_j}\sum_{x_j\in\{-1,1\}}\frac{x_j\eye_2+\sigma_j}{2}} \\
    &=\bigotimes_{j=1}^{\nqubits}(\eye_2\indic{\sigma_j'=\eye_2}+\sigma_j\indic{\sigma_{j}'=\sigma_j})\\
    &=\bigotimes_{j=1}^{\nqubits}\sigma_j'=Q.\qedhere
\end{align*}
\end{proof}

Let $P,Q$ be defined in \cref{lem:pauli-mic-eigen} and $\Choi_P$ be the matrix form of $\Luders_P$ given a Pauli measurement $\POVM_P$. An immediate corollary is that
\begin{equation}
    \frac{1}{\dims}\vadj{Q} {\Choi}_P \vvec{Q}=\indic{\forall j\in[\nqubits], \sigma_j'\in\{\sigma_j,\eye_2\}}.
    \label{equ:pauli-mic-sum}
\end{equation}

When $V_1, \ldots, V_{\dims^2-1}$ are the normalized Pauli observables sorted in increasing order of their weights, setting $\ell=g(w)$ (the number of Pauli observables with weight at least $N-w$), we have
\begin{align*}
    \sum_{i=1}^{\ell}\vadj{V_i} {\Choi}_P \vvec{V_i}=\sum_{m=0}^{w}{\nqubits\choose m}.
\end{align*}
This is because $\vadj{V_i} {\Choi}_P \vvec{V_i}=1$ 
only when $V_i$ has non-identity components that match the ones in $P$ and 0 otherwise. 
There are only ${\nqubits\choose N-m}={\nqubits\choose m}$ of them among all $V_i$'s with weight $\nqubits-m$.

The following result gives an upper bound on the sum of binomial coefficients,
\begin{lemma}[{\cite[Lemma 16.19]{downey2012parameterized}}]
\label{lem:sum-binomial-coef}
    Let $n\ge 1$ and $0\le q\le 1/2$, then
    \[
    \sum_{i=0}^{\lfloor nq\rfloor}{n\choose i}\le 2^{n h(q)},
    \]
    where $h(q)=-q\log q -(1-q)\log(1-q)$ is the binary entropy function.
\end{lemma}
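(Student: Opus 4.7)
The plan is to use the classical probabilistic argument via the binomial distribution. Consider $X\sim\binomial{n}{q}$, so that $\sum_{i=0}^{n}\binom{n}{i}q^i(1-q)^{n-i}=1$. Dropping all terms with $i>\lfloor nq\rfloor$ gives the trivial upper bound
\[
\sum_{i=0}^{\lfloor nq\rfloor}\binom{n}{i}q^i(1-q)^{n-i}\le 1.
\]
The strategy is then to show that, over the range $0\le i\le \lfloor nq\rfloor$, every term $q^i(1-q)^{n-i}$ is at least $2^{-nh(q)}=q^{nq}(1-q)^{n(1-q)}$. Combining the two bounds and rearranging immediately yields the lemma.

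To justify the term-wise lower bound, I would examine the log of the weight, $f(i)\eqdef i\log q+(n-i)\log(1-q)$, which is linear in $i$ with slope $\log\bigl(q/(1-q)\bigr)$. Since the hypothesis $q\le 1/2$ makes $q/(1-q)\le 1$, this slope is non-positive, so $f(i)$ is non-increasing. Equivalently, $q^i(1-q)^{n-i}$ is a non-increasing function of $i$. Therefore, for every integer $i$ with $i\le\lfloor nq\rfloor\le nq$,
\[
q^i(1-q)^{n-i}\;\ge\;q^{nq}(1-q)^{n(1-q)}\;=\;2^{-nh(q)},
\]
where the last equality is the definition of $h(q)$ after converting to base $2$ logarithms.

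Putting the pieces together, I would write
\[
1\;\ge\;\sum_{i=0}^{\lfloor nq\rfloor}\binom{n}{i}q^i(1-q)^{n-i}\;\ge\;2^{-nh(q)}\sum_{i=0}^{\lfloor nq\rfloor}\binom{n}{i},
\]
and multiply both sides by $2^{nh(q)}$ to conclude $\sum_{i=0}^{\lfloor nq\rfloor}\binom{n}{i}\le 2^{nh(q)}$.

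There is no real obstacle here; the only subtlety is making sure the monotonicity argument is applied correctly at non-integer values. Specifically, the inequality $q^i(1-q)^{n-i}\ge q^{nq}(1-q)^{n(1-q)}$ compares the discrete index $i$ against the (possibly non-integer) point $nq$; this is valid because the exponential function $t\mapsto q^t(1-q)^{n-t}$ is continuously non-increasing on the whole real line when $q\le 1/2$, and $i\le\lfloor nq\rfloor\le nq$. The $q=1/2$ boundary case is also fine: the slope is exactly zero, every weight equals $2^{-n}$, and the bound reduces to $\sum_{i=0}^{\lfloor n/2\rfloor}\binom{n}{i}\le 2^n$, which is trivially true.
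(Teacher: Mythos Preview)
Your argument is correct and is exactly the standard proof of this classical bound. Note, however, that the paper does not give its own proof of this lemma: it simply quotes it from \cite{downey2012parameterized} as a known result, so there is no in-paper proof to compare against.
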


Combining with \eqref{equ:pauli-lower-inequ}\eqref{equ:pauli-weight-number}\eqref{equ:pauli-mic-sum}, setting $w=\lceil N/4\rceil$,
\begin{align*}
    \frac{1}{100}&\le \frac{8 \ns c^2 \eps^2}{\ell^2} \sup_{P\in\Sigma^{\otimes\nqubits}}\sum_{i=1}^{\ell}\vadj{V_i} {\Choi}_P \vvec{V_i} +16\exp\{-\alpha\dims\}\ns c^2\eps^2\\
    &=8nc^2\eps^2\Paren{\frac{\sum_{m=0}^{w}{\nqubits\choose m}}{g(w)^2}+2\exp\{-\alpha\dims\}}\\
    &\le 8nc^2\eps^2\Paren{\frac{2\cdot 2^{\nqubits h(1/4)}}{\dims^4/4}+2\exp\{-\alpha\dims\}}\\
    &\le 16\ns \cd^2\eps^2\Paren{4\cdot 2^{(h(1/4)-4)\nqubits}+\exp\{-\alpha2^{\nqubits}\}}.
\end{align*}
When $N\ge 10$, the second term is negligible. Rearranging the terms, we must have
\[
\ns = \bigOmega{\frac{2^{(4-h(1/4))\nqubits}}{\eps^2}}.
\]
Finally, noting that $2^{4-h(1/4)}\ge 9.118$ completes the proof.

\section{Upper bound for Pauli measurements}
\label{sec:pauli-upper}
This section starts with an observation about Pauli measurements, which is common knowledge for quantum information experimentalists. Then, we employ this observation to improve previous results about quantum state tomography using Pauli measurements.

\subsection{ An Observation about Pauli Measurements}
When we measure an element of the Pauli group, for instance, $\sigma_X\otimes \sigma_Y$, on a two-qubit state $\rho$, the outcome is a sample from a $4$-dimensional probability distribution, says $(p_{00},p_{01},p_{10},p_{11})$, such that
\begin{align*}
\tr(\rho(\sigma_X\otimes \sigma_Y))=p_{00}-p_{01}-p_{10}+p_{11}.
\end{align*}

One can easily observe that
\begin{align*}
\Tr[\rho(\sigma_X\otimes \sigma_I)]=p_{00}+p_{01}-p_{10}-p_{11},\\
\Tr[\rho(\sigma_I\otimes \sigma_Y)]=p_{00}-p_{01}+p_{10}-p_{11},\\
\Tr[\rho(\sigma_I\otimes \sigma_I)]=p_{00}+p_{01}+p_{10}+p_{11}.
\end{align*}

In other words, measuring $XY$, we obtained a sample of $\sigma_X\sigma_I$, a sample of $\sigma_I\sigma_Y$, and a sample of $\sigma_I\sigma_I$. 

For a general $n$-qubit system, we have the following observation.
\begin{observation}
For any $P=P_1\otimes P_2\otimes\cdots\otimes P_n\in\{\sigma_X,\sigma_Y,\sigma_Z\}^{\otimes N}$, the measurement result of performing measurement $P_i$ on the $i$-th qubit is an $N$-bit string $s$. One can interpret the measurement result of performing $Q_i\in\{\sigma_I,\sigma_X,\sigma_Y,\sigma_Z\}$ on the $i$-th qubit if $Q_i=P_i$ or $Q_i=\sigma_I$. We call those $Q=Q_1\otimes Q_2\otimes\cdots\otimes Q_N$'s correspond to $P$.
\end{observation}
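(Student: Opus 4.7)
The plan is to show that the $N$-bit outcome $s = (s_1, \ldots, s_N) \in \{-1, 1\}^N$ produced by measuring $P = P_1 \otimes \cdots \otimes P_N$ on $\rho$ can be post-processed into a valid sample from the outcome distribution of measuring any corresponding $Q = Q_1 \otimes \cdots \otimes Q_N$ with $Q_j \in \{\sigma_I, P_j\}$. The natural post-processing is the coordinate-wise map $\tilde s_j = s_j$ when $Q_j = P_j$ and $\tilde s_j = 1$ when $Q_j = \sigma_I$ (matching the deterministic $+1$ outcome of measuring $\sigma_I$).

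To verify correctness I would compute, for each outcome $r$ that measuring $Q$ can produce, the probability $\Pr[\tilde s = r]$ and compare it to $\Tr[M_r^Q \rho]$. Recall from \eqref{equ:pauli-measurement} that the Pauli-measurement POVM element for outcome $s$ is $M_s^P = \bigotimes_j \frac{\eye_2 + s_j P_j}{2}$. Since $\tilde s_j$ is deterministically $+1$ on coordinates where $Q_j = \sigma_I$, only those $r$ with $r_j = +1$ on such coordinates have nonzero probability---which is exactly the support of measuring $Q$. For such an $r$, I would sum $\Tr[M_s^P \rho]$ over the $s$ that map to $r$: coordinates with $Q_j = P_j$ are pinned to $s_j = r_j$, while coordinates with $Q_j = \sigma_I$ are summed over $s_j \in \{-1, +1\}$. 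The sum factorizes across qubits; on identity coordinates the single-qubit identity $\sum_{s_j \in \{-1, 1\}} \frac{\eye_2 + s_j P_j}{2} = \eye_2$ collapses the factor to $\eye_2$, while on non-identity coordinates the factor becomes $\frac{\eye_2 + r_j P_j}{2} = \frac{\eye_2 + r_j Q_j}{2}$. The resulting expression equals $\Tr[M_r^Q \rho]$, the Born-rule probability for outcome $r$ under measurement $Q$.

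Once distribution-matching is in place, the motivating expectation identities in the two-qubit example (e.g.\ $\Tr[\rho(\sigma_X \otimes \sigma_I)] = p_{00} + p_{01} - p_{10} - p_{11}$) follow immediately as $\EE[\prod_j \tilde s_j] = \Tr[\rho Q]$ by linearity and \cref{fact:pauli}. The argument is essentially a tensor-factorization identity, and no technical obstacle arises; the only care needed is bookkeeping of which qubit coordinates are fixed versus summed. The underlying mechanism is that a measurement in the $P_j$-eigenbasis simultaneously serves as the (trivial) measurement of $\sigma_I$ on that qubit, because the $\sigma_I$-POVM can be realized by any basis measurement followed by relabeling every outcome to $+1$.
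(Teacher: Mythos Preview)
Your argument is correct and in fact more detailed than the paper's own treatment: the paper does not give a general proof of this observation but merely states it after illustrating the two-qubit case via the expectation identities for $\sigma_X\otimes\sigma_Y$. Your distribution-matching computation---marginalizing out the identity coordinates using $\sum_{s_j\in\{-1,1\}}\frac{\eye_2+s_jP_j}{2}=\eye_2$ and recovering $\Tr[M_r^Q\rho]$---is exactly the rigorous verification that the paper leaves implicit, and the expectation identities you derive at the end are what the paper actually uses downstream.
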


\subsection{Algorithm and error analysis} 
Our measurement scheme is as follows: For any $\eps>0$, fix an integer $m$.
\begin{enumerate}
    \item  For any $P\in\{\sigma_X,\sigma_Y,\sigma_Z\}^{\otimes N}$, one performs $m$ times $P$ on $\rho$, and records the $m$ samples of the $2^N$ dimensional outcome distribution.

According to the key observation, this measurement scheme provides $m\cdot 3^{N-w}$ samples of the expectation $\tr(\rho P)$, say, $\frac{\mu_P}{m\cdot 3^{N-w}}$, for each Pauli operator $P\in \{\sigma_I,\sigma_X,\sigma_Y,\sigma_Z\}^{\otimes N}$ with weight $w$, where $-m\cdot 3^{N-w}\leq\mu_P\leq m\cdot 3^{N-w}$.

\item Output 
\begin{align*}
\sigma=\sum_P \frac{\mu_P}{m\cdot 3^{N-w}\cdot 2^N} P.
\end{align*}
\end{enumerate}

Using this scheme, we obtained $m\cdot 3^N$ independent samples,
\begin{align*}
X_1,X_2,\cdots, X_{m\cdot 3^N}.
\end{align*}
Each  $X_i$ is an $N$-bit string recording outcomes on all qubits (using bit 0 to denote the +1 eigenvalue and bit 1 to denote the -1 eigenvalue of the measured Pauli operator).   
Given that each operator is measured $m$ times, specifically, we assign that $X_1,X_2,\cdots,X_{m}$ correspond to the measurement $\sigma_X^{\otimes N}$, $X_{m+1},X_{m+2}$, and $\cdots,X_{2m}$ corresponds to the measurement $\sigma_X^{\otimes N-1}\otimes \sigma_Y$, \dots, and until $\sigma_Z^{\otimes N}$.

We observe that for any $P$ of weight $w$,
$\mu_P=\sum_{j=0}^{m\cdot 3^{N-w}-1} Z_j$,
where $Z_j$ are independent samples from the distribution $Z$
\begin{align*}
\mathrm{Pr}(Z=1)=\frac{1+\tr (\rho P)}{2}, \;
\mathrm{Pr}(Z=-1)=\frac{1-\tr (\rho P)}{2}.
\end{align*}
We have
\begin{align*}
\expectDistrOf{}{Z}&=\tr (\rho P), \expectDistrOf{}{Z^2}=1, \\
\expectDistrOf{}{\mu_P}&=m\cdot 3^{N-w}\cdot\tr (\rho P), \\
\expectDistrOf{}{\mu_P^2}&=\expectDistrOf{}{\mu_P}^2+\Var[\mu_P] \\
&=\expectDistrOf{}{\mu_P}^2+m\cdot 3^{N-w}\Var[Z] \\
&=m^2\cdot 9^{N-w}\cdot \tr^2 (\rho P)+m\cdot 3^{N-w}(1-\tr^2 (\rho P)).
\end{align*}
Thus, we can verify that
\begin{align*}
\expectDistrOf{}{\sigma}=\rho,
\end{align*}
where the expectation is taken over the probabilistic distribution according to the measurements.

For convenience, we define the function $f:X_1\times X_2\times \cdots\times X_{m\cdot 3^N}\mapsto \mathbb{R}$
\begin{align*}
f(\sigma)=\hsnorm{\rho-\sigma}=\sqrt{{\rm Tr}[(\rho-\sigma)^\dagger (\rho-\sigma)]}.
\end{align*}
Note that we can write the unknown state $\rho$ as
\begin{align*}
\rho=\sum_P \frac{\alpha_P}{2^N} P.
\end{align*}
According to Cauchy-Schwarz and Jensen's inequality, we have
\begin{align*}
&\expectDistrOf{}{f(\sigma)} \leq\sqrt{\expectDistrOf{}{f(\sigma)^2}} = \sqrt{\expectDistrOf{}{\tr\rho^2-2\tr \rho\sigma+\tr\sigma^2}}\\
=& \sqrt{\expectDistrOf{}{\tr\sigma^2-\tr\rho^2}}
= \sqrt{\frac{1}{2^N}\sum_P \expectDistrOf{}{\frac{\mu_P^2}{m^2\cdot 9^{N-w_P}}-\alpha_P^2}} \\
=& \sqrt{\frac{1}{2^N}\sum_P \left(\frac{m^2\cdot 9^{N-w_P}\cdot \alpha_P^2+m\cdot 3^{N-w_P}(1-\alpha_P^2)}{m^2\cdot 9^{N-w_P}}-\alpha_P^2\right)} \\
=&\sqrt{\frac{1}{m\cdot 2^N}\cdot{\sum_P \frac{1-\alpha_P^2}{3^{N-w_P}}}} < \sqrt{\frac{1}{m\cdot 2^N}\cdot{\sum_P \frac{1}{3^{N-w_P}}}}\\
=& \sqrt{\frac{1}{m\cdot 2^N}\cdot{\sum_{w_P=0}^N \frac{1}{3^{N-w_P}}{{N}\choose{w_P}}3^{w_P}}} = \sqrt{\frac{1}{m\cdot 6^N}\cdot (1+9)^N} \\
=& \sqrt{\frac{5^{\nqubits}}{m\cdot 3^N}}.
\end{align*}

For any sample $X_i$ corresponding to $P\in\{\sigma_X,\sigma_Y,\sigma_Z\}^{\otimes N}$, if only $X_i$ is changed, $\mu_Q$ would be changed only for those $Q\in\{\sigma_I,\sigma_X,\sigma_Y,\sigma_Z\}^{\otimes N}$ where
$Q$ is obtained by replacing some $\{\sigma_X,\sigma_Y,\sigma_Z\}$'s of $P$ by $\sigma_I$.
Moreover, the resultant value of $\mu_Q$ would change by two at most.
According to the triangle inequality, $f$ would change at most 
\begin{align*}
\hsnorm{\sum_Q  \frac{\Delta\mu_Q}{m\cdot 3^{N-w_Q}\cdot 2^N} Q}&=\sqrt{\sum_Q \frac{\Delta\mu_Q^2}{m^2\cdot 9^{N-w_Q}\cdot 2^N}}\\
&\leq \sqrt{\sum_Q \frac{2^2}{m^2\cdot 9^{N-w_Q}\cdot 2^N}} \\
&= \sqrt{\sum_{w_Q=0}^N \frac{2^2}{m^2\cdot 9^{N-w_Q}\cdot 2^N}{{N}\choose{w_Q}}} = \frac{2\cdot \sqrt{5}^N}{m\cdot 3^N},
\end{align*}
where $Q$ ranges over all Paulis which correspond to $P$'s, and $\Delta\mu_Q$ denotes the difference of $\mu_Q$ when $X_i$ is changed.

We use McDiarmid's inequality to bound the probability of success.
\begin{lemma}\label{mc}
Consider independent random variables ${\displaystyle X_{1},X_{2},\dots X_{n}}$ on probability space $ {\displaystyle (\Omega ,{\mathcal {F}},{\text{P}})}$ where ${\displaystyle X_{i}\in {\mathcal {X}}_{i}}$ for all ${\displaystyle i}$ and a mapping ${\displaystyle f:{\mathcal {X}}_{1}\times {\mathcal {X}}_{2}\times \cdots \times {\mathcal {X}}_{n}\rightarrow \mathbb {R} }$. Assume there exist constant $ {\displaystyle c_{1},c_{2},\dots ,c_{n}} $ such that for all $ {\displaystyle i}$,
\begin{align}{\displaystyle {\underset {x_{1},\cdots ,x_{i-1},x_{i},x_{i}',x_{i+1},\cdots ,x_{n}}{\sup }}|f(x_{1},\dots ,x_{i-1},x_{i},x_{i+1},\cdots ,x_{n})-f(x_{1},\dots ,x_{i-1},x_{i}',x_{i+1},\cdots ,x_{n})|\leq c_{i}.} 
\end{align}
In other words, changing the value of the ${\displaystyle i}$-th coordinate ${\displaystyle x_{i}}$ changes the value of ${\displaystyle f}$ by at most ${\displaystyle c_{i}}$. Then, for any ${\displaystyle \epsilon >0}$,
\begin{align} 
{\displaystyle {\mathrm{Pr}}(f(X_{1},X_{2},\cdots ,X_{n})-\expectDistrOf{}{f(X_{1},X_{2},\cdots ,X_{n})}\geq \epsilon )\leq \exp \left(-{\frac {2\epsilon ^{2}}{\sum _{i=1}^{n}c_{i}^{2}}}\right)} .
\end{align}
\end{lemma}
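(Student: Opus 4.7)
The plan is to prove McDiarmid's inequality by the standard Doob-martingale route, which turns the bounded-differences hypothesis on $f$ into a bound on martingale increments and then applies a Hoeffding-type concentration bound. Concretely, define the filtration $\mathcal{F}_i = \sigma(X_1, \ldots, X_i)$ with $\mathcal{F}_0$ trivial, and set
\[
M_i \eqdef \expectDistrOf{}{f(X_1, \ldots, X_n) \mid \mathcal{F}_i}, \qquad D_i \eqdef M_i - M_{i-1}.
\]
Then $(M_i)_{i=0}^n$ is a martingale with $M_0 = \expectDistrOf{}{f}$ and $M_n = f(X_1, \ldots, X_n)$, so $f - \expectDistrOf{}{f} = \sum_{i=1}^n D_i$.

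The first key step is to bound the conditional range of each increment. Using independence of the $X_j$'s, one can write
\[
M_i = g_i(X_1, \ldots, X_i), \qquad g_i(x_1, \ldots, x_i) = \expectDistrOf{}{f(x_1, \ldots, x_i, X_{i+1}, \ldots, X_n)},
\]
so that, conditionally on $\mathcal{F}_{i-1}$, the random variable $D_i$ takes values of the form $g_i(X_1, \ldots, X_{i-1}, x_i) - g_{i-1}(X_1, \ldots, X_{i-1})$. The bounded-differences hypothesis on $f$, combined with the fact that averaging over $X_{i+1}, \ldots, X_n$ preserves the per-coordinate range, yields
\[
\sup_{x_i, x_i'} \bigl| g_i(X_1, \ldots, X_{i-1}, x_i) - g_i(X_1, \ldots, X_{i-1}, x_i') \bigr| \leq c_i
\]
almost surely. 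Hence $D_i$ lies in a conditional interval of length at most $c_i$ and has zero conditional mean.

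Next, apply Hoeffding's lemma conditionally: a mean-zero random variable supported on an interval of length $c_i$ is sub-Gaussian with parameter $c_i/2$, giving
\[
\expectDistrOf{}{e^{\lambda D_i} \mid \mathcal{F}_{i-1}} \leq \exp\!\left(\frac{\lambda^2 c_i^2}{8}\right) \quad \text{for all } \lambda \in \mathbb{R}.
\]
Iterating the tower property over $i = n, n-1, \ldots, 1$ yields $\expectDistrOf{}{e^{\lambda(f - \expectDistrOf{}{f})}} \leq \exp(\lambda^2 \sum_i c_i^2 / 8)$, and a Chernoff bound with the optimal choice $\lambda = 4\epsilon / \sum_i c_i^2$ gives the claimed tail bound.

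The only mildly delicate step is verifying the conditional range bound for $D_i$: one needs to argue that the supremum and infimum of $g_i(X_1, \ldots, X_{i-1}, x_i)$ over $x_i$ are both within $c_i$ of each other, which requires using independence to commute the expectation over $(X_{i+1}, \ldots, X_n)$ with the sup/inf over $x_i$ and then invoking the pointwise bounded-differences hypothesis; this is standard but is the place where the independence of the $X_j$'s is essential. Everything else (Hoeffding's lemma, Chernoff) is routine.
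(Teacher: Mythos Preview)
Your proof is the standard and correct Doob-martingale proof of McDiarmid's inequality. The paper, however, does not prove this lemma at all: it simply states McDiarmid's inequality as a known tool and immediately applies it to bound the deviation of $f(\sigma)=\hsnorm{\rho-\sigma}$ from its expectation. So there is no ``paper's proof'' to compare against; you have supplied a complete argument where the paper only quotes the result.
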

We only consider $\delta<1/3$, then $\log(1/\delta)>1$.
For any $\eps'>0$, by choosing $m=(3+2\sqrt{2})\cdot\frac{5^N\log\frac{1}{\delta}}{3^N\cdot \eps'^2}$, we have $\expectDistrOf{}{f(\sigma)}< (\sqrt{2}-1){\eps'}$.
Therefore,
\begin{align*}
\mathrm{Pr}(f(\sigma) > \eps') &< \mathrm{Pr}(f(\sigma)-\expectDistrOf{}{f(\sigma)}>(2-\sqrt{2}){\eps'}) \\
&<\exp(-\frac{(12-8\sqrt{2})\cdot \eps'^2}{4\cdot \frac{5^N}{m^2\cdot 9^N}\cdot m\cdot 3^N}) \\
=& \exp(-\frac{m\cdot(3-2\sqrt{2})\cdot 3^N\cdot\eps'^2}{5^N}) <\delta, 
\end{align*}
where the inequality is by \cref{mc}.

For a general quantum state and $\eps>0$, we let $\eps'=\frac{\eps}{\sqrt{2^N}}$, and know that
$||\rho-\sigma||_1>\eps$ implies $\hsnorm{\rho-\sigma}>\eps'$. Therefore,
\begin{align*}
\mathrm{Pr}(||\rho-\sigma||_1>\eps) &\leq \mathrm{Pr}(\hsnorm{\rho-\sigma}>\eps')=\mathrm{Pr}(f>\eps').
\end{align*}

The total number of used copies is 
\begin{align*}
\ns = m\cdot 3^N=(3+2\sqrt{2})\cdot\frac{10^N\log\frac{1}{\delta}}{\eps^2}.
\end{align*}

\newcommand{\MUB}{\POVM_{MUB}}
\section{Upper bound for tomography with finite outcomes}
\label{sec:finite-upper}
We will show the tightness of the adaptive tomography bounds for k-outcome POVMs by modifying the Projected Least Squares Method (PLS)~\cite{guctua2020fast} to work with k-outcome POVMs. We present these adjustments for the case when $k = d$ and $k < d$. As a result, we will have the first upper and lower bounds for adaptive tomography for k-outcome measurements, where the upper bound is achieved with non-adaptive algorithms. The key component is reducing the $\MUB$ to a k outcome measurement,
$$\MUB \eqdef \left\{\frac{1}{d+1} \ket{\psi_x^k} \bra{\psi_x^k}\right\}_{k \in [d+1], x \in [d]},$$
where each fixed $k$ corresponds to each one of the Maximally mutually unbiased bases. The reduction will follow similarly to~\cite{liu2024restricted}.
\subsection{Algorithm for $k=d$} \label{sub-keqd}
Measuring with $\MUB$ acts as a uniform sampling $i \sim Unif([d+1])$ to select one of the MUB and measuring with the POVM described by $\left\{\ket{\psi_x^i} \bra{\psi_x^i}\right\}_{x \in [d]}$. So, we can split each of the MUB bases across the multiple copies and uniformly sample amongst them to replicate the outcome distribution of measuring with $\MUB$. 

\begin{algorithm}
\caption{Finite Outcome Tomography for $k = d$}\label{alg:tom-keqd}
\hspace*{0.1cm} \textbf{Input:} $n$ copies of state $\rho$ \\
\hspace*{0.1cm} \textbf{Output:} Estimate $\hat{\rho} \in \mathcal{C}^{d \times d}$
\begin{algorithmic}
\State Divide $\MUB$ into $d+1$ groups of d-outcome measurements $\mathcal{M}_j := \left\{\ket{\psi_x^j} \bra{\psi_x^j}\right\}_{x \in [d]}$.
\State Divide $n$ copies into $d+1$ equally sized groups, each group has $n_0 = n/(d+1)$ copies.
\For {$j = 1, ..., d+1$}
\State For group $j$, apply $\mathcal{M}_j$. Let the outcomes be $x_1^{(j)}, ..., x_{n_0}^{(j)}$.
\EndFor
\State Generate n/2 i.i.d samples from $Unif([d+1])$ and  let $m_j$ be the number of times $j$ appears.
\State Let $x = (x_1, ..., x_{d+1})$ where $x_j = (x_1^{(j)}, ..., x_{\min\{n_0, m_j\}}^{(j)}$)
\State From $x$, obtain empirical frequencies $F = (f_1, ..., f_{d (d+1)})$ by obtaining group specific frequencies of each $x_i$ and concatenating the frequency vectors together.
\State \Return $\hat{\rho} = PLS(F)$
\end{algorithmic}
\end{algorithm}

For the analysis, we will use the multiplicative Chernoff Bound for sums of i.i.d random variables.
\begin{lemma}[Multiplicative Chernoff Bound]
    \label{mult_chernoff}
   Let $X_1, ..., X_n$ be i.i.d with $\expectDistrOf{}{X_1} = \mu$. Then, 
   $$\Pr\left[\sum_{i}^n X_i \geq n(1+\alpha)\mu\right] \leq \exp{\left\{-\frac{n \alpha^2 \mu}{2 + \alpha}\right\}}\;, \alpha > 0$$ 
   $$\Pr\left[\sum_{i}^n X_i \geq n(1-\alpha)\mu\right] \leq \exp{\left\{-\frac{n \alpha^2 \mu}{2}\right\}}\;, \alpha \in (0,1)$$ 
\end{lemma}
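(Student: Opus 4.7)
The plan is the standard Chernoff-style argument via the moment generating function, under the usual multiplicative-Chernoff assumption that $X_i\in[0,1]$ (implicit here, and matching the Bernoulli-type uses in \cref{alg:tom-keqd}). Write $S_n=\sum_{i=1}^n X_i$ and treat the two tails separately.

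For the upper tail, I would fix $t>0$ and apply Markov's inequality to the nonnegative variable $e^{tS_n}$:
\[
\Pr\bigl[S_n\ge n(1+\alpha)\mu\bigr]\le e^{-tn(1+\alpha)\mu}\,\expectDistrOf{}{e^{tS_n}}=e^{-tn(1+\alpha)\mu}\,\expectDistrOf{}{e^{tX_1}}^n,
\]
using independence. Since $X_1\in[0,1]$ and $x\mapsto e^{tx}$ is convex, the standard linearization $e^{tX_1}\le 1-X_1+X_1 e^t$ gives $\expectDistrOf{}{e^{tX_1}}\le 1+\mu(e^t-1)\le\exp(\mu(e^t-1))$. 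Plugging this in yields the bound $\exp\!\bigl(n\mu(e^t-1)-tn(1+\alpha)\mu\bigr)$. Next I would optimize by choosing $t=\log(1+\alpha)$, which collapses the exponent to $-n\mu\bigl[(1+\alpha)\log(1+\alpha)-\alpha\bigr]$, and then invoke the elementary inequality $(1+\alpha)\log(1+\alpha)-\alpha\ge \alpha^2/(2+\alpha)$ for $\alpha>0$, verified by noting equality at $\alpha=0$ and checking that the derivative of the difference is nonnegative on $(0,\infty)$.

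For the lower tail, the argument is completely symmetric: apply Markov to $e^{-tS_n}$ with $t>0$, reuse the same MGF bound with $t$ replaced by $-t$ (now using $e^{-tX_1}\le 1-X_1+X_1 e^{-t}$, which requires a small sign check), optimize by setting $t=-\log(1-\alpha)$, and finish with the companion inequality $(1-\alpha)\log(1-\alpha)+\alpha\ge \alpha^2/2$ for $\alpha\in(0,1)$. The only non-mechanical pieces are those two scalar inequalities comparing the exact Kullback--Leibler rate function with its quadratic lower envelope; both are standard calculus exercises and constitute the main (minor) obstacle. No measure-concentration machinery beyond Markov's inequality is needed.
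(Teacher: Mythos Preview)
The paper does not prove this lemma; it is stated as a standard fact and invoked directly in the analysis of the $k$-outcome tomography algorithms. Your proposal is the textbook Chernoff argument (Markov on the MGF, convexity bound $\expectDistrOf{}{e^{tX_1}}\le\exp(\mu(e^t-1))$ for $X_1\in[0,1]$, optimization in $t$, and the two elementary rate-function inequalities), which is correct and is exactly the proof one would cite. You are also right to flag that the hypothesis $X_i\in[0,1]$ is implicit in the statement and is what is actually needed in the paper's applications (where the $X_i$ are Bernoulli indicators); note too that the second displayed inequality in the statement is evidently a lower-tail bound and the direction of the event should be $\le n(1-\alpha)\mu$, which you have silently corrected.
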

\begin{theorem} \label{thm_keqd}
    For $k=d$, Algorithm \cref{alg:tom-keqd} will give estimate $\hat{\rho}$ such that $\Pr[\tracenorm{\hat{\rho} - \rho} \leq \eps] \geq \frac{2}{3}$ with $n = O\left( \frac{d^3 \log d}{\eps^2}\right)$
\end{theorem}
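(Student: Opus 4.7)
The plan is to decouple the argument into two essentially independent pieces: (i) showing that the simulation of the MUB POVM $\MUB$ via uniform sampling over the $d+1$ precomputed batches succeeds with high probability, and (ii) invoking the standard PLS analysis for $\MUB$ (which is a $2$-design) to bound the trace-distance error on the resulting $n/2$ i.i.d.\ samples. We want the total failure probability to be at most $1/3$, so we budget each part to fail with probability at most $1/6$.

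First, I would verify the simulation step. Conditional on the realized sampling counts $m_1,\dots,m_{d+1}$ with $\sum_j m_j = n/2$, the outputs we read off are a genuine i.i.d.\ sample of size $n/2$ from the MUB outcome distribution exactly when $m_j \le n_0 = n/(d+1)$ for every $j$, since each batch $j$ holds exactly $n_0$ fresh outcomes of $\mathcal{M}_j$ applied to independent copies of $\rho$. Each $m_j$ is $\mathrm{Binomial}(n/2, 1/(d+1))$ with mean $\mu = n/(2(d+1)) = n_0/2$, so the multiplicative Chernoff bound with $\alpha = 1$ gives
\[
\Pr[m_j \geq n_0] \;=\; \Pr[m_j \geq (1+1)\mu] \;\leq\; \exp\!\left(-\tfrac{\mu}{3}\right) \;=\; \exp\!\left(-\tfrac{n}{6(d+1)}\right).
\]
A union bound over $j\in[d+1]$ shows the bad event has probability at most $(d+1)\exp(-n/(6(d+1)))$, which is $\le 1/6$ for $n = \Omega(d\log d)$, easily satisfied when $n = \Omega(d^3\log d/\eps^2)$.

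Next, conditioned on the simulation succeeding, the empirical frequency vector $F$ is exactly the histogram of $n/2$ i.i.d.\ outcomes of $\MUB$ applied to $\rho$. Here I would invoke the Projected Least Squares analysis for $2$-designs due to \cite{guctua2020fast}: since $\MUB$ is an exact complex-projective $2$-design, the PLS estimator satisfies $\Pr[\tracenorm{\hat{\rho}-\rho} \le \eps] \ge 5/6$ using $\bigO{d^3\log d/\eps^2}$ i.i.d.\ measurement outcomes. Applied with $n/2$ samples and $n = C d^3\log d/\eps^2$ for a sufficiently large constant $C$, this gives the desired accuracy with probability at least $5/6$. Combining the two failure events by a union bound yields success probability at least $1-1/6-1/6 = 2/3$, and the total copy budget is $(d+1)n_0 = n = \bigO{d^3\log d/\eps^2}$ as claimed.

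The main technical obstacle is purely at the interface between the two steps: one has to be careful that the randomness used to produce the simulated sample stream (the Unif$([d+1])$ draws) is independent of the quantum randomness inside each batch, so that after conditioning on $\{m_j\le n_0\}_{j}$ the $n/2$ outputs are truly i.i.d.\ from $\MUB(\rho)$ and the PLS guarantee applies verbatim. Everything else is quantitative bookkeeping: checking that the Chernoff exponent dominates the $\log(d+1)$ from the union bound (trivial given the $d^3\log d$ scaling of $n$), and invoking the off-the-shelf PLS bound.
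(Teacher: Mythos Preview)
Your proposal is correct and follows essentially the same route as the paper: both apply the multiplicative Chernoff bound with $\alpha=1$ to each $m_j\sim\mathrm{Binomial}(n/2,1/(d+1))$, union-bound over the $d+1$ batches, and then invoke the PLS guarantee for MUB/2-designs from \cite{guctua2020fast} on the resulting $n/2$ simulated outcomes. The only cosmetic difference is that the paper tracks explicit constants (obtaining $0.99$ rather than $2/3$), while you budget the failure probabilities as $1/6+1/6$; your additional remark about the independence of the uniform draws from the batch outcomes is a nice point that the paper leaves implicit.
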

\begin{proof}
Notice that each sample made will follow the outcome distribution of applying $\MUB$ to a single copy of $\rho$. 
Given $n$ copies, it will be shown that $\frac{n}{2}$ such samples will be made with sufficiently high probability. This is when $m_j \leq n_j$ for all $j \in [d+1]$. 
Using~\cref{mult_chernoff}, on the $m_j \sim Bin(\frac{n}{2}, \frac{1}{d+1})$, which is sum of $Y_1,...,Y_{\frac{n}{2}} \sim Bern(\frac{1}{d+1})$, $\mu = \expectDistrOf{}{Y_1} = \frac{1}{d+1}$,
\begin{align*}
    \Pr\left[m_j > n_j\right] = \Pr\left[\sum_{i=1}^\frac{n}{2} Y_i > 2 n \mu \right] \leq \exp{\left\{-\frac{n}{6(d+1)}\right\}}.
\end{align*}
Furthermore, by union bound,
\begin{align*}
    \Pr\left[\exists_j m_j > n_j\right] \leq \sum_{j=1}^{d+1} \Pr\left[m_j > n_j\right] \leq (d+1) \exp{\left\{-\frac{n}{6(d+1)}\right\}}.
\end{align*}
From previous work \cite{guctua2020fast}, we have the following guarantee on the estimation error using the PLS method using the outcome of $\MUB$ measurements,
\begin{align*}
    \Pr\left[\tracenorm{\hat{\rho}_n - \rho} \geq \eps \right] \leq d \exp{\left\{-\frac{n \eps^2}{86 d^3}\right\}}.
\end{align*}
With the algorithm, we can bound the probability of the estimate not being optimal,
\begin{align*}
\Pr\left[\tracenorm{\hat{\rho} - \rho} \geq \eps \right] &\leq \Pr\left[\exists_j m_j > n_j \lor \tracenorm{\hat{\rho}_{n/2} - \rho} \geq \eps\right] \\
&\leq (d+1) \exp{\left\{-\frac{n}{6(d+1)}\right\}} + d \exp{\left\{-\frac{n \eps^2}{172 d^3}\right\}}.
\end{align*}
With $d \geq 16$ and $n = \frac{172d^3 \ln{200d}}{\eps^2}  = O\left(\frac{d^3 \log{d}}{e^2} \right)$, we will have $\Pr\left[\tracenorm{\hat{\rho} - \rho} \le \eps \right] \geq \frac{99}{100}$
\end{proof}
\subsection{Algorithm for $k < d$}
For $\ab<\dims$, it is helpful to think of the problem as follows: there are $\ns$ players, each of whom holds a copy of $\rho$, but can only send $\log\ab$ classical bits to a central server that collects the messages and learn about the state. 

The idea is then to simulate each $\dims$-outcome POVM using only $\log\ab$ bits for each player. 
Using results from \cite{ACT:19:IT2}, the number of players (or copies) required to simulate the original $\dims$-outcome POVM is roughly $O(\dims/\ab)$, and thus we have a $O(\dims/\ab)$ factor blow up in the sample complexity compared to $\dims$-outcome measurements.

\begin{definition} [$\eta$-Simulation]
\label{def-simulate}
We are given $n$ players each with i.i.d sample from an unknown distribution $\p \in \Delta_d$. Each player can only send $w$ bits to the server. The server can then perform a $\eta$-simulation where $\hat{X} = [d] \cup \{\perp\}$.
\begin{equation}
    \Pr[\hat{X} = x \mid \hat{X} \neq \perp] = p_x, \; \Pr[\hat{X} = \perp] \leq \eta
\end{equation}

It can be shown that there exists an algorithm that can perform a $\eta$ simulation with $O(\dims/\ab)$ players,

\end{definition}
\begin{theorem}[\cite{ACT:19:IT2}, Theorem IV.5]
\label{thm:simulation}
   For every $\eta \in (0,1)$, there exists an algorithm that $\eta$ simulates $p \in \Delta_d$ using 
   \begin{equation}
       M = 40 \left\lceil \log{\frac{1}{\eta}} \right\rceil \left\lceil \frac{d}{2^w - 1} \right\rceil
   \end{equation}
   players from the setting in \cref{def-simulate}. The algorithm only requires private randomness for each player.
\end{theorem}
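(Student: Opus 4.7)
The plan is a distributed rejection-sampling protocol whose analysis reduces to a geometric tail estimate. Concretely, I would fix a public partition of $[d]$ into $g := \lceil d/(2^w-1) \rceil$ groups $\mathcal{G}_1,\ldots,\mathcal{G}_g$, each of size at most $2^w-1$, and have each player $i$ use its private randomness to sample a uniformly random group index $J_i \in [g]$, independent of $X_i$. Player $i$ \emph{accepts} if $X_i \in \mathcal{G}_{J_i}$, in which case it transmits one of the $2^w - 1$ non-reject codewords that encodes (via a pre-agreed within-group labeling) the identity of $X_i$; otherwise it transmits the reject codeword. The server outputs the sample reported by the first accepting player, or $\perp$ if every player rejected.

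For correctness, since $J_i$ is uniform on $[g]$ and independent of $X_i \sim p$, the acceptance event $\{X_i \in \mathcal{G}_{J_i}\}$ has probability exactly $1/g$, and conditional on acceptance the pair $(J_i, X_i)$ is distributed as ``draw $J$ with $\Pr[J = j] = p(\mathcal{G}_j)$, then draw $X \sim p(\,\cdot \mid \mathcal{G}_J)$,'' which marginalizes to $X \sim p$. Hence every decoded non-$\perp$ output is exactly $p$-distributed, as required by \cref{def-simulate}.

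For the failure probability, the per-player acceptance events are independent, so
\[
\Pr[\hat{X} = \perp] \leq (1 - 1/g)^M \leq \exp(-M/g) \leq \exp\!\left(-\frac{M(2^w-1)}{2d}\right).
\]
Plugging in $M = 40 \lceil \log(1/\eta) \rceil \lceil d/(2^w-1) \rceil$ forces this well below $\eta$ for every $\eta \in (0,1)$, which gives the claimed bound with plenty of slack in the constant.

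The main obstacle is the per-player encoding itself: naively sending both the group identifier $J_i$ and the within-group index of $X_i$ would cost $\log g + \log(2^w-1) = \log d$ bits, blowing the $w$-bit budget. The resolution is to exploit the fact that a single player only ever reports elements of $\mathcal{G}_{J_i}$, so one can arrange a single publicly known labeling of each group into $\{1,\dots,2^w-1\}$ together with the player's private $J_i$ so that the server decodes the within-group slot using the one shared codeword, and the $J_i$ is reconstructible from the protocol's structure (for instance, by stratifying the $M$ players into $g$ cohorts, dedicating each cohort to a specific group, so that the server knows $J_i$ deterministically from $i$). Once this encoding issue is sorted, the rest of the argument is the Chernoff-type estimate above.
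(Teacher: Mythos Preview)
The paper does not prove this theorem; it is quoted from \cite{ACT:19:IT2} and used as a black box in \cref{thm:k-outcome-tomography}. So there is no in-paper proof to compare against.

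Your core rejection-sampling idea (uniform private $J_i$, accept iff $X_i\in\mathcal{G}_{J_i}$) is the right starting point, and your correctness and failure-probability computations for that scheme are fine. The genuine gap is exactly the one you flag and then do not actually close: conveying $J_i$ to the server within the $w$-bit budget. Your proposed resolution---make $J_i$ a deterministic function of $i$ by stratifying players into $g$ cohorts---does let the server decode, but it silently invalidates your correctness argument, which hinged on $J_i$ being uniform on $[g]$ and independent of $X_i$. With cyclic assignment $j(i)=((i-1)\bmod g)+1$ and the ``first accepting player'' rule, a telescoping computation gives, for $x\in\mathcal{G}_j$,
\[
\Pr[\hat{X}=x\mid \hat{X}\neq\perp]\;=\;\frac{p(x)\,\prod_{j'<j}\bigl(1-p(\mathcal{G}_{j'})\bigr)}{1-\prod_{j'=1}^{g}\bigl(1-p(\mathcal{G}_{j'})\bigr)},
\]
which depends on the block index $j$ and hence is not $p(x)$ unless all blocks carry equal mass. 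So stratification plus first-accept does \emph{not} $\eta$-simulate $p$ in the sense of \cref{def-simulate}, and the ``rest of the argument'' (your geometric tail bound) only controls $\Pr[\hat{X}=\perp]$, not the conditional law.

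Repairing this within the private-randomness, $w$-bit constraint requires an additional idea on the server side beyond ``take the first accept''; that is precisely the nontrivial content of the cited result, and your sketch does not supply it.
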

Therefore, for each MUB measurement $\POVM$ we assign $M=O(\dims/\ab)$ players. 
Each player applies $\POVM$ to $\rho$ and compresses the outcome to $\log\ab$ bits using the simulation algorithm in \cref{thm:simulation}. 
This process is a valid $\ab$-outcome POVM. 
The server then can use the $M=O(\dims/\ab)$ messages to simulate the outcome of $\POVM$ applied to $\rho$.
From \cref{thm_keqd}, we need $\tildeO{\dims^3/\eps^2}$ simulated samples, and thus the total number of copies required to simulate those samples is $M=O(\dims/\ab)$ times more. The detailed proof is given in \cref{thm:k-outcome-tomography}.


\begin{theorem}
\label{thm:k-outcome-tomography}
For $\ab < \dims$, Algorithm 1 with distributed simulation will give estimate $\hat{\rho}$ such that $Pr[\tracenorm{\rho-\hat{\rho}} \le \eps ]\ge 0.99$ with  $\ns = \bigO{\frac{\dims^4\log\dims}{\ab\eps^2}}$.
\end{theorem}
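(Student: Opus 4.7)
The plan is to reduce the $k < d$ case to the $k=d$ algorithm (\cref{alg:tom-keqd} and \cref{thm_keqd}) via the distributed simulation protocol of \cref{thm:simulation}. Each of the $d$-outcome MUB measurements $\mathcal{M}_j$ that \cref{alg:tom-keqd} applies to a single copy of $\rho$ will instead be simulated using $M = O(d/k)$ copies, each of which performs a suitable $k$-outcome POVM and sends $\log k$ classical bits to a central server. I would fix the simulation failure probability at a constant $\eta = 1/2$, so that $M = 40 \lceil \log 2 \rceil \lceil d/(k-1) \rceil = O(d/k)$ with no extra logarithmic factor, and each attempt returns a valid sample from $\mathcal{M}_j$'s outcome distribution with probability at least $1/2$ (and the symbol $\perp$ otherwise).

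Concretely, I would partition $n = O(d^4 \log d / (k\eps^2))$ physical copies into $d+1$ equal-sized groups, devote the $j$-th group to simulating $\mathcal{M}_j$, and then run the outer loop of \cref{alg:tom-keqd} (uniform sub-sampling of MUB indices, followed by PLS) on the valid simulated outcomes. The allocation per group is a constant factor larger than what \cref{thm_keqd} consumes from that group, which absorbs the occasional $\perp$ outputs.

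The analysis then combines three high-probability events via a union bound: (i) the multinomial counts $m_j \sim \mathrm{Bin}(N, 1/(d+1))$ satisfy $m_j \le 2N/(d+1)$ for every $j$, by the same Chernoff bound appearing in the proof of \cref{thm_keqd}; (ii) within each group of $\Theta(N/(d+1))$ simulation attempts, at least $m_j$ of them succeed, by a Chernoff bound on a sum of independent Bernoulli($\ge 1/2$) trials, with per-group failure probability $\exp(-\Omega(N/(d+1)))$; (iii) PLS fed $N = \Theta(d^3 \log d / \eps^2)$ honest MUB samples returns an $\eps$-accurate estimate, which is exactly \cref{thm_keqd}. Event (iii) transfers verbatim from the $k=d$ setting to ours because, by \cref{def-simulate}, each valid simulated outcome is distributionally identical to a real measurement of $\mathcal{M}_j$ on $\rho$.

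The main subtlety I anticipate is ensuring that the $\perp$ outcomes do not introduce unintended dependencies into the PLS input. The cleanest fix is to fix the number of simulation attempts per group in advance (independently of how many succeed), condition on the event that each group contains at least $m_j$ successes, and then feed PLS the first $m_j$ valid simulated samples per group — these are genuinely i.i.d.\ from the target MUB distribution, so the $k=d$ analysis applies without modification. The total copy count is $n = \Theta(N \cdot M) = O(d^4 \log d / (k\eps^2))$, matching the stated bound.
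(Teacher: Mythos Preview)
Your proposal is correct and follows essentially the same approach as the paper: reduce to the $k=d$ case via the distributed-simulation protocol of \cref{thm:simulation}, apply Chernoff bounds to guarantee that each MUB group produces enough successful simulations to cover its multinomial demand $m_j$, and then invoke the PLS guarantee from \cref{thm_keqd}. The only differences are cosmetic parameter choices (you take $\eta = 1/2$ where the paper takes $\eta = 0.01$) and that you articulate the independence argument for the first $m_j$ successes more explicitly than the paper does.
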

\begin{proof}
   The proof will follow the same steps as \cref{thm_keqd}, but also considering $n_j \sim Bin(1-\eta,n_0/M)$ taking the role of $n_0$. Since $n_j$ and $m_j$ are both Binomial random variables, it is enough to say that $m_j$ and $n_j$ are on the opposite sides of a mean threshold with exponentially decreasing probability. Denote $\hat{n}_0 = \frac{n_0}{M}$ and $\hat{n} = \frac{n}{M}$,
   \begin{align*}
       \Pr\left[m_j \leq n_j\right] &\geq \Pr\left[m_j \leq \frac{3}{4} \hat{n}_0 \land n_j \geq \frac{3}{4} \hat{n}_0\right] \\
       \Pr\left[m_j > n_j\right] &< \Pr\left[m_j > \frac{3}{4} \hat{n}_0 \lor n_j < \frac{3}{4} \hat{n}_0\right].
   \end{align*}
   We will now bound each of the above union events with exponentially decreasing probability. We will apply \cref{mult_chernoff} on $m_j \sim Bin(\hat{n}/2, \frac{1}{d+1}), \expectDistrOf{}{m_j} = \hat{n}_0/2$ with $\alpha = 1/2$,
   \begin{align*}
       \Pr\left[m_j > \frac{3}{4} \hat{n}_0\right] \leq \exp{\left\{- \frac{\hat{n}}{10(d+1)}\right\}}.
   \end{align*}
   Now, we will apply \cref{mult_chernoff} once more for $n_j \sim Bin(1-\eta,\hat{n}_0), \expectDistrOf{}{n_j} = (1-\eta) \hat{n}_0$ and $\alpha = \frac{1/4 + \eta}{\eta} \geq \frac{1}{4}$,
   \begin{align*}
          \Pr\left[n_j > \frac{3}{4} \hat{n}_0\right] \leq \exp{\left\{- \frac{\hat{n}}{32(d+1)}\right\}}.
   \end{align*}
   Thus, 
   \begin{align*}
       \Pr\left[m_j > n_j \right] &\leq \Pr\left[m_j > \frac{3}{4} \hat{n}_0 \right] + \Pr\left[n_j < \frac{3}{4} \hat{n}_0\right] \\
       &\leq 2 \exp{\left\{- \frac{\hat{n}}{32(d+1)}\right\}}.
   \end{align*}
   By union bound,
   \begin{align*}
       \Pr\left[\exists_j m_j > n_j \right] \leq (d+1) \exp{\left\{- \frac{\hat{n}}{32(d+1)}\right\}}.
   \end{align*}
   Now we will repeat the argument from \cref{sub-keqd} for plugging in the samples into the PLS estimator,
   \begin{align*}
       \Pr\left[\tracenorm{\hat{\rho} - \rho} > \eps \right] &\leq \Pr\left[\exists_j m_j > n_j \cup \tracenorm{\hat{\rho}_{\hat{n}/2} - \rho} > \eps\right] \\
       &\leq (d+1) \exp{\left\{-\frac{\hat{n}}{32(d+1)}\right\}} + d \exp{\left\{-\frac{\hat{n} \eps^2}{172 d^3}\right\}}.
   \end{align*}
With $d \geq 16$ and $\hat{n} = \frac{172d^3 \ln{200d}}{\eps^2}$, we will have $\Pr\left[\tracenorm{\hat{\rho} - \rho} \le \eps \right] \geq \frac{99}{100}$. With $w = \log k$ and $\eta = 0.01$, we have that $\hat{n} = \Theta(\frac{k}{d}) \cdot n$, so $n = O(\frac{d^4 \log{d}}{k \eps^2} )$.
\end{proof}
Thus, the upper bound can be compactly written as $O\left(\frac{\dims^4 \log{d}}{\eps^2\min\{\ab, \dims\}}\right)$, combining the $k=d$ and $k < d$ cases. With \cref{cor:finite-lower}, we have proven \cref{thm:nearly-tight-finite-out}.

\begin{remark}
    We note that running the distributed simulation with $\log\ab$ bits requires first obtaining the $\dims$ outcomes for each qubit. Thus, the algorithm is more relevant in the distributed setting as described in this section. Nevertheless, the compression step for each copy defines a valid $\ab$-outcome measurement and thus proves that our lower bound in~\cref{cor:finite-lower} is tight.
\end{remark}

\section*{Acknowledgements}

JA and YL were partially supported by NSF award 1846300 (CAREER), NSF CCF-1815893. AD was supported by a Cornell University Graduate Fellowship. YL was also supported by a Rice University Chairman postdoctoral fellowship. NY was supported by DARPA  SciFy Award 102828.

\bibliography{refs}
\bibliographystyle{alpha}

\end{document}